\title{Best of Both Worlds: Agents with Entitlements\thanks{Institute for Computer Science, Goethe University Frankfurt, Robert-Mayer-Strasse 11-15, 60325 Frankfurt am Main, Germany. \texttt{$\{$mhoefer,schmalhofer,varricchio$\}$@em.uni-frankfurt.de}. Martin Hoefer and Giovanna Varricchio were supported by DFG grant Ho 3831/5-1.}}
\author{Martin Hoefer \and Marco Schmalhofer \and Giovanna Varricchio}
\date{}
\newcommand{\agents}{\mathcal{N}}
\newcommand{\goods}{\mathcal{G}}
\newcommand{\allocation}{\mathcal{A}}
\newcommand{\expectation}[1]{\mathbb{E}\left[#1\right]}
\newcommand{\set}[1]{\{#1\}}
\newcommand{\ceil}[1]{\lceil#1\rceil}
\newcommand{\floor}[1]{\lfloor#1\rfloor}
\newcommand{\modulus}[1]{\vert #1 \rvert }
\newcommand{\lottery}{\mathcal{L}}
\newcommand\scalemath[2]{\scalebox{#1}{\mbox{\ensuremath{\displaystyle #2}}}}
\newcommand{\UG}{\mathcal{H}^{\scalemath{0.6}{\mathsf{UG}}}}
\newcommand{\instance}{\mathcal{I}}
\newcommand{\RR}{\ensuremath{\mathbb{R}}}
\newcommand{\WEF}{\mathsf{WEF}}
\newcommand{\WWEF}{\mathsf{WWEF}}
\newcommand{\EF}{\mathsf{EF}}
\newcommand{\WPROP}{\mathsf{WPROP}}
\newcommand{\PROP}{\mathsf{PROP}}
\newcommand{\WGF}{\mathsf{WGF}}
\newcommand{\SD}{\mathsf{SD}}
\newcommand{\WSD}{\mathsf{WSD}}
\newcommand{\dominates}{\succeq^{\scalemath{0.6}{\SD}}}
\newcommand{\weakdominates}{\succ^{\scalemath{0.6}{\SD}}}
\newcommand{\DSE}{\textsc{DSE}}
\newcommand{\XDSE}{X^{\scalemath{0.6}{\DSE}}}
\newcommand{\Eaten}{\mathsf{Eaten}}
\newtheorem{proposition}{Proposition}
\newtheorem{theorem}{Theorem}
\newtheorem{lemma}{Lemma}
\newtheorem{corollary}{Corollary}
\theoremstyle{definition}
\newtheorem{example}{Example}
\newtheorem{definition}{Definition}
\newcommand{\exampleend}{\hfill$\small\blacksquare $}
\DeclareMathOperator*{\argmax}{arg\,max}
\declaretheoremstyle[
spaceabove=\topsep, spacebelow=\topsep,
headfont=\normalfont\bfseries,
notefont=\bfseries, notebraces={}{},
bodyfont=\normalfont\itshape,
postheadspace=0.5em,
name={\ignorespaces},
numbered=no,
headpunct=.]
{mystyle}
\declaretheorem[style=mystyle]{customtheorem}
\begin{document}

\maketitle

\begin{abstract}
Fair division of indivisible goods is a central challenge in artificial intelligence. For many prominent fairness criteria including {\em envy-freeness} ($\EF$) or {\em proportionality} ($\PROP$), no allocations satisfying these criteria might exist. Two popular remedies to this problem are randomization or relaxation of fairness concepts. A timely research direction is to combine the advantages of both, commonly referred to as \emph{Best of Both Worlds} (BoBW). 

We consider fair division \emph{with entitlements}, which allows to adjust notions of fairness to heterogeneous priorities among agents. This is an important generalization to standard fair division models and is not well-understood in terms of BoBW results. Our main result is a lottery for additive valuations and different entitlements that is ex-ante \emph{weighted envy-free} ($\WEF$), as well as ex-post \emph{weighted proportional up to one good} ($\WPROP1$) and \emph{weighted transfer envy-free up to one good} ($\WEF(1,1)$). It can be computed in strongly polynomial time. We show that this result is tight -- ex-ante $\WEF$ is incompatible with any stronger ex-post $\WEF$ relaxation.

In addition, we extend BoBW results on group fairness to entitlements and explore generalizations of our results to instances with more expressive valuation functions.
\end{abstract}


\section{Introduction}
Fair division of a set of indivisible goods is a prominent challenge at the intersection of economics and computer science. It has attracted a lot of attention over the last decades due to many applications in both simple and complex real-world scenarios. Formally, we face an allocation problem with finite sets $\agents$ of $n$ agents and $\goods$ of $m$ goods. Each agent $i \in \agents$ has a valuation function $v_i : 2^{\goods} \rightarrow \mathbb{R}_{\geq 0}$. The goal is to compute a ``fair'' allocation $\allocation = (A_1,\ldots,A_n)$, i.e., a fair partition of the goods among the agents. 

What is fair can certainly be a matter of debate. For this reason, several fairness criteria have been introduced and studied. {\em Envy-freeness} ($\EF$) is probably one of the most intuitive concepts -- it postulates that once goods are allocated no agent strictly prefers goods received by any other agent, i.e., $v_i(A_i) \geq v_i(A_j)$ for all $i,j\in \agents$. $\EF$ is a comparison-based notion. In contrast, there are also threshold-based ones such as {\em proportionality} ($\PROP$): $\allocation$ is proportional if every agent receives a bundle whose value is at least her proportional share, i.e., $v_i(A_i)\ge v_i(\goods)/n$ for every $i\in \agents$.

Unfortunately, for indivisible goods, neither $\PROP$- nor $\EF$-allocations may exist. Two natural conceptual remedies to this non-existence problem are (1) randomization or (2) relaxation of fairness concepts. Towards (1), a random allocation that is $\EF$ in expectation always exists (for every set of valuation functions): Select an agent uniformly at random and give the entire set of goods $\goods$ to her. Then, however, every realization in the support is highly unfair -- there is always an agent who receives everything, while all others get nothing. Moreover, it is easy to see that such an allocation might not even be Pareto-optimal. Towards (2), a well-known relaxation of $\EF$ is {\em envy-freeness up to one good} ($\EF1$)~\cite{liptonMMS04, Budish11}: Every agent shall value her own bundle at least as much as any other agent's bundle after removing some good from the latter, i.e., for every $i,j\in \agents$ there is $g \in \goods$ such that $v_i(A_i) \geq v_i(A_j \setminus \{g\})$. Whenever the valuations of the agents are monotone, an $\EF1$ allocation always exists and can be computed in polynomial time~\cite{liptonMMS04}. However, different $\EF1$ allocations may advantage different agents. Similarly to $\EF1$, {\em proportionality up to one good} ($\PROP1$) has also been studied~\cite{conitzerFS17}. 

A timely research direction is to combine the advantages of both randomization and relaxation, commonly referred to as {\em Best of Both Worlds} (BoBW) results. An important result was obtained by both Aziz~\cite{aziz20} and Freeman et al.~\cite{freemanSV20} for additive valuations -- a lottery that is $\EF$ in expectation (ex-ante) and $\EF1$ for every allocation in the support (ex-post). Moreover, it can be implemented as a lottery over deterministic allocations with polynomial-sized support. Both papers generalize the Probabilistic Serial (PS) rule~\cite{BogomolnaiaM01} for the matching case, when there are $n$ agents and $m = n$ goods. PS is ex-ante $\EF$. By the Birkhoff-von Neumann decomposition, it can be represented as a lottery over polynomially many deterministic allocations. Furthermore, any allocation in the support assigns to each agent exactly one good. This implies ex-post $\EF1$. Both \cite{aziz20,freemanSV20} generalize the application of the Birkhoff-von Neumann decomposition to instances with arbitrarily many goods.

In our work, we consider a more general framework to allow more flexibility in the definition of fairness. Concepts like $\EF$ or $\PROP$ imply that all agents are symmetric, i.e., they are ideally treated as equals. In many scenarios, however, there is an inherent asymmetry in the agent population. Alternatively, it can be beneficial for an allocation mechanism to have the option to reward certain agents. We follow the formal framework of \emph{entitlements}~\cite{chakrabortyIS21, azizMS20} that enables increased expressiveness. Formally, each agent $i \in \agents$ now has a weight or a priority $w_i > 0$. Fairness notions like $\EF$ or $\PROP$ are then refined based on the weights (see Section~\ref{sec:preliminaries} for formal definitions). Generally, we will use a prefix ``$\textsf{W}$'' to refer to a fairness concept in the context of entitlements.

\subsection{Our Contribution}
We study lotteries that guarantee both ex-ante and ex-post guarantees for fair division with additive valuations and different entitlements. We provide a lottery that is ex-ante {\em weighted stochastic-dominance envy-free} ($\WSD$-$\EF$) and consequently ex-ante $\WEF$. Applying a bihierarchy decomposition by Budish et al.~\cite{budishCK13}, we show it is possible to achieve ex-post $\WPROP1$ and $\WEF(1,1)$. The latter implies that in every allocation $\allocation$ in the support, weighted envy from agent $i$ to $j$ can be eliminated by moving entirely one good from $A_j$ to $A_i$. All our constructions can be carried out in strongly polynomial time. Perhaps surprisingly, this result is tight -- we show that ex-ante $\WEF$ is incompatible with any stronger ex-post $\WEF$ notion. In particular, a direct extension of \cite{aziz20,freemanSV20} to a lottery with ex-ante $\WEF$ and ex-post $\WEF1$ is impossible.

Freeman et al.~\cite{freemanSV20} investigate further combinations of ex-ante and ex-post properties; namely, they provide a lottery that is ex-ante {\em group fair} as well as ex-post $\PROP1$ and $\EF_1^1$. In an $\EF_1^1$-allocation $\allocation$, we can eliminate envy from $i$ to $j$ when we remove one good from $A_j$ and add one good to $A_i$; differently from $\EF(1,1)$, the good added to $A_i$ is \emph{not required} to come from $A_j$. We prove that this result can be adapted to hold also for instances with entitlements. 

Finally, we expand the scope of BoBW 
towards more general valuations. For equal entitlements, ex-ante $\EF$ and ex-post $\EF1$ is possible in more general cases. For different entitlements, ex-ante $\WEF$ and ex-post $\WEF(1,1)$ or $\WPROP1$ are no longer compatible (even for two agents, one additive and one unit-demand). For this reason, we focus on threshold-based guarantees -- we show that it is possible to compute in polynomial time a lottery that is ex-ante $\WPROP$ and ex-post $\WPROP1$, even for XOS valuations.

\subsection{Related Work}
Fair division attracted an enormous amount of attention, and there is a large number of surveys. We refer to a rather recent one by Amanatidis et al.~\cite{amanatidisBFV22} and restrict attention to more directly related works.

In the context of BoBW, the $\mathsf{MMS}$-share is studied by Babaioff et al.~\cite{babaioffEF21}, who design a lottery simultaneously achieving ex-ante $\PROP$ and ex-post $\PROP1$ + $\frac{1}{2}$-$\mathsf{MMS}$.

An orthogonal direction is pursued in~\cite{caragiannisKK21} by introducing interim $\EF$, a trade-off between ex-ante and ex-post $\EF$. Interim $\EF$ requires that agent $i$, knowing the realization of her bundle $A_i$, obtains more value from $A_i$ than the (conditional) expected value of the bundle of any other agent $j$. Formally, $v_i(S)\geq \expectation{v_i(A_j) \vert A_i=S}$ for each $S\subseteq \goods$ such that $\mathbb{P}(A_i=S)>0$. Interim-$\EF$ implies ex-ante $\EF$, and ex-post $\EF$ implies interim $\EF$. Furthermore, any interim $\EF$ allocation is ex-post $\PROP$, which implies that interim $\EF$ allocations may not exist. In the matching case, a cleverly crafted separation oracle for a linear program and an extension of the Birkhoff-von Neumann decomposition can be combined to design a polynomial-time algorithm that either provides an interim $\EF$ allocation or determines it does not exist~\cite{caragiannisKK21}.

When agents are endowed with ordinal preferences rather than cardinal valuation functions, stochastic-dominance envy-freeness is the most prominent fairness notion for lotteries. It was first considered in~\cite{BogomolnaiaM01} and later systematically studied in \cite{AzizGMW14}.

For fair division with entitlements, the literature has focused on characterizing picking sequences guaranteeing fairness properties~\cite{chakrabortyIS21, chakrabortySKS21, chakrabortySHS22}, the problem of maximizing Nash social welfare~\cite{GargKK20,suksompongT22}, and the definition of appropriate shares~\cite{babaioffEF21share}. To the best of our knowledge,  BoBW results have not been addressed so far.


\section{Preliminaries}
\label{sec:preliminaries}

A fair division instance $\instance$ is given by a triple $(\agents, \goods, \set{v_i}_{i\in \agents})$, where $\agents$ is a set of $n$ agents and $\goods$ is a set of $m$ indivisible goods. Every agent $i\in \agents$ has a valuation function $v_i: 2^\goods \rightarrow \mathbb{R}^+$, where $v_i(A)$ represents the value, or utility, of $i$ for the bundle $A\subseteq \goods$. We assume that valuations are monotone ($v(A) \le v(B)$ for $A \subseteq B$) and normalized $(v(\emptyset) = 0$). 

\paragraph{Classes of Valuations.}
For each $i\in \agents$ and $g\in \goods$, $v_{i}(g) \ge 0$ represents the value $i$ assigns to the good $g$. A valuation $v_i$ is \emph{additive} if $v_i(A) = \sum_{g \in A} v_{i}(g)$.

A valuation function $v_i$ is \emph{multi-demand} if there exists $k\in\mathbb{N}$ such that $v_i(A)$ is given by the sum of the $k$ most valuable goods in $A$ for $i$. If $k=1$ we talk about \emph{unit-demand}. 

A valuation function $v_i$ is  \emph{cancelable} if $v_i(S \cup \{g\}) > v_i(T \cup \{g\}) \implies v_i(S) > v_i(T) $,  for all $S, T \subseteq \goods$ and $\forall g \in \goods \setminus (S \cup T)$. Cancelable valuations generalize several classes studied in the literature, e.g., additive, weakly-additive, budget-additive, product, and unit-demand.

A valuation function $v_i$ is \emph{XOS} if there is a family of additive set functions $\mathcal{F}_i$ such that $v_i(A)= \max_{f\in \mathcal{F}_i}f(A)$.  
XOS  generalize additive and submodular valuations.
\bigskip

In what follows, whenever we sort the goods in $\goods$ according to the valuation function of a specific agent, we assume ties are broken according to a fixed ordering of $\goods$. This serves to avoid technical and tedious tie-breaking issues.

\paragraph{Entitlements.} 
We study fair division with entitlements. Each agent $i\in \agents$ is endowed with an \emph{entitlement} or \emph{weight}, $w_i > 0$. For convenience, we assume w.l.o.g.\ $\sum_{i\in \agents} w_i = 1$. We say that agents have \emph{equal entitlements} if $w_i=\frac{1}{n}$, for all $i\in \agents$, and refer to this as the \emph{unweighted setting}.

We next give an example of fair division instance with entitled and additive agents.
\begin{example}[A fair division instance with entitlements]\label{example:valuations}
	We outline an instance $\instance^*$ given by $(\agents, \goods, \set{v_i}_{i\in \agents})$ and entitlements $w$. The agents are $\agents=\set{1,2,3}$, the goods $\goods=\set{g_1, g_2, g_3, g_4,g_5}$, and $w_1= \frac{1}{2}$, $w_2= \frac{1}{3}$ and $w_3= \frac{1}{6}$ are the entitlements of agent $1, 2$ and $3$, respectively. The valuation functions are additive with values of the agents for single goods shown in \Cref{table:exampleValuations}.
	
	Throughout the paper, whenever we use $\instance^*$, we mean the instance we just described.
	\begin{table}[h!]
		\centering
		\caption{  Agents' valuations in \Cref{example:valuations}}
		\label{table:exampleValuations}
		\begin{tabular}{ccccc}
			\toprule
			$g$ & $g_1$ & $g_2$ & $g_3$ & $g_4$ 
			\\ \midrule         
			$v_1(g)$ & 8 & 8 & 5 & 2
			\\
			$v_2(g)$ & 3 & 5 & 4 &1 
			\\
			$v_3(g)$ & 4 & 7 & 6 &  2
			\\ \bottomrule
		\end{tabular}
	\end{table} 
	\exampleend
\end{example}

\subsection{Weighted Fairness Notions}
An allocation $\allocation=(A_1, \dots, A_n)$ is a disjoint partition of $\goods$ among the agents, where $A_i \cap A_j = \emptyset$, for each $i \neq j$, and $\bigcup_{i\in \agents} A_i = \goods$. An allocation $\allocation$ is \emph{weighted envy-free} ($\WEF$) if $w_j\cdot v_i(A_i) \geq w_i\cdot v_i(A_j)$ for every $i, j\in \agents$. An allocation $\allocation$ is \emph{weighted proportional} ($\WPROP$) if $v_i(A_i)\geq w_i\cdot v_i(\goods)$ holds, for each $i\in \agents$.

Since goods are indivisible such allocations may not always exist, and relaxed versions have been defined. An allocation $\allocation$ is \emph{weighted proportional up to one good} ($\WPROP1$) if for each $i\in \agents$ there exists $g\in \goods \setminus A_i$ such that $v_i(A_i\cup \set{g})\geq  w_i \cdot v_i(\goods)$. Note that, for additive valuations, $\WEF \Rightarrow \WPROP$ but, differently from equal entitlements, $\WEF1 \not\Rightarrow \WPROP1$.

Concerning envy-freeness, we have already discussed $\EF$ and $\EF1$ in the introduction. We here work with a broader definition that generalizes these notions.

\begin{definition}[$\WEF(x,y)$]
For $x,y\in [0,1]$, an allocation $\allocation$ is called $\WEF(x,y)$ if for each $i, j\in \agents$ there exists $g\in A_j$ 
such that 
$w_j\cdot \left(v_i(A_i) + y\cdot v_i(g)\right)\geq w_i\cdot \left(v_i(A_j) -x\cdot v_i(g)\right)$.
\end{definition}

The definition of $\WEF(x,y)$ is meaningful mostly for additive valuations. For general valuations, the idea of $\WEF(1,1)$ can be expressed by $w_j\cdot v_i(A_i \cup \set{g})\geq w_i\cdot v_i(A_j\setminus\set{g})$, and analogously for $\WEF(0,1)$ and $\WEF(1,0)$. Conceptually, $\WEF(1,0)$ coincides with a notion of \emph{weighted envy-freeness up to one good} ($\WEF1$) while $\WEF(0,1)$ with a notion termed \emph{weak weighted envy-freeness up to one good} ($\WWEF1$). $\WEF(1,1)$ has also been called \emph{transfer envy-freeness up to one good} in the unweighted setting. Note that in $\WEF(1,1)$ the good $g$ added to $A_i$ must come from $A_j$. Assuming that $g$ may come from any other bundle leads to the following (weaker) notion.

\begin{definition}[$\WEF_1^1$]
An allocation $\allocation$ is called \emph{weighted envy-free up to one good more and less ($\WEF_1^1$)} if for each $i, j\in \agents$ there exist $g_i, g_j\in \goods$ such that $w_j\cdot v_i(A_i \cup\set{g_i})\geq w_i\cdot v_i(A_j\setminus \set{g_j})$. 
\end{definition}

We move on to fairness concepts for fractional allocations. 
A \emph{fractional allocation} $X = (x_{ig})_{i \in \agents, g\in \goods} \in [0,1]^{n \times m}$ specifies the fraction of good $g$ that agent $i$ receives. We assume fractional allocations are complete, i.e., $\sum_{i\in \agents}x_{ig} =1$ for every $g\in \goods$.

Towards extending group fairness~\cite{ConitzerF0V19} to weighted agents, consider a subset of agents $S\subseteq \agents$. We define $w_S= \sum_{i\in S} w_i$ as the weight of the set, and $\cup_{j \in S} X_j = \left(\sum_{j \in S} x_{jg}\right)_{g \in \goods}$ as the total fraction of each good $g \in \goods$ assigned to the agents of $S$.

\begin{definition}[$\WGF$]
A fractional allocation $X$ is \emph{weighted group fair ($\WGF$)} if for all non-empty subsets of agents $S, T \subseteq \agents$, there is no fractional allocation $X'$ of $\cup_{j\in T} X_j$ to the agents in $S$ such that $w_S \cdot v_i(X'_i) \geq w_T \cdot v_i(X_i)$, for all $i\in S$ and at least one inequality is strict.
\end{definition}

Similarly to the unweighted setting, weighted group fairness implies other (weighted) envy and efficiency notions, for example, $\WEF$ (if $|S|=|T|=1$), $\WPROP$ (if $|S|=1, T=\agents$), and Pareto-optimality (if $S=T=\agents$).

We finally focus on stochastic dominance, a standard fairness notion for random allocations. For convenience, we here define it using fractional allocations. Given any $i\in \agents$, let us denote by $X_i$ and $X_i'$ the fractional bundles of agent $i$ in the allocations $X$ and $X'$, respectively. Agent $i$ $\SD$ prefers $X_i$ to $X'_i$, written $X_i\dominates_i X_i'$, if for any $g^*\in\goods$
\[
	\sum_{g\in\set{g \vert  v_i(g) \geq v_i(g^*) }}  x_{ig} \geq \sum_{g\in\set{g \vert  v_i(g) \geq v_i(g^*) }} x'_{ig} \enspace ,
\]
where $x_{ig}$ and $x'_{ig}$ represents the fraction for $i$ of receiving $g$ in the two random allocations.

We say $X_i\weakdominates_i X_i'$, if $X_i\dominates_i X'_i$ and not $X'_i \dominates_i X_i$. Notice that $\set{g \vert  v_i(g) \geq v_i(g^*) } $ is the set of goods that $i$ likes at least as much as $g^*$. Although we defined it by means of $v_i$, this set only depends on the relative ordering of the goods and not on the valuation $v_i$.  

\begin{definition}[$\SD$-$\EF$ and $\WSD$-$\EF$]
A random allocation $X$ is \emph{$\SD$-envy-free ($\SD$-$\EF$)} if for all $i,j\in\agents$, $X_i\dominates_i X_j$. Similarly, we say $X$ is \emph{$\WSD$-envy-free ($\WSD$-$\EF$)} if for all $i,j\in\agents$, $w_j\cdot X_i\dominates_i w_i\cdot X_j$.
\end{definition}

\subsection{Deterministic Algorithms and Picking Sequences}

For additive valuations, a straightforward round-robin algorithm yields an $\EF1$ allocation. Clearly, when agents have different entitlements, the round-robin algorithm might no longer provide a $\WEF1$ allocation. Different entitlements impose different priorities among agents, which has resulted in the consideration of picking sequences.

A {\em picking sequence} for $n$ agents and $m$ goods is a sequence $\pi=(i_1, \dots, i_m)$, where $i_h\in \agents$, for $h=1,\dots, m$. An allocation $\allocation$ is the result of the picking sequence $\pi$ if it is the output of the following procedure: Initially every bundle is empty; then, at time step $h$, $i_h$ inserts in her bundle the most preferred good among the available ones. Once a good is selected, it is removed from the set of the available goods.
 
In \cite{chakrabortySKS21,chakrabortySHS22}, the authors characterize picking sequences for several fairness criteria, including $\WEF1$, $\WWEF1$ as well as $\WEF(x,1-x)$. 
For our purposes, we will rely on the following characterization for $\WEF(x,y)$ (in the context of additive valuations).

\begin{proposition}\label{thm:pickingWEFxy}
Let $t_i, t_j$ be the number of picks of agents $i$, $j$, respectively, in a prefix of $\pi$. A picking sequence $\pi$ is $\WEF(x,y)$ if and only if for every prefix of $\pi$ and every pair of agents $i,j$, we have $\frac{t_i + y}{w_i}\geq \frac{t_j-x}{w_j}$.
\end{proposition}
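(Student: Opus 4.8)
The plan is to prove \Cref{thm:pickingWEFxy} by translating the $\WEF(x,y)$ condition directly into a combinatorial statement about picks in prefixes of $\pi$. The key observation is that a picking sequence enforces a strong structural invariant: after any prefix during which agent $i$ has picked $t_i$ times, agent $i$ holds the $t_i$ goods she liked best among those available at her pick times, and -- crucially -- every good currently in another agent's bundle was available to $i$ at the moment it was taken but was passed over in favor of something $i$ valued at least as much. I would make this precise as a \emph{domination lemma}: if we sort the goods in $A_i \cup A_j$ by $v_i$, then $i$'s $r$-th most valuable good among them is worth at least as much (to $i$) as $j$'s $r$-th most valuable good among them, for every rank $r$. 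This interleaving property is what lets us compare the bundles without knowing the actual cardinal values.

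First I would set up notation for a fixed pair $i,j$ and argue the domination lemma by induction on the prefix length, tracking the order in which goods from $A_i \cup A_j$ are picked. Each time $i$ picks, she takes the best remaining available good, which weakly exceeds anything $j$ could later take; each time $j$ picks, $j$ takes a good that $i$ had the chance to take earlier (or simultaneously) but declined. Careful bookkeeping with the fixed tie-breaking order ensures all inequalities are weak and consistent. From the domination property, the $\WEF(x,y)$ inequality $w_j(v_i(A_i)+y\,v_i(g)) \ge w_i(v_i(A_j)-x\,v_i(g))$ should follow by choosing $g$ to be a suitable good of $A_j$ and summing the rank-wise inequalities; the counting condition $\frac{t_i+y}{w_i}\ge\frac{t_j-x}{w_j}$ is exactly the bookkeeping that guarantees enough of $i$'s goods dominate enough of $j$'s after the $x$ and $y$ adjustments.

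For the converse direction, I would argue the contrapositive: if some prefix violates $\frac{t_i+y}{w_i}\ge\frac{t_j-x}{w_j}$ for a pair $i,j$, I would construct an explicit additive instance (with values tailored to that prefix) on which the resulting allocation fails $\WEF(x,y)$. The natural construction gives all goods picked up to that prefix a common high value and the rest value zero (or a geometric profile), so that the violated counting inequality translates into a violated valuation inequality for that agent pair. Since \Cref{thm:pickingWEFxy} asserts an ``if and only if'' over \emph{all} instances consistent with the preference orders, a single bad instance suffices to defeat the condition.

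The main obstacle I expect is handling the $x$ and $y$ fractional-good adjustments cleanly within the rank-wise domination argument: the terms $y\,v_i(g)$ and $x\,v_i(g)$ shift the comparison by a fractional amount tied to one specific good $g\in A_j$, so I will need to choose $g$ to be $j$'s most valuable good (to $i$) and verify that the ``$+y$'' and ``$-x$'' offsets in the counting condition align precisely with adding $y$ copies of $g$ to $i$'s side and subtracting $x$ copies from $j$'s side. Getting the edge cases right when $t_j=0$ (so $A_j=\emptyset$ and the envy condition is vacuous) and reconciling the continuous weights $w_i,w_j$ with the integer pick counts will require care, but the heart of the argument remains the discrete domination invariant established in the first step. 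I would therefore invest most of the effort in stating and proving that invariant robustly, after which both directions of the equivalence follow by routine summation and a direct construction.
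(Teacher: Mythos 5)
Your second direction is fine: arguing the contrapositive with a $0/1$-valued instance supported on the goods picked in the offending prefix is exactly how the paper proves that implication (there stated as: $\WEF(x,y)$ for all instances forces the counting condition). The gap is in the other, harder direction. Your proposed domination lemma --- after sorting $A_i\cup A_j$ by $v_i$, agent $i$'s $r$-th best good weakly dominates agent $j$'s $r$-th best good for every rank $r$ --- is \emph{false} for unequal entitlements, and so is the supporting claim that every good in $A_j$ was ``passed over'' by $i$: when $w_j>w_i$, the counting condition permits $j$ to pick many times before $i$'s first pick, and those goods were never declined by $i$. Concretely, take $x=y=1$, $w_i=\frac{1}{10}$, $w_j=\frac{9}{10}$, eleven goods with $v_i(g_1)=\dots=v_i(g_{10})=100$, $v_i(g_{11})=1$ (and $v_j=v_i$), and the sequence in which $j$ picks ten times and then $i$ picks once. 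Every prefix satisfies the counting condition (the prefix $t_j=10$, $t_i=0$ holds with equality: $\frac{0+1}{1/10}=10\geq \frac{10-1}{9/10}=10$), yet $i$ ends up with only $g_{11}$, so rank-wise domination fails at rank $1$ (indeed at every rank). Note that $\WEF(1,1)$ does still hold here, but only barely and only because of the weights: $w_j\cdot(1+100)=90.9\geq w_i\cdot(1000-100)=90$. Since the inequality to be certified is inherently weighted and nearly tight in such instances, no unweighted rank-by-rank comparison can produce it; your invariant would only suffice for equal entitlements, where it is the classical round-robin/$\EF1$ argument.

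What the paper does instead (following Chakraborty et al.) is partition the picks into phases delimited by $j$'s successive picks and use only the \emph{one-sided} domination that is actually true: the $\tau_\ell$ goods $i$ picks in phase $\ell$ each dominate every good $j$ picks in phase $\ell$ or later, i.e.\ $\alpha_\ell\geq \tau_\ell\cdot\max_{\ell\le r\le t_j}\beta_r$ (inequality \eqref{eq:WWEF-utils}). The comparison against $j$'s \emph{earlier} picks --- precisely where your lemma breaks --- is never made good-by-good; its deficit is absorbed by invoking the weighted counting condition \eqref{eq:WWEF-picks} at every intermediate prefix inside an induction on phases that tracks weighted partial sums with the ratio $\rho=w_i/w_j$ and the offsets $x,y$ (inequality \eqref{eq:claim}). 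If you want to salvage your plan, you must build the weights and the per-prefix counting condition into the invariant itself, at which point you have essentially reconstructed the paper's induction rather than a rank-domination argument.
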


Chakraborty et al.~\cite{chakrabortySHS22} prove this proposition using the assumption $x+y=1$, since $\WEF(x,y)$ allocations might not exist for $x+y<1$. The proof can be easily extended, see the Appendix, beyond the assumption to show the statement for all $x,y \in [0,1]$.

Note further that round-robin is not the only picking sequence achieving $\EF1$ for equal entitlements. Any picking sequence that is recursively balanced (RB), i.e.\ $\modulus{t_i-t_j}\leq 1$ in any prefix of $\pi$, results in an $\EF1$ allocation \cite{aziz20}.
 

\subsection{Random Allocations}

A random allocation is a probability distribution $\lottery$ over deterministic allocations. We mostly focus on additive valuations, so we conveniently use a representation as matrix $X$ of marginal assignment probabilities for each good to each agent (i.e., a complete fractional allocation as defined above). We denote by $X^\lottery$ the fractional allocation corresponding to a lottery $\lottery$. Notice that different lotteries might produce the very same fractional allocation.


Throughout the paper, we denote by $X$ (resp.\ $Y$) fractional (resp.\ integral) allocations in matrix form. Further, $X_i$ (resp.\ $Y_i$) denotes a fractional (resp.\ integral) bundle of $i$.

\paragraph{Decomposing Fractional Matrices.}
A decomposition of a fractional matrix $X$ is a convex combination of deterministic allocations, i.e.\ $X= \lambda_1 Y^1 + \dots +\lambda_k Y^k$, where $\sum_{h=1}^k\lambda_h=1$ and $y^h_{ig}\in\set{0,1}$ for each $i\in\agents$, $g\in\goods$ and $h\in[k] = \set{1, \dots, k}$.

A {\em constraint structure} $\mathcal{H}$ consists of a collection of subsets $S \in\agents\times\goods$. Every $S\in \mathcal{H}$ comes with a lower and upper quota denoted by $\underline{q}_S$ and $\overline{q}_S$, respectively. Quotas are integer numbers stored in $\mathbf{q}$.

A $n\times m$ matrix $Y$ is feasible under $\mathbf{q}$ if for each $S\in \mathcal{H}$
\begin{equation*}
\underline{q}_S\leq  \sum_{(i,g)\in S} y_{ig} \leq\overline{q}_S \enspace .
\end{equation*}

A constraint structure $\mathcal{H}$ is a \emph{hierarchy} if, for every $S,S' \in \mathcal{H}$, either $S \cap S'= \emptyset$ or one is contained in the other. $\mathcal{H}$ is a \emph{bihierarchy} if it can be partitioned into $\mathcal{H} = \mathcal{H}_1 \cup \mathcal{H}_2$, such that $\mathcal{H}_1 \cap \mathcal{H}_2 = \emptyset$ and both $\mathcal{H}_1$ and $\mathcal{H}_2$ are hierarchies.

Budish et al.~\cite{budishCK13} generalize the well-known decomposition theorem by Birkhoff and von Neumann:

\begin{theorem}
	\label{thm:budishDecomp}
	Given any fractional allocation $X$,  a bihierarchy $\mathcal{H}$ and corresponding quotas $\mathbf{q}$, if $X$ is feasible under $\mathbf{q}$,  then, there exists a polynomial decomposition into integral matrices. Every matrix in the decomposition is feasible under $\mathbf{q}$. Further, the decomposition can be obtained in strongly polynomial time.
\end{theorem}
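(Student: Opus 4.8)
The plan is to view the feasible integral matrices as the vertices of a polytope and to show that this polytope is integral, so that $X$ — being a feasible point — is automatically a convex combination of feasible integral matrices. Concretely, I would consider the polytope
\[
\mathcal{P} = \left\{ Y \in [0,1]^{n \times m} \;:\; \underline{q}_S \le \sum_{(i,g)\in S} y_{ig} \le \overline{q}_S \text{ for all } S \in \mathcal{H} \right\},
\]
which is nonempty because $X \in \mathcal{P}$ by feasibility. I would write its constraints as $AY \le b$, stacking the two-sided quota constraints together with the box constraints $0 \le y_{ig} \le 1$; every entry of $b$ is integral since the quotas are integers. The heart of the argument is to prove that the constraint matrix is totally unimodular, after which the Hoffman–Kruskal theorem guarantees that every vertex of $\mathcal{P}$ is integral, and each such vertex is a feasible integral matrix.

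To establish total unimodularity I would use the Ghouila–Houri characterization: it suffices to show that every sub-collection $R$ of the rows indexed by $\mathcal{H}$ admits a signing $\epsilon\colon R \to \{+1,-1\}$ with $\sum_{S \in R,\, S \ni (i,g)} \epsilon_S \in \{-1,0,1\}$ for every cell $(i,g)$ (the box/identity rows do not affect total unimodularity). Here the bihierarchy structure is essential. I would split $R$ into $R_1 = R \cap \mathcal{H}_1$ and $R_2 = R \cap \mathcal{H}_2$, each a laminar family, and sign each family greedily along its inclusion forest: every maximal set gets $+1$, and every other set receives the sign opposite to that of its parent. Because the sets of $R_1$ containing a fixed cell form a chain that is a parent–child path in the forest, their signs alternate $+,-,+,\dots$, so their contribution to that cell lies in $\{0,1\}$; the same holds for $R_2$. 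Flipping the sign of the entire second family then yields a contribution in $\{-1,0\}$ from $R_2$, and the two contributions add to a value in $\{-1,0,1\}$, exactly as required.

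With integrality of $\mathcal{P}$ in hand, the existence of the decomposition follows by Carathéodory's theorem: $X$ lies in a face of dimension at most $nm$, so it is a convex combination of at most $nm+1$ integral vertices, each feasible under $\mathbf{q}$, which already bounds the support polynomially. To obtain the decomposition in \emph{strongly polynomial} time I would make the vertex extraction constructive: starting from $X$, repeatedly round toward a nearby integral vertex by cancelling along a cycle in the fractional support (a combinatorial step analogous to the augmenting arguments behind integral network flows), peel off that vertex with the largest feasible weight, and recurse on the residual matrix, which has strictly more integral entries. Since each iteration fixes at least one additional entry, the process terminates after polynomially many strongly polynomial steps.

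The step I expect to be the main obstacle is the total unimodularity proof, and specifically getting the Ghouila–Houri two-coloring right: a naive depth-parity signing fails, and one must sign by position within the inclusion forest of $R$ (parent-opposite) and flip exactly one of the two families, so that the two laminar contributions never reinforce to $\pm 2$. The remaining difficulty is purely algorithmic — turning the existence statement from Carathéodory into a strongly polynomial routine via combinatorial cycle-cancelling rather than repeated linear programming.
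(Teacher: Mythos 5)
This statement is not proved in the paper at all: \Cref{thm:budishDecomp} is imported verbatim from Budish et al.~\cite{budishCK13}, so there is no in-paper argument to compare against, and the comparison has to be with the cited source. Your polyhedral route --- total unimodularity of the bihierarchy incidence matrix via Ghouila--Houri, integrality of the polytope via Hoffman--Kruskal, and Carath\'eodory for the polynomial-size support --- is sound, and your signing argument is the right one: the sets of a laminar family containing a fixed cell $(i,g)$ form a chain that is a root-to-node path in the inclusion forest, so parent-opposite signing makes each hierarchy contribute a value in $\{0,1\}$ for that cell, and globally flipping one of the two families turns its contribution into $\{-1,0\}$, giving a column sum in $\{-1,0,1\}$ for every subset of rows. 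Appending the negated rows (for the two-sided quotas) and the box constraints preserves total unimodularity, the quotas are integers, so every vertex of the polytope is a feasible integral matrix, and Carath\'eodory bounds the support by $nm+1$. This cleanly establishes existence of the decomposition and its polynomial size.

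The gap is the strongly polynomial computation, which is part of the statement and is asserted rather than proved in your last step. The ``cycle-cancelling in the fractional support'' picture is Birkhoff-style and does not transfer literally: the constraints here are nested set constraints, not row/column sums, so a perturbation keeping all tight quotas tight does not correspond to a cycle in the support of $X$ but to a cycle in an auxiliary structure built from the two inclusion forests. This is precisely how Budish et al.\ make the result constructive: they encode the bihierarchy as a network-flow problem (each hierarchy becomes a tree, each cell $(i,g)$ an arc joining the two trees, each quota an arc capacity), after which integrality and the decomposition follow from standard, strongly polynomial flow decomposition. Alternatively, your peeling scheme can be repaired without flows: at each step, compute a vertex of the minimal face of the polytope containing the current fractional point (an LP over a totally unimodular system, solvable in strongly polynomial time, e.g.\ by Tardos' algorithm), peel it off with the maximal feasible coefficient, and recurse on the residual point; each iteration makes at least one further constraint tight, so there are at most $nm+1$ iterations. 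Either repair completes your argument; without one of them, the algorithmic clause of the theorem remains open.
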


In the rest of this paper, given a fractional allocation $X$ and a bihierarchy $\mathcal{H}$, we define the quotas in $\mathbf{q}$ as follows: for every $S\in\mathcal{H}$ we set $\underline{q}_S= \floor{x_S}$ and $\overline{q}_S= \ceil{x_S}$, where $x_S= \sum_{(i,g)\in S}x_{ig}$. The decomposition obtained with these quotas and bihierarchy  $\mathcal{H}$ will be called the $\mathcal{H}$-decomposition.


\paragraph{Utility Guarantee Bihierarchy.} 
An extremely useful bihierarchy is defined as follows. We set $\mathcal{H}_1= \set{C_g\ \vert \ g\in \goods}$, where $C_g=\set{(i,g) \ \vert \ i\in \agents}$ represent columns. Roughly speaking, the hierarchy $\mathcal{H}_1$ ensures that, in any allocation of the decomposition, every good is integrally assigned (and therefore the allocation is complete).

For agent $i\in \agents$, we consider the goods in non-increasing order of $i$'s valuation, i.e., $v_i(g_1) \geq \ldots \geq v_i(g_m)$. Recall that ties are broken according to a predefined ordering of $\goods$. We set $\mathcal{S}_i = \set{\set{g_1}, \set{g_1, g_2}, \dots, \set{g_1, \dots, g_m}}$. In other words, for every $h\in [m]$, $\mathcal{S}_i$ contains a set of the $h$ most preferred goods of $i$. Define $\mathcal{H}_2= \set{(i, S)\ \vert \ i\in \agents, S\in \mathcal{S}_i} \cup \set{(i,g)\vert i\in \agents, g\in \goods}$. The second set of constraints implies that if  $x_{ig}=0$ (resp. $x_{ig}= 1$) then $y_{ig}=0$ (resp. $y_{ig}=1$), for any $Y$ in the decomposition.
Note that (for convenience later on) we slightly abuse notation for $\mathcal{H}_2$ as it is not a set of (row, col)-pairs.

Finally, the \emph{utility guarantee bihierarchy} is given by $\UG= \mathcal{H}_1 \cup \mathcal{H}_2$. Clearly, both $\mathcal{H}_1$ and $\mathcal{H}_2$ are hierarchies.

This bihierarchy was fundamental in~\cite{budishCK13}. We here state it in a slightly stronger version (see~\cite{freemanSV20} for the proof).

\begin{corollary}[Utility Guarantee up to one Good More or Less] \label{cor:UGupToOneGood}
Suppose we are given a fractional allocation $X$, $v_i$ is an additive valuation function, and $v_i(X_i) = \expectation{v_i(X_i)}$. Then for any matrix $Y$ in the $\UG$-decomposition of $X$ the following hold:
\begin{enumerate}
\item if $v_i(Y_i ) < v_i(X_i)$, then $\exists\ g\not\in Y_i$ with $x_{ig}>0$ such that $v_i(Y_i) + v_i(g) > v_i(X_i)$;
\item if $v_i(Y_i ) > v_i(X_i)$, then $\exists\ g\in Y_i$ with $x_{ig}<1$ such that $v_i(Y_i) - v_i(g) < v_i(X_i)$.
\end{enumerate}
\end{corollary}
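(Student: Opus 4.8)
The plan is to analyze the structure of the $\UG$-decomposition and exploit the hierarchy $\mathcal{H}_2$, which constrains, for every agent $i$ and every threshold $h$, how many of the $h$ most-preferred goods of $i$ can appear in an integral bundle $Y_i$. Since the quotas are set to $\underline{q}_S = \floor{x_S}$ and $\overline{q}_S = \ceil{x_S}$, any integral allocation $Y$ in the decomposition satisfies, for each prefix set $S_h = \set{g_1,\dots,g_h} \in \mathcal{S}_i$, the bound $\floor{x_{(i,S_h)}} \le |Y_i \cap S_h| \le \ceil{x_{(i,S_h)}}$, where $x_{(i,S_h)} = \sum_{g \in S_h} x_{ig}$ is the fractional mass agent $i$ receives among her top $h$ goods. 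This is the sole structural fact I would need, and everything else is a telescoping/threshold argument on the additive value.

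First I would prove item~1. Suppose $v_i(Y_i) < v_i(X_i)$. The key observation is that because $v_i$ is additive and the goods are ordered $v_i(g_1)\ge\dots\ge v_i(g_m)$, the value $v_i(Y_i)$ is a sum of per-good values over $Y_i$, and the $\mathcal{H}_2$ constraints mean $Y_i$ holds "roughly" the same number of top goods at every threshold as $X_i$ does fractionally. Concretely, I would consider the smallest good $g = g_k \notin Y_i$ that agent $i$ fractionally receives (i.e.\ $x_{ig}>0$) but does not hold integrally; my claim is that adding this one good overshoots $v_i(X_i)$. To see this I would argue by contradiction: if $v_i(Y_i)+v_i(g_k) \le v_i(X_i)$ for every such missing $g_k$, then at every threshold $h$ the integral bundle is "behind" the fractional one by enough cumulative value that the prefix quota $|Y_i \cap S_h| \ge \floor{x_{(i,S_h)}}$ is violated for some $h$. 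The cleanest route is to pick the threshold $h^*$ where the deficit $v_i(X_i) - v_i(Y_i)$ is "realized" and show that the lower quota at $h^*$ forces at least one fractionally-present top good into $Y_i$ whose addition crosses $v_i(X_i)$; the fact that $v_i(X_i) = \expectation{v_i(X_i)}$ guarantees we are comparing against the genuine fractional value rather than a mismatched quantity. Item~2 is entirely symmetric, using the upper quota $\ceil{x_{(i,S_h)}}$ and selecting the largest-value good $g \in Y_i$ with $x_{ig}<1$; removing it undershoots $v_i(X_i)$.

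I expect the main obstacle to be the threshold bookkeeping in the strict inequalities: the corollary claims a strict overshoot ($v_i(Y_i)+v_i(g) > v_i(X_i)$) and a strict undershoot, and making these strict rather than merely weak requires care about how the floor/ceiling quotas interact with ties in the ordering. Because ties are broken by a fixed global ordering of $\goods$, the sets $\mathcal{S}_i$ are well-defined nested chains and the prefix masses $x_{(i,S_h)}$ are monotone in $h$, which I would lean on to pin down a single threshold at which the integral count differs from the fractional mass by exactly one unit of floor/ceiling; that unit gap is precisely the one good $g$ witnessing the statement. Since the paper defers to \cite{freemanSV20} for the proof, I would present this as a concise argument identifying the witnessing good via the nested prefix constraints and verifying the strict bound follows from $v_i(Y_i) \neq v_i(X_i)$ together with additivity, rather than reproving the full decomposition machinery.
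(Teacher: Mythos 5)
Your overall plan---prefix quotas from $\mathcal{H}_2$ plus a telescoping argument over value thresholds---is the right skeleton, and it is essentially the approach of the proof in Freeman et al.\ that the paper cites. But two things in your sketch are genuine gaps. First, the choice of the witnessing good: it must be the \emph{most preferred} good outside $Y_i$ with $x_{ig}>0$, and your wording does not pin this down (in item~1 you take ``the smallest good $g_k\notin Y_i$'', while in the ``entirely symmetric'' item~2 you take the \emph{largest}-value good). Under the reading ``least valuable'', the claim is simply false: take three goods valued $10,1,1$ by agent $i$, with $x_{ig}=\tfrac12$ for all three, so the prefix masses are $0.5$, $1$, $1.5$ with quotas $[0,1]$, $[1,1]$, $[1,2]$. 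The bundle $Y_i=\set{g_2}$ satisfies all of these, $v_i(X_i)=6$, $v_i(Y_i)=1$, and the least valuable missing good with positive fraction is $g_3$, yet $v_i(Y_i)+v_i(g_3)=2<6$; only the most preferred missing good $g_1$ witnesses the statement.

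Second, and more importantly, the mechanism you offer in place of the actual derivation does not work: there is in general no single threshold $h^*$ ``where the deficit is realized''. Writing $X_h=\sum_{\ell\le h}x_{ig_\ell}$, $Y_h=\modulus{Y_i\cap S_h}$ and $\delta_h=v_i(g_h)-v_i(g_{h+1})\ge 0$ (with $v_i(g_{m+1}):=0$), additivity gives the aggregate identity $v_i(X_i)-v_i(Y_i)=\sum_{h=1}^m \delta_h\,(X_h-Y_h)$, and the proof must split this sum at the witness index $k$. For $h\ge k$ the lower quota indeed gives $X_h-Y_h<1$, which is the fact you isolate. But for $h<k$ one needs $Y_h\ge X_h$, and this does \emph{not} follow from any quota: it follows from the minimality of $k$---every good strictly preferred to $g_k$ is either in $Y_i$ or has $x_{ig}=0$, so $y_{ig}\ge x_{ig}$ pointwise on that prefix. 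This prefix-domination step is absent from your sketch, and without it the telescoping only bounds the deficit by $v_i(g_1)$, which need not be a legal witness ($g_1$ may lie in $Y_i$ or have $x_{ig_1}=0$). With both pieces one gets $v_i(X_i)-v_i(Y_i)\le\sum_{h\ge k}\delta_h(X_h-Y_h)<\sum_{h\ge k}\delta_h=v_i(g_k)$, where the strictness comes from $\floor{x}>x-1$ together with the case hypothesis $v_i(Y_i)<v_i(X_i)$ (which rules out the degenerate case $v_i(g_k)=0$); tie-breaking in the ordering, which you flag as the main obstacle, plays no role here. (One should also note the witness set is nonempty: if every good with $x_{ig}>0$ were in $Y_i$, additivity would force $v_i(Y_i)\ge v_i(X_i)$.) Case~2 is then genuinely symmetric, with witness the most preferred good in $Y_i$ having $x_{ig}<1$.
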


In other words, \Cref{cor:UGupToOneGood} ensures that, in any deterministic allocation in the $\UG$-decomposition, the valuation of any agent $i$ differs from her expected value by at most the value of one good. Moreover, such a good must have a positive probability of occurring in $i$'s bundle.

\section{Additive Valuations with Entitlements} 
\label{sec:additive}
In this section, we present a lottery for additive valuations that simultaneously achieves ex-ante $\WSD$-$\EF$ (and hence ex-ante $\WEF$) and ex-post $\WEF(1,1)+\WPROP1$. In contrast to equal entitlements, we show a weaker ex-post guarantee. However, we prove that this is necessary since no stronger envy notion is compatible with ex-ante $\WEF$. We also generalize a result of Freeman et al.~\cite{freemanSV20} to entitlements: Similarly to the unweighted setting, we design a lottery that is ex-ante $\WGF$ and ex-post $\WEF_1^1 + \WPROP1$ .

\subsection{Ex-ante $\WSD$-$\EF$, Ex-post $\WEF(1,1)$ + $\WPROP1$}\label{sec:additiveEF} 

The main contribution of this subsection is as follows.

\begin{theorem}\label{thm:mainAdditive}
For entitlements and additive valuations, we can compute in strongly polynomial time a lottery that is ex-ante $\WSD$-$\EF$ and ex-post $\WPROP1$ + $\WEF(1,1)$. 
\end{theorem}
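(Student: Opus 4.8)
The plan is to construct an explicit fractional allocation $X$ that is ex-ante $\WSD$-$\EF$, and then to decompose it via the utility guarantee bihierarchy $\UG$ so that the resulting integral allocations inherit both ex-post $\WEF(1,1)$ and $\WPROP1$. The overall strategy mirrors the unweighted BoBW constructions of \cite{aziz20,freemanSV20}, but the weights force two substantial modifications: the ``eating'' process must run at weighted speeds, and the ex-post guarantee must be the weaker $\WEF(1,1)$ rather than $\WEF1$, consistent with the tightness result promised later in the paper.

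First I would build the fractional allocation using a \emph{weighted} variant of the Probabilistic Serial (eating) rule. Each agent $i$ eats her most preferred not-yet-exhausted good continuously at a rate proportional to her entitlement $w_i$; when a good is fully consumed, every agent still eating it switches to her next most preferred available good. Let $x_{ig}$ denote the fraction of $g$ that $i$ consumes over the unit time horizon. The key invariant to verify is that this process yields a complete fractional allocation and, crucially, that it is ex-ante $\WSD$-$\EF$: for every pair $i,j$ we need $w_j \cdot X_i \dominates_i w_i \cdot X_j$. This should follow from the standard PS argument adapted to weights. For any threshold good $g^*$, consider the prefix of goods that $i$ weakly prefers to $g^*$; because $i$ eats these goods before moving on and does so at rate $w_i$, the $w_i$-weighted amount that any other agent $j$ can have consumed from this prefix (at rate $w_j$) is dominated by what $i$ herself accumulates, giving the required weighted stochastic-dominance inequality. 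I would state this as a lemma and prove the dominance threshold-by-threshold. One must be careful that the weighted rates are handled so the comparison $\sum_{g \succeq_i g^*} w_j x_{ig} \ge \sum_{g \succeq_i g^*} w_i x_{jg}$ holds uniformly.

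Next I would feed $X$ into the $\UG$-decomposition (\Cref{thm:budishDecomp}) with quotas $\underline{q}_S = \floor{x_S}$, $\overline{q}_S = \ceil{x_S}$. By \Cref{thm:budishDecomp} this produces, in strongly polynomial time, a lottery over integral allocations $Y$, each feasible under the quotas. For the ex-post guarantees I would invoke \Cref{cor:UGupToOneGood}, which controls how far each $v_i(Y_i)$ can deviate from $v_i(X_i) = \expectation{v_i(X_i)}$. For $\WPROP1$: since $X$ is ex-ante $\WEF$ it is ex-ante $\WPROP$, so $v_i(X_i) \ge w_i v_i(\goods)$; if $v_i(Y_i) < v_i(X_i)$, part~1 of the corollary supplies a good $g \notin Y_i$ with $v_i(Y_i) + v_i(g) > v_i(X_i) \ge w_i v_i(\goods)$, which is exactly $\WPROP1$. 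For $\WEF(1,1)$: I would combine both parts of the corollary with the weighted envy characterization. Writing out the target inequality $w_j(v_i(Y_i)+v_i(g)) \ge w_i(v_i(Y_j)-v_i(g))$ for some $g \in Y_j$, I would use part~2 applied to agent $i$'s potential over-allocation from $j$'s bundle, or more precisely bound $v_i(Y_j)$ from above and $v_i(Y_i)$ from below by the corollary, and then appeal to the ex-ante $\WSD$-$\EF$ property (which implies the expected-value inequality $w_j v_i(X_i) \ge w_i v_i(X_j)$) to close the gap. The transfer good $g$ must come from $A_j$, so I would identify it as the good furnished by part~2 of the corollary applied to $j$'s overconsumption, ensuring it lies in $Y_j$ with positive deficit.

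The main obstacle I anticipate is the ex-post $\WEF(1,1)$ step, specifically reconciling the two one-sided deviation bounds from \Cref{cor:UGupToOneGood} (one for $i$'s own bundle, one for $j$'s) with the single transferred good required by the $\WEF(1,1)$ definition. Unlike the unweighted $\EF1$ case, where one discarded good from $A_j$ suffices, here the weights $w_i, w_j$ scale the two sides asymmetrically, so I expect to need both that a good can be \emph{added} to $A_i$ and that the \emph{same} or a corresponding good is \emph{removed} from $A_j$ — precisely the transfer structure of $(1,1)$. The care lies in showing the good identified by the corollary as causing $i$'s deficit and the good causing $j$'s surplus can be taken to be one and the same element of $A_j$, or that the inequality still holds after applying the deviation bounds separately. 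I would handle this by working directly from the feasibility of $Y$ under the $\mathcal{S}_i$-constraints, which pin down the value of $Y_j$ as seen by $i$ up to exactly one good, and then a short computation combining this with the ex-ante weighted envy inequality should yield the claim.
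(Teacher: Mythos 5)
Your construction (the weighted eating rule, which is the paper's \DSE), your ex-ante $\WSD$-$\EF$ argument, and your $\WPROP1$ step all match the paper's proof. The gap is in the ex-post $\WEF(1,1)$ step, and it is a real one: \Cref{cor:UGupToOneGood} only relates $v_i(Y_i)$ to $v_i(X_i)$, i.e., each agent's valuation of \emph{her own} bundle. It gives no bound on the cross term $v_i(Y_j)$, because the row-$j$ constraints of $\UG$ are nested prefixes of \emph{$j$'s} preference order; they control $v_j(Y_j)$ but say nothing about how much $Y_j$ is worth to $i$, and the deviation $v_i(Y_j) - v_i(X_j)$ can exceed the value of any single good. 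So the plan to ``bound $v_i(Y_j)$ from above and $v_i(Y_i)$ from below by the corollary, and then appeal to ex-ante $\WEF$'' cannot be carried out. Your fallback --- that the $\mathcal{S}_i$-constraints ``pin down the value of $Y_j$ as seen by $i$ up to exactly one good'' --- fails for the same reason: the pairs $(i,S)$, $S \in \mathcal{S}_i$, constrain row $i$ only. Moreover, even if both bounds were available, they would involve two \emph{different} goods (a deficit good $g' \notin Y_i$, which need not lie in $Y_j$, and a surplus good $g \in Y_j$), which yields at best a $\WEF_1^1$-type statement rather than $\WEF(1,1)$, where a single $g \in Y_j$ must serve on both sides. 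This distinction is exactly why the paper's group-fairness section proves only $\WEF_1^1$ via such a two-sided corollary argument --- and even there it needs the competitive-equilibrium relation (\Cref{cor:usefulIneq}) to translate $v_j$-bounds on $Y_j$ into $v_i$-bounds, a tool unavailable for \DSE\ output.

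The paper's actual proof of ex-post $\WEF(1,1)$ (\Cref{thm:expostWEF11}) uses entirely different machinery, which your proposal is missing: it shows that every $Y$ in the $\UG$-decomposition of $\XDSE$ is the outcome of a picking sequence, and then invokes the characterization of $\WEF(x,y)$ picking sequences (\Cref{thm:pickingWEFxy}) with $x=y=1$. The sequence is obtained by sorting goods by \emph{stopping times} $s(g_k^i) = \min\set{t(g_k^i), \frac{k}{w_i}}$, and the structural facts making this work are \Cref{lemma:decompositionImplication} (in any $Y$ of the decomposition, $i$ receives at most $k$ goods among those eaten by time $\frac{k}{w_i}$) and \Cref{lemma:stoppingTimeBounds} ($s(g_k^i) \in \left(\frac{k-1}{w_i}, \frac{k}{w_i}\right]$), which together give the prefix inequality $\frac{t_j-1}{w_j} < \frac{t_i+1}{w_i}$ required by the characterization. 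The paper even remarks that using eating times $t(g_k^i)$ instead of stopping times would only give $\frac{t_j-1}{w_j} < \frac{t_i+2}{w_i}$, which is too weak --- an indication that the step you left as ``a short computation'' is where the entire difficulty of the theorem is concentrated.
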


Let us start by introducing our main algorithm \textsc{DifferentSpeedsEating} (\textsc{DSE}), which is inspired by \textsc{Eating} for equal entitlements in~\cite{aziz20}. Agents continuously eat their most preferred available good at speed equal to her entitlement. Every agent starts eating her most preferred good; as soon as a good has been completely eaten it is removed from the set of available goods. Each agent that was eating this good continues eating her most preferred remaining one. The procedure terminates when no good remains. See \Cref{algo:DSEshort} for a formal description. Observe that by precomputing the times at which goods are removed, we can implement the algorithm in strongly polynomial time.

\begin{algorithm}[tbh!]
	\SetNoFillComment
	\DontPrintSemicolon
	\KwIn{An instance $\instance = (\agents, \goods, \set{v_i}_{i\in\agents})$ and the entitlements $w_1, \dots, w_n$}
	\KwOut{A fractional allocation $X$}
	$X\gets \mathbf{0}_{n\times m}$ \tcp*{current fractional allocation}
	$\mathbf{z}\gets \mathbf{1}_m $ \tcp*{remaining supply of each good}
	
	\While{$\goods\neq\emptyset$}{
		$\mathbf{s}\gets \mathbf{0}_m $ \tcp*{eating speed on each item}
		\For{$i\in\agents$}{
			$g^i\gets \argmax_{g\in\goods}v_i(g)$ \tcp*{most favored item}
			$\mathbf{s}(g^i) \gets \mathbf{s}(g^i) + w_i$ \tcp*{sum speeds of each agent}
		}
		$\mathbf{t}\gets \mathbf{1}_m $ \tcp*{eating time of each item}
		\For{$g\in\goods$}{
			$\mathbf{t}(g)\gets \frac{\mathbf{z}(g)}{\mathbf{s}(g)}$ \tcp*{compute finishing times}
		}
		$t\gets \min_{g\in\goods}\mathbf{t}(g)$ \tcp*{time when first item is finished}
		\For{$i\in\agents$}{
			$x\gets t\cdot w_i$ \tcp*{amount of items eaten by $i$}
			$x_{ig^i} \gets x_{ig^i}+x $ \tcp*{eat fraction of $g^i$}
			$\mathbf{z}(g^i) \gets \mathbf{z}(g^i)  -x$ \tcp*{reduce supply of $g^i$}
		}
		$\goods \gets \goods \setminus \{ g \in \goods \mid \mathbf{t}(g) \le \mathbf{t}(g') \text{ for all } g' \in \goods\}$ \tcp*{remove finished items}%
	}
	\KwRet{X}\\
	\caption{\textsc{DifferentSpeedsEating}\label{algo:DSEshort}}
\end{algorithm}

We denote by $\XDSE$ the output of \textsc{DSE}. The key properties are summarized in the following lemma.

\begin{lemma}
	Let $\XDSE$ be the output of \textsc{DSE}, then
	\begin{enumerate}
	\item $\sum_{g\in \goods} x^{\DSE}_{ig} = w_i\cdot m$ for each $i\in \agents$;
	\item the time needed for agent $i$ to eat one unit of goods is $\frac{1}{w_i}$;
	\item overall, one unit of goods is consumed in one unit of time and, therefore, \DSE\ runs for $m$ time units.
	\end{enumerate}
\end{lemma}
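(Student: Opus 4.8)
The plan is to extract a single structural invariant from \Cref{algo:DSEshort} and derive all three claims from it. The invariant is that, at every instant during the execution, each agent $i$ is eating some good at speed exactly $w_i$, and no agent is ever idle while goods remain. This holds because in each iteration of the \textbf{while} loop the set $\goods$ is nonempty, so $g^i = \argmax_{g\in\goods} v_i(g)$ is well-defined for every $i\in\agents$, and $i$ consumes a mass $t\cdot w_i$ of $g^i$ during that round (where $t=\min_{g\in\goods}\mathbf{t}(g)$ is the round length). Hence every agent eats continuously at rate $w_i$ from time $0$ until the algorithm terminates, independently of which good it happens to be eating at any moment.

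Granting the invariant, I would prove the claims in the order 3, 2, 1. For claim 3, observe that at every instant the total consumption rate summed over all agents equals $\sum_{i\in\agents} w_i = 1$ by our normalization, so exactly one unit of goods is removed per unit of time. Since the total initial supply is $m$ (each of the $m$ goods has supply $1$, recorded by $\mathbf{z}\gets\mathbf{1}_m$) and every good is fully consumed before $\goods=\emptyset$, the total running time is $T = m/1 = m$. Claim 2 is then immediate from the constant rate: as $i$ eats at speed $w_i$ throughout, eating one unit of goods takes time $1/w_i$. For claim 1, agent $i$ eats for the full duration $T=m$ at speed $w_i$, so the total mass it accumulates is $\sum_{g\in\goods} x^{\DSE}_{ig} = w_i\cdot m$; as a sanity check, summing over all agents recovers $\sum_{i\in\agents} w_i\cdot m = m$, the total mass of goods, consistent with $\XDSE$ being complete.

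The only point requiring care — and the main, if modest, obstacle — is justifying the passage from the round-based pseudocode to the continuous picture used above, i.e.\ confirming that summing the per-round contributions $t\cdot w_i$ over all rounds yields continuous eating at rate $w_i$ with no gaps. This reduces to checking that the update $\goods \gets \goods \setminus \{g \mid \mathbf{t}(g)\le\mathbf{t}(g') \text{ for all } g'\}$ deletes exactly the goods whose remaining supply $\mathbf{z}(g)$ has reached zero at the current minimal finishing time $t$, so that between consecutive removals each agent's favorite good stays fixed and is eaten at the uniform rate $w_i$. Once this bookkeeping is verified, the three equalities follow from the elementary rate-times-time computations above.
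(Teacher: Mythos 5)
Your proof is correct and takes the same route the paper intends: the paper states this lemma without proof, treating it as an immediate consequence of the algorithm's definition, and your rate-times-time argument (each agent eats continuously at constant speed $w_i$, the aggregate speed is $\sum_{i\in\agents} w_i = 1$, the total supply is $m$) is precisely that implicit justification. The bookkeeping point you flag is indeed the only thing to verify, and it holds because any good $g$ attaining the minimal finishing time $t$ satisfies $\mathbf{z}(g) - t\cdot \mathbf{s}(g) = 0$, so the removal step deletes exactly the exhausted goods and every agent's current good is eaten at uniform rate $w_i$ between consecutive removals.
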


We define the \emph{eating time} of a good $g$ as the point in time when it has been entirely consumed (during a run of \DSE). Whenever an agent starts eating a good $g$, she can start eating another good only after the eating time of $g$.

Before proceeding to the proof of~\Cref{thm:mainAdditive}, let us provide an example
describing a run of \DSE\ on the instance $\instance^*$ introduced above in~\Cref{example:valuations}.

\begin{example}[$\DSE$ at work]\label{example:DSE}
	Let us observe the behavior of $\DSE$ on $\instance^*$.
	
	The agents' priorities for the goods are the following:
	\begin{align*}
		g_1 \succ_1 g_2 \succ_1 g_3 \succ_1 g_4 \ ,\\
		g_2 \succ_2 g_3 \succ_2 g_1 \succ_2 g_4 \ , \\
		g_2 \succ_3 g_3 \succ_3 g_1 \succ_3 g_4 \ .
	\end{align*}
	Notice that, for agent $1$, goods $g_1$ and $g_2$ are identical and ties are broken in favor of the good coming first in the ordering $g_1, \dots, g_4$. Moreover, agents $2$ and $3$ have the same priority order, for this reason they will always be eating the same good.
	
	During a run of \DSE, Whenever a good gets entirely eaten up, the behavior of agents who were eating this good changes. In the following, we only refer to time points where these events happen. Indeed, the times are the eating times of the good(s) that have been completely consumed.
	
	\medskip
	
	\noindent {\em\bf Time $\mathbf{t=0}$ :} At the beginning, $x_{ig}=0$, for all $i\in \agents$ and $g\in \goods$. Agent $1$ starts eating $g_1$ while agents $2$ and $3$ good $g_2$. Notice that agents $2$ and $3$ together have the same speed as agent $1$.
	
	\medskip 
	
	\noindent{\em\bf Time $\mathbf{t=2}$ :} $g_1$ and $g_2$ get fully consumed and $x_{1 g_1}=1$, $x_{2g_2}= \frac{2}{3}$ and $x_{3g_2}=\frac{1}{3}$, respectively. Agent $1$ will start eating $g_3$ as well as agents $2$ and $3$. All the agents together have speed equal to $1$. Notice that agent $1$ would prefer good $g_2$, however, it has been consumed entirely by agents $2$ and $3$.
	
	\medskip
	
	\noindent{\em\bf Time $\mathbf{t=3}$ :} $g_3$ is now fully consumed. We have $x_{1 g_3}= \frac{1}{2}$, $x_{2g_3}=  \frac{1}{3}$ and $x_{3g_3}=\frac{1}{6}$, respectively. The only remaining available good is $g_4$, all the agents are now starting to eat it.
	
	\medskip
	
	\noindent{\em\bf Time $\mathbf{t= 4}$ :} All goods are fully consumed and $x_{1 g_4}= \frac{1}{2}$, $x_{2g_4}=  \frac{1}{3}$ and $x_{3g_4}=\frac{1}{6}$. \DSE\ returns the fractional allocation:
	
	\begin{equation*}
		\XDSE=
		\begin{pmatrix} \smallskip
			1 & 0              & \frac{1}{2} & \frac{1}{2} \\ \smallskip
			0 & \frac{2}{3} & \frac{1}{3} & \frac{1}{3}			  \\ \smallskip
			0 & \frac{1}{3} & \frac{1}{6}& \frac{1}{6}
		\end{pmatrix}
		\ .
	\end{equation*}
\exampleend
\end{example}

Our first result is that the output of \DSE\ is $\WSD$-$\EF$.

\begin{proposition}\label{prop:exanteWSD}
$\XDSE$ is $\WSD$-$\EF$.
\end{proposition}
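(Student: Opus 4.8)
The plan is to show that for every pair of agents $i, j \in \agents$ we have $w_j \cdot \XDSE_i \dominates_i w_i \cdot \XDSE_j$, i.e.\ that for every threshold good $g^* \in \goods$,
\[
w_j \sum_{g : v_i(g) \ge v_i(g^*)} x^{\DSE}_{ig} \;\ge\; w_i \sum_{g : v_i(g) \ge v_i(g^*)} x^{\DSE}_{jg}.
\]
The whole argument is carried out from the viewpoint of agent $i$'s preference order, so I would fix $i$ and sort the goods as $g_1 \succeq_i g_2 \succeq_i \cdots \succeq_i g_m$ (using the fixed tie-breaking rule). It then suffices to verify the inequality for each prefix $P_k = \{g_1, \dots, g_k\}$, i.e.\ to show
\[
w_j \sum_{g \in P_k} x^{\DSE}_{ig} \;\ge\; w_i \sum_{g \in P_k} x^{\DSE}_{jg}
\qquad \text{for every } k \in [m],
\]
since the sets $\{g : v_i(g) \ge v_i(g^*)\}$ are exactly these prefixes.

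First I would reformulate both sides in terms of \emph{eating time}. The key structural fact about \DSE\ is that an agent eats each good at a rate equal to her own entitlement, and she only ever eats a good that is currently her most preferred available one. I would track, for agent $i$, the total time $\tau_i(k)$ that $i$ spends eating goods from the prefix $P_k$; because $i$ eats at speed $w_i$, the fraction of $P_k$ that $i$ consumes is exactly $\sum_{g \in P_k} x^{\DSE}_{ig} = w_i \cdot \tau_i(k)$. With this identity the desired inequality becomes, after cancelling $w_i w_j$, simply $\tau_i(k) \ge \tau_j(k)$. So the whole statement reduces to the clean claim: \emph{for each prefix $P_k$ of $i$'s preference order, agent $i$ spends at least as much time eating goods in $P_k$ as agent $j$ does.}

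To prove $\tau_i(k) \ge \tau_j(k)$ I would argue about the dynamics directly. Consider any moment when $j$ is eating a good in $P_k$. At that moment $j$'s current favorite available good lies in $P_k$, which means every good $g_1, \dots, g_k$ that $j$ ranks above it has already been fully consumed. Since $i$ ranks the goods of $P_k$ (weakly) at least as high — they are literally the top $k$ goods in $i$'s order — I want to conclude that $i$ must also be eating inside $P_k$ at that same instant, because $i$ never skips to a worse good while one of her top $k$ goods is still available. The cleanest way to formalize this is to show that at every time $t$, $i$'s current good is at least as preferred by $i$ as $j$'s current good is by $i$; more concretely, whenever $j$ is eating some $g \in P_k$, all goods $i$ prefers to nothing-in-$P_k$ have been exhausted, so $i$'s current good is also in $P_k$. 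Integrating over time then gives $\tau_i(k) \ge \tau_j(k)$.

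The main obstacle will be making this pointwise domination argument rigorous in the face of the tie-breaking and the simultaneous-finishing events in \DSE. Two goods can be completely consumed at the same instant, and agents sharing a preference order move in lockstep, so I would need to handle the bookkeeping of which goods are available to $i$ versus $j$ carefully — ideally by an invariant maintained across the discrete events of \Cref{algo:DSEshort} (the times at which goods are removed). The natural invariant is that at every time $t$ the set of goods from $P_k$ still available to $i$ contains the set still available to $j$ restricted to $P_k$, or equivalently that $i$ has at each moment consumed no less of $P_k$ (in ``prefix-time'') than $j$. Establishing this invariant, and checking it is preserved precisely at the removal events where preferences may shift, is the delicate part; once it holds, integrating the eating rates over time and undoing the substitution $\sum_{g \in P_k} x^{\DSE}_{ig} = w_i \tau_i(k)$ closes the proof.
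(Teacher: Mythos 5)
Your proposal is correct and takes essentially the same route as the paper's proof: reduce $\WSD$-$\EF$ to the prefix inequalities in $i$'s preference order, rewrite consumed fractions as speed times eating time, and compare the two agents' eating times on each prefix. The paper packages your time comparison via the stopping time $t_k$ at which $i$ is done with the prefix $G_k$ (so $w_i\cdot t_k = \sum_{g\in G_k} x_{ig}$, while every good of $G_k$ is exhausted by $t_k$, giving $w_j\cdot t_k \geq \sum_{g\in G_k} x_{jg}$); also, the invariant you worry about is unnecessary, since whenever $j$ eats a good of $P_k$ that good is still available, so $i$'s current favorite automatically lies in the prefix $P_k$ and the pointwise containment of eating times is immediate.
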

\begin{proof}
For convenience, we use $X=\XDSE$. Let us consider an agent $i\in \agents$. Note that the goods $g_1, \dots, g_m$ are ordered in the same manner as in \DSE\ for agent $i$, since we always break ties according to a predefined ordering of $\goods$. Now consider another agent $j\in \agents$. Using the notation $G_k=\set{g_1, \dots, g_k}$ for the first $k$ goods in $i$'s ordering, we show 
\begin{equation}\label{eq:WSDproof}
	w_j\cdot \sum_{g\in G_k }  x_{ig} \geq w_i\cdot \sum_{g\in G_k }  x_{jg} \enspace ,
\end{equation}
for every $k\in [m]$, and $\WSD$-$\EF$ follows for agent $i$.

Let $t_k$ be the time when $i$ stops eating $g_k$ during the run of \DSE. We set $t_k=t_{k-1}$ if good $g_k$ has been completely consumed before time $t_{k-1}$ by others. This means that, by the time $t_k$, no good in $G_k$ remains available.
On the one hand, until time $t_k$, agent $i$ could only consume goods in $G_k$, implying $w_i\cdot t_k = \sum_{g\in G_k } x_{ig}$. On the other hand, every good in $G_k$ has been fully consumed by that time, i.e., $w_j\cdot t_k \geq \sum_{g\in G_k }  x_{jg}$, for every $j\in \agents$. Combining these two properties proves \Cref{eq:WSDproof} and, hence, the theorem.
\end{proof}

It is known that $\SD$-$\EF$ implies ex-ante $\EF$ for additive valuations; it remains true for different entitlements. We refer to the appendix for the formal proof of the following proposition.

\begin{proposition}\label{prop:WSEEFimpliesWEFexante}
	Given a fractional allocation $X$, if $X$ is ex-ante $\WSD$-$\EF$, then $X$ is ex-ante $\WEF$. 
\end{proposition}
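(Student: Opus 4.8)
The plan is to show that $\WSD$-$\EF$ implies ex-ante $\WEF$ by unpacking both definitions in terms of the fractional allocation $X$ and exploiting that additivity lets us write expected values as sums of marginal contributions. Recall that ex-ante $\WEF$ for the lottery means $w_j\cdot v_i(X_i)\geq w_i\cdot v_i(X_j)$ where $v_i(X_i)=\sum_{g\in\goods}x_{ig}\,v_i(g)$ is the expected value agent $i$ derives from her own (fractional) bundle, and similarly $v_i(X_j)=\sum_{g\in\goods}x_{jg}\,v_i(g)$. So the whole statement reduces to the inequality
\[
w_j\cdot \sum_{g\in\goods} x_{ig}\,v_i(g)\;\geq\; w_i\cdot \sum_{g\in\goods} x_{jg}\,v_i(g).
\]

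First I would fix agents $i,j$ and order the goods from $i$'s perspective as $v_i(g_1)\geq v_i(g_2)\geq\cdots\geq v_i(g_m)$, matching the ordering implicit in the definition of $\dominates_i$. The key idea is the standard Abel summation (summation-by-parts) trick: writing each value $v_i(g_k)$ as a nonnegative combination of the ``level sets'' $G_k=\{g_1,\dots,g_k\}$. Concretely, since the values are non-increasing, I can express $v_i(g)=\sum_{k}\big(v_i(g_k)-v_i(g_{k+1})\big)\mathbbm{1}[g\in G_k]$ with all coefficients $v_i(g_k)-v_i(g_{k+1})\geq 0$ (and a final nonnegative term $v_i(g_m)$). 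Substituting this into both sides and swapping the order of summation turns the single comparison of weighted expected values into a nonnegative combination of the prefix-sum inequalities
\[
w_j\cdot \sum_{g\in G_k} x_{ig}\;\geq\; w_i\cdot \sum_{g\in G_k} x_{jg},
\]
which is precisely the statement $w_j\cdot X_i\dominates_i w_i\cdot X_j$ spelled out for every threshold $g^*=g_k$. Since $\WSD$-$\EF$ gives us exactly these prefix inequalities for all $k$, and since we are combining them with nonnegative weights, the desired aggregate inequality follows.

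The only genuine care needed is bookkeeping for the telescoping: one must handle the boundary term correctly (the coefficient $v_i(g_m)$ multiplying the full set $G_m=\goods$) and confirm every coefficient $v_i(g_k)-v_i(g_{k+1})$ is nonnegative, which is guaranteed by the ordering. I expect the main (albeit mild) obstacle to be purely notational: aligning the $\SD$-dominance condition, which is phrased via sets $\{g\mid v_i(g)\geq v_i(g^*)\}$, with the prefix sets $G_k$, including the tie-breaking convention so that the level sets are exactly the prefixes of the fixed ordering. Once that identification is made, no nontrivial computation remains — the implication is a one-line consequence of Abel summation applied to nonnegative increments.
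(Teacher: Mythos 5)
Your proof is correct, but it takes a genuinely different route from the paper's. The paper argues continuously: it defines two step functions $f,h$ on the interval $[0,a]$, where $a = w_j\cdot\sum_{g\in\goods} x_{ig}$, whose integrals equal $w_j\cdot \expectation{v_i(X_i)}$ and $w_i\cdot \expectation{v_i(X_j)}$ respectively, establishes the pointwise inequality $f(x)\geq h(x)$ by a contradiction argument that invokes the $\WSD$-$\EF$ prefix inequalities, and then integrates. Your argument is purely discrete Abel summation: each weighted expectation is rewritten as a nonnegative combination $\sum_k \Delta_k \cdot w_j \sum_{g\in G_k} x_{ig}$ with $\Delta_k = v_i(g_k)-v_i(g_{k+1}) \geq 0$, plus the boundary term $v_i(g_m)\geq 0$ multiplying the sum over $G_m=\goods$, and the dominance inequality is applied coefficient-wise. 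The tie-breaking issue you flag resolves exactly as you suspect: whenever a prefix $G_k$ fails to be a level set $\set{g \,:\, v_i(g)\geq v_i(g_k)}$, the next good is tied with $g_k$, so $\Delta_k=0$ and that prefix inequality is never invoked; the prefixes with $\Delta_k>0$ (together with $G_m$) are genuine level sets, which is precisely what $\WSD$-$\EF$ supplies. As for what each approach buys: yours is shorter and avoids both the continuous construction and the proof by contradiction, exhibiting the expectation gap directly as a nonnegative combination of the dominance constraints; the paper's version makes visible the classical picture of stochastic dominance as comparison of areas under decumulative curves and sidesteps the tie bookkeeping altogether, since its functions are defined by accumulated mass rather than by good index.
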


\Cref{prop:exanteWSD} and \Cref{prop:WSEEFimpliesWEFexante} show that the outcome of \DSE\ satisfies the ex-ante properties stated in \Cref{thm:mainAdditive}.

So far we have shown ex-ante properties of lotteries having the output of the \DSE\ as fractional matrix representation.
In the remaining part of this subsection, we show how to get good properties ex-post. In particular, we start from the output of the \DSE, namely, $\XDSE$. We apply the Budish's decomposition with the bihierarchy $\UG$. 

Before proceeding, we give an example of such a decomposition as well as some insights on the guarantees obtained thanks to the  $\UG$ bihierarchy. We again make use of instance $\instance^*$; the allocation $\XDSE$ was computed in~\Cref{example:DSE}.

\begin{example}[The $\mathcal{H}^{\mathsf{UG}}$-decomposition]\label{example:decomposition}
	The $\UG$-decomposition of $\XDSE$ is a convex combination $\lambda_1 Y^1 + \dots +\lambda_k Y^k $, for some integer $k$. Every allocation $Y^h$ is deterministic and its properties are determined by the bihierarchy $\UG$. In the following, we use $Y$ to refer to a generic deterministic allocation in the decomposition.
	
	Recall that $\UG= \mathcal{H}_1 \cup \mathcal{H}_2$. Interpreting an allocation as a matrix, the hierarchy $\mathcal{H}_1$ represents columns and only ensures that any $Y$ is complete.
	
	Let us now consider $\mathcal{H}_2$. Recall, $\mathcal{H}_2= \set{(i, S)\ \vert \ i\in \agents, S\in \mathcal{S}_i} \cup \set{(i,g)\vert i\in \agents, g\in \goods}$.
	
	By~\cref{thm:budishDecomp}, every matrix of an allocation $Y$ in the $\UG$-decomposition is feasible under the quotas $\underline{q}_A= \floor{x^\DSE_A}$ and $\overline{q}_A= \ceil{x^\DSE_A}$, for every $A\in \mathcal{H}_2$, where $x^\DSE_A= \sum_{(i,g)\in A} x^\DSE_{ig}$.
	Note that only one agent appears in any pair of $\mathcal{H}_2$. Hence, we discuss the implications of~\cref{thm:budishDecomp} agent by agent.
	
	\paragraph{Agent $1$:} The pair $(1, S)$ belongs to $\mathcal{H}_2$ if and only if  $S\in \mathcal{S}_1\cup\set{ \set{g_2}, \set{g_3}, \set{g_4}}$, where $\mathcal{S}_1 =\set{\set{g_1}, \set{g_1, g_2}, \set{g_1, g_2, g_3}, \set{g_1,g_2, g_3,g_4}}$.
	The feasibility conditions imply:
	
	\begin{align*}
		&y_{1g_1} =1,  \\
		&y_{1g_1} + y_{1g_2} =1 ,\\
		&1 \leq y_{1g_1} + y_{1g_2} + y_{1g_3}  \leq 2 , \\
		& y_{1g_1} + y_{1g_2} + y_{1g_3}+ y_{1g_4}  = 2 \ ,
	\end{align*}
	and
	\begin{align*}
		y_{1g_2} =0, && 0\leq  y_{1g_3} \leq 1, && 0\leq y_{1g_4} \leq 1 \ .
	\end{align*}
	
	In other words, in any deterministic allocation $Y$, agent $1$ always receives $2$ goods. In particular, she always gets $g_1$ but never $g_2$. Moreover, she gets either $g_3$ or $g_4$, but not both of them.
	
	\paragraph{Agent $2$:} The pair $(2, S)$ belongs to $\mathcal{H}_2$ if and only if  $S\in \mathcal{S}_2\cup\set{\set{g_1}, \set{g_3}, \set{g_4}}$, where $\mathcal{S}_2=\set{\set{g_2}, \set{g_2, g_3}, \set{g_2, g_3, g_1}, \set{g_2, g_3, g_1,g_4}}$.
	In this case, the feasibility conditions imply
	\begin{align*}
		&0\leq y_{2g_2} \leq 1,  \\
		&   y_{2g_2} + y_{2g_3} = 1 ,\\
		&   y_{2g_2} + y_{2g_3} + y_{2g_1}  = 1 , \\
		& 1 \leq  y_{2g_2} + y_{2g_3} + y_{2g_1} + y_{2g_4}  \leq  2 \ ,
	\end{align*}
	and
	\begin{align*}
		y_{2g_1} =0, && 0\leq  y_{2g_3} \leq 1, && 0\leq  y_{2g_4} \leq 1 \ .
	\end{align*}
	
	Therefore, the bundle of agent $2$ is of size either $1$ or $2$. It never contains $g_1$, but must contain one good between $g_2$ and $g_3$, and possibly contains $g_4$.

	\paragraph{Agent $3$:} The pair $(3, S)$ belongs to $\mathcal{H}_2$ if and only if  $S\in \mathcal{S}_3\cup\set{\set{g_1}, \set{g_3}, \set{g_4}}$, where  $\mathcal{S}_3=\set{\set{g_2}, \set{g_2, g_3}, \set{g_2, g_3, g_1}, \set{g_2, g_3, g_1,g_4}}$.
	In this case, $0 \leq  y_{3g_2} + y_{3g_3} + y_{3g_1} + y_{3g_4}  \leq  1$ and $ y_{3g_1}=0$, hence, agent $3$ can receive at most one of $g_2,g_3, g_4$ and never receives $g_1$. 
	
	Finally, we provide a concrete $\UG$-decomposition of $\XDSE$ for $\instance^*$. Considering that rows represent agents and columns represent goods, it is easy to verify that every deterministic allocation satisfies the aforementioned properties.
	\begin{equation*}
	\XDSE=
	\frac{1}{6} \cdot
	\underbrace{
		\begin{pmatrix}
			1 & 0 & 0 & 1 \\
			0 & 1 & 0 & 0  \\
			0 & 0 & 1 & 0 
		\end{pmatrix}
	}_{ Y^1} 
	\ + \
	\frac{1}{6} \cdot
	\underbrace{	
		\begin{pmatrix}
			1 & 0 & 1 & 0  \\
			0 & 1 & 0 & 0  \\
			0 & 0 & 0 & 1 
		\end{pmatrix}
	}_{ Y^2} 
	\ + \
	\frac{1}{3} \cdot
	\underbrace{
		\begin{pmatrix}
			1 & 0 & 0 & 1  \\
			0 & 0 & 1 & 0  \\
			0 & 1 & 0 & 0 
		\end{pmatrix}
	}_{ Y^3} 
	\ + \
	\frac{1}{3} \cdot
	\underbrace{
		\begin{pmatrix}
			1 & 0 & 1 & 0  \\
			0 & 1 & 0 & 1  \\
			0 & 0 & 0 & 0 
		\end{pmatrix}
	}_{ Y^4} 
	\ .
\end{equation*}
\exampleend
\end{example}
\bigskip

We notice that $Y^4$ in~\cref{example:decomposition} is not $\WEF1$. Indeed, in $Y^4$ agents $1$ and $2$ receive two goods each while agent $3$ has an empty bundle, thus agent $3$ $\WEF1$-envies any other agent. On the other hand, every allocation $Y^h$, for $h=1, \dots, 4$, is $\WEF(1,1)$ and $\WPROP1$. We next show this is always the case for the $\UG$-decomposition of any $\XDSE$.

\begin{theorem}\label{thm:expostWEF11}
 Every deterministic allocation $Y$ in the $\UG$-decomposition of $\XDSE$ is $\WEF(1,1)$. 
\end{theorem}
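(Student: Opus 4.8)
The plan is to fix two agents $i, j \in \agents$ and a deterministic allocation $Y$ in the $\UG$-decomposition, and to exhibit a good $g \in Y_j$ witnessing the $\WEF(1,1)$ condition, namely $w_j \cdot (v_i(Y_i) + v_i(g)) \geq w_i \cdot (v_i(Y_j) - v_i(g))$. The natural starting point is the ex-ante guarantee: since $\XDSE$ is $\WSD$-$\EF$ (Proposition~\ref{prop:exanteWSD}), and $\WSD$-$\EF$ implies $\WEF$ in expectation (Proposition~\ref{prop:WSEEFimpliesWEFexante}), we have $w_j \cdot v_i(X_i) \geq w_i \cdot v_i(X_j)$, where $X = \XDSE$ and $v_i(X_i) = \expectation{v_i(Y_i)}$ by additivity. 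So the fractional allocation already satisfies weighted envy-freeness, and the task is to control the error incurred when rounding $X$ to the integral $Y$.

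First I would invoke \Cref{cor:UGupToOneGood} to bound the two relevant rounding errors. Applied to agent $i$'s own bundle, it controls how far $v_i(Y_i)$ can fall below $v_i(X_i)$: if $v_i(Y_i) < v_i(X_i)$, there is a good $g'$ with $v_i(Y_i) + v_i(g') > v_i(X_i)$. Applied to agent $j$'s bundle \emph{as valued by $i$} (which requires running the corollary with the appropriate valuation/row, since $\mathcal{H}_2$ contains the nested prefix sets $\mathcal{S}_i$ for \emph{every} agent), it controls how far $v_i(Y_j)$ can exceed $v_i(X_j)$: if $v_i(Y_j) > v_i(X_j)$, there is a good $g \in Y_j$ with $x_{ig} < 1$ and $v_i(Y_j) - v_i(g) < v_i(X_j)$. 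The good $g$ removed from $A_j$ in the $\WEF(1,1)$ inequality should be precisely this witness good, so that $w_i \cdot (v_i(Y_j) - v_i(g)) < w_i \cdot v_i(X_j)$.

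The main obstacle — and the reason $\WEF(1,1)$ rather than $\WEF1$ is the right target — is that the witnessing good $g$ must simultaneously serve the ``$+y \cdot v_i(g)$'' term on the left (added to $A_i$) and the ``$-x \cdot v_i(g)$'' term on the right (removed from $A_j$), and these are a priori \emph{different} goods coming from the two separate applications of the corollary. The key structural insight to resolve this is that the $\UG$ bihierarchy is built from $i$'s \emph{own} preference ordering: because $\mathcal{H}_2$ constrains prefixes of the goods sorted by $v_i$, the good $g \in Y_j$ that $i$ would most like to pull out of $A_j$ and the good $i$ is ``missing'' from $Y_i$ lie in overlapping high-value positions of $i$'s order. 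I would argue that the single good $g \in Y_j$ selected above — which has $x_{ig} < 1$, so $i$ values it highly and was partially eating it in $\XDSE$ — can be added to $A_i$ to cover the left-hand deficit as well, letting one good play both roles. Chaining the two bounds then yields $w_j \cdot (v_i(Y_i) + v_i(g)) \geq w_j \cdot v_i(X_i) \geq w_i \cdot v_i(X_j) \geq w_i \cdot (v_i(Y_j) - v_i(g))$, closing the argument; the cases where no rounding error occurs on one side (i.e.\ $v_i(Y_i) \geq v_i(X_i)$ or $v_i(Y_j) \leq v_i(X_j)$) are easier and handled by dropping the corresponding correction term.
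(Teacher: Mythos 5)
There is a genuine gap at the heart of your plan: \Cref{cor:UGupToOneGood} cannot be applied to bound $v_i(Y_j)$ against $v_i(X_j)$. The constraints in $\mathcal{H}_2$ pair the prefix family $\mathcal{S}_j$ only with \emph{row} $j$, and $\mathcal{S}_j$ consists of prefixes of the goods sorted by $v_j$, not by $v_i$. So the decomposition controls $v_j(Y_j)$ versus $v_j(X_j)$, but places no ``up to one good'' restriction on how much $v_i(Y_j)$ may exceed $v_i(X_j)$. Your parenthetical justification (``$\mathcal{H}_2$ contains $\mathcal{S}_i$ for every agent'') misreads the structure: those sets exist for every agent, but each is attached to that agent's own row. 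Concretely, if $j$'s ordering alternates goods that $i$ values highly with goods that $i$ values at zero, and $j$ holds fraction $1/2$ of each, then $j$'s prefix constraints allow $Y_j$ to consist entirely of the goods $i$ values highly, making $v_i(Y_j)$ exceed $v_i(X_j)$ by far more than the value of any single good. A second, related gap is the step where one good $g \in Y_j$ is made to ``play both roles'': the good witnessing $i$'s deficit $v_i(Y_i)+v_i(g')>v_i(X_i)$ comes from part 1 of the corollary and need not lie in $Y_j$ nor satisfy $v_i(g)\geq v_i(g')$; this merging is precisely the crux of $\WEF(1,1)$ and is asserted rather than proved. Note that in the paper's group-fairness result (\Cref{lemma:expostByUtilityGuarantee}), exactly this translation between $v_j$- and $v_i$-valuations of $Y_j$ is what requires the competitive-equilibrium inequality (\Cref{cor:usefulIneq}) available for MWN allocations, and even then one only obtains the weaker $\WEF_1^1$; no analogous inequality is available for $\XDSE$.

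The paper's actual proof avoids fractional-versus-integral value comparisons entirely. It defines, for the fixed allocation $Y$, a \emph{stopping time} for each good, shows that sorting the goods by stopping times yields a picking sequence $\pi$ whose outcome is exactly $Y$, and then verifies the combinatorial characterization of $\WEF(x,y)$ picking sequences (\Cref{thm:pickingWEFxy}) with $x=y=1$, using the bound $s(g_k^i)\in\bigl(\frac{k-1}{w_i},\frac{k}{w_i}\bigr]$ from \Cref{lemma:stoppingTimeBounds}. If you want to salvage your approach, you would need a structural property tying $i$'s valuation of the goods in $Y_j$ to the run of \DSE; that is essentially what the stopping-time machinery provides, and it is not recoverable from \Cref{cor:UGupToOneGood} alone.
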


To show the theorem we need some preliminary notions.

\paragraph{Goods Eaten by $i$ at Time $t$.}
Recall that \DSE\ runs for $m$ units of time. Every agent $i$ exactly eats a total mass of $w_i\cdot m$ of $\goods$ during \DSE. Let $g_1, \dots, g_m$ be the ordering of goods according to $v_i$. We define $\Eaten(i, t)= \set{g_1, \dots, g_\ell} = G_\ell$, where $g_{\ell}$ is either a good that agent $i$ just finished to consume (i.e., $t$ is the eating time of $g_{\ell}$ and agent $i$ was consuming it) or agent $i$ at time $t$ is eating the good $g_{\ell+1}$, which has not been finished yet. Consequently, by time $t$, agent $i$ may have contributed only to the consumption of goods in $G_\ell$. In particular, all goods in $G_\ell$ have been entirely consumed (by $i$ or others), since otherwise $i$ would not start eating $g_{\ell+1}$. 

Recall that $w_i$ is the speed of $i$. At time $t = \frac{k}{w_i}$ agent $i$ ate a total mass $k$ of goods. With the next lemma, we show that the $\UG$-decomposition guarantees agent $i$ deterministically receives at most $k$ goods from the ones eaten by time $\frac{k}{w_i}$.

\begin{lemma}\label{lemma:decompositionImplication}
Given any deterministic allocation $Y$ in the $\UG$-decomposition of $\XDSE$, for every $i\in \agents$ and $k=1, \dots, \floor{w_i\cdot m}$, $ \modulus{Y_i \cap \Eaten(i, \frac{k}{w_i})}\leq k$. Furthermore, $\floor{w_i\cdot m}\leq \modulus{Y_i}\leq \ceil{w_i\cdot m}$.
\end{lemma}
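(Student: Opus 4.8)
The plan is to read off both inequalities directly from the quota constraints that the hierarchy $\mathcal{H}_2$ imposes on every $Y$ in the $\UG$-decomposition. The key point is that $\Eaten(i,\frac{k}{w_i})$ is by definition a prefix $G_\ell$ of agent $i$'s preference ordering, hence $G_\ell \in \mathcal{S}_i$ and $(i,G_\ell) \in \mathcal{H}_2$. By \Cref{thm:budishDecomp} and the choice of quotas, every $Y$ satisfies $\floor{x^{\DSE}_{(i,G_\ell)}} \le \sum_{g\in G_\ell} y_{ig} \le \ceil{x^{\DSE}_{(i,G_\ell)}}$, where $x^{\DSE}_{(i,G_\ell)} = \sum_{g\in G_\ell} x^{\DSE}_{ig}$. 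So both claims reduce to evaluating $x^{\DSE}_{(i,S)}$ on the appropriate prefix $S$.

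For the first claim, fix $i$ and $k\in\set{1,\dots,\floor{w_i m}}$ and write $\Eaten(i,\frac{k}{w_i}) = G_\ell$; since $0 < \frac{k}{w_i} \le m$ we have $1\le \ell\le m$, so $G_\ell\in\mathcal{S}_i$. Agent $i$ eats continuously at speed $w_i$, so by time $t=\frac{k}{w_i}$ she has consumed a total mass of exactly $k$, and this mass lies entirely within $G_\ell$ together with at most the single good she is eating at time $t$, which by definition of $\Eaten$ is not in $G_\ell$. Moreover every good of $G_\ell$ is entirely consumed by time $t$, so its final fraction $x^{\DSE}_{ig}$ equals the amount $i$ ate from it by time $t$. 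Consequently $x^{\DSE}_{(i,G_\ell)} \le k$, and the upper quota gives $\modulus{Y_i\cap G_\ell} = \sum_{g\in G_\ell} y_{ig} \le \ceil{x^{\DSE}_{(i,G_\ell)}} \le k$, since $k$ is an integer.

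For the remaining bound I apply the same argument to the full prefix $G_m = \goods \in \mathcal{S}_i$, recalling that $\sum_{g\in\goods} x^{\DSE}_{ig} = w_i m$; the constraint $(i,\goods)\in\mathcal{H}_2$ then forces $\floor{w_i m} \le \modulus{Y_i} \le \ceil{w_i m}$. The only delicate step is the inequality $x^{\DSE}_{(i,G_\ell)} \le k$: one must argue that $i$'s mass on $G_\ell$ never increases after time $t$ (the goods are gone) and that the entire ``overshoot'' of her consumed mass beyond $x^{\DSE}_{(i,G_\ell)}$ sits in the one good currently being eaten, which lies outside $G_\ell$. This is exactly where the definition of $\Eaten$ --- namely that all of $G_\ell$ is consumed by time $t$ while the current good is $g_{\ell+1}\notin G_\ell$ --- is essential; everything else is a direct read-off of the bihierarchy quotas.
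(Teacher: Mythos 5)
Your proof is correct and follows essentially the same route as the paper's: identify $\Eaten(i,\frac{k}{w_i})$ as a prefix $G_\ell\in\mathcal{S}_i$, apply the $\UG$-decomposition quota $\sum_{g\in G_\ell} y_{ig}\le\ceil{x^{\DSE}_{(i,G_\ell)}}$, bound $x^{\DSE}_{(i,G_\ell)}\le k$ via the facts that all of $G_\ell$ is consumed by time $\frac{k}{w_i}$ while agent $i$ has eaten total mass exactly $k$ by then, and use the constraint $(i,\goods)\in\UG$ for the cardinality bounds. Your extra remarks (integrality of $k$ when taking the ceiling, and that the overshoot sits in the single current good outside $G_\ell$) simply make explicit details the paper leaves implicit.
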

\begin{proof}
By definition, $\Eaten(i, \frac{k}{w_i})= G_\ell$, the $\ell$ most preferred goods of $i$, for some $\ell$. Thus, by the time $\frac{k}{w_i}$, agent $i$ only ate goods in $G_\ell$ and possibly is currently eating the next less preferred good.  Moreover,  goods are eaten by $i$ in the same ordering we used to build the collection $\mathcal{S}_i$ in the definition of $\UG$ implying $(i, G_\ell) \in \UG$. Since $\modulus{Y_i \cap \Eaten(i, \frac{k}{w_i})} = \sum_{g\in G_\ell} y_{ig} $, the $\UG$-decomposition properties imply $\sum_{g\in G_\ell} y_{ig} \leq \big\lceil\sum_{g\in G_\ell} x_{ig}\big\rceil $.
This last is upper-bounded by $k$ because of these two simple observations: $g_\ell$ is fully consumed by the time $ \frac{k}{w_i}$, and  at that time agent $i$ ate $k$ units of goods. The first claim follows.

The second claim immediately follows by the $\UG$-decomposition properties, since $(i, \goods)\in \UG$. 
\end{proof}

Given any deterministic allocation $Y$ in the $\UG$-decomposition, consider agent $i$ and sort the goods in $Y_i$ in a non-increasing manner with respect to $v_i$: $Y_i=\set{g^i_{1}, \dots, g^i_{h_i}}$ and $v_i(g^i_{1}) \geq  \dots \geq  v_i(g^i_{h_i})$.
By \Cref{lemma:decompositionImplication}, we see $h_i= \floor{w_i\cdot m}$ or $h_i= \ceil{w_i\cdot m}$. 

\paragraph{Stopping vs.\ Eating Time.}
Given any deterministic allocation $Y$ in the $\UG$-decomposition of $X$, for each $i\in \agents$ and $k\in [h_i]$, we define the stopping time by $s(g_k^i)= \min\set{t(g_k^i), \frac{k}{w_i}}$. Here $t(g_k^i)$ is the time when $g_k^i$ has been entirely consumed during the \DSE, i.e., the eating time of $g_k^i$. Note that $s(g_k^i)$, differently from $t(g_k^i)$, depends on $Y$: Indeed, in $Y_i$ good $g_k^i$ is the $k$-th most preferred good. However, if the eating time is greater than $\frac{k}{w_i}$, this good might appear as $(k+1)$-th most preferred good in another deterministic allocation of the decomposition. For convenience, we omit $Y$ in the notation since we only discuss stopping times of single allocations. Let us show a couple of useful properties of stopping times.
\begin{lemma}\label{lemma:stoppingTimeBounds}
Given any deterministic allocation $Y$ in the $\UG$-decomposition of $\XDSE$, let $g_k^i$ be the $k$-th most preferred good in $Y_i$, it holds $s(g_k^i)\in \left(\frac{k-1}{w_i},\frac{k}{w_i}\right]$.
\end{lemma}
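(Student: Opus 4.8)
The statement packages two bounds, which I would handle separately. The upper bound $s(g_k^i)\le \frac{k}{w_i}$ is immediate, since by definition $s(g_k^i)=\min\set{t(g_k^i),\frac{k}{w_i}}$. All the content lies in the strict lower bound $s(g_k^i)>\frac{k-1}{w_i}$. Because $w_i>0$ gives $\frac{k}{w_i}>\frac{k-1}{w_i}$ for free, it suffices to prove that the eating time itself satisfies $t(g_k^i)>\frac{k-1}{w_i}$: once both arguments of the minimum exceed $\frac{k-1}{w_i}$, so does their minimum.

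For $k=1$ this reduces to $t(g_1^i)>0$, which holds trivially because \DSE\ consumes any good over a time interval of positive length. For $k\ge 2$ my plan is to argue by contradiction, combining the definition of $\Eaten$ with the counting bound of \Cref{lemma:decompositionImplication}. So suppose $t(g_k^i)\le \frac{k-1}{w_i}$. The key claim I would establish is that under this assumption all of $g_1^i,\dots,g_k^i$ belong to $\Eaten(i,\frac{k-1}{w_i})$.

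To prove the claim, note first that $g_k^i\in Y_i$ forces $x^{\DSE}_{i g_k^i}>0$, since the $\UG$ constraints guarantee $y_{ig}=0$ whenever $x_{ig}=0$; hence agent $i$ actually ate a positive fraction of $g_k^i$ during \DSE. Since agent $i$ eats goods in decreasing order of preference and abandons a good only once it is fully consumed, and since $g_k^i$ is entirely consumed by time $t(g_k^i)\le \frac{k-1}{w_i}$, by time $\frac{k-1}{w_i}$ agent $i$ has already completed its contribution to $g_k^i$ and moved on. By the definition of $\Eaten(i,\cdot)$ as the prefix $G_\ell$ of $i$'s preference order consisting of goods fully consumed by that time, this places $g_k^i\in \Eaten(i,\frac{k-1}{w_i})$. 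As $G_\ell$ is a prefix in $i$'s (tie-broken) preference order and $g_1^i,\dots,g_{k-1}^i$ all appear before $g_k^i$ in that order, they too lie in $G_\ell$, proving the claim. Consequently $\modulus{Y_i\cap \Eaten(i,\frac{k-1}{w_i})}\ge k$, whereas \Cref{lemma:decompositionImplication} applied at index $k-1$ yields $\modulus{Y_i\cap \Eaten(i,\frac{k-1}{w_i})}\le k-1$, a contradiction. Thus $t(g_k^i)>\frac{k-1}{w_i}$, completing the lower bound.

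The step I expect to require the most care is the claim $g_k^i\in \Eaten(i,\frac{k-1}{w_i})$: it rests on the structural fact that $i$ contributes to a good only strictly before that good's eating time, together with the preference-order ``prefix'' structure of \DSE\ baked into the definition of $\Eaten$. A secondary point to verify is that \Cref{lemma:decompositionImplication} is genuinely applicable at index $k-1$, i.e.\ that $k-1\le \floor{w_i\cdot m}$. This follows from $k\le h_i\le \ceil{w_i\cdot m}$, which forces $k-1\le \floor{w_i\cdot m}$ in both the integer and non-integer cases of $w_i\cdot m$.
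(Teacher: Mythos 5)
Your proposal is correct and follows essentially the same route as the paper: assume $t(g_k^i)\le \frac{k-1}{w_i}$ for contradiction, deduce that $g_1^i,\dots,g_k^i$ all lie in $\Eaten\left(i,\frac{k-1}{w_i}\right)$, and contradict the counting bound of \Cref{lemma:decompositionImplication} applied at index $k-1$. Your additional checks (the $k=1$ base case, the use of $x_{ig_k^i}>0$ via the $\UG$ constraints, and the verification that $k-1\le \floor{w_i\cdot m}$) are details the paper leaves implicit, and they are all sound.
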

\begin{proof}
	By definition, $s(g_k^i)= \min\set{t(g_k^i), \frac{k}{w_i}} \leq \frac{k}{w_i}$. 
	For contradiction, suppose $t(g_k^i) \leq  \frac{k-1}{w_i}$. Then, $g_k^i \in Y_i \cap \Eaten\left(i, \frac{k-1}{w_i}\right)$. Notice that $t(g_1^i)\leq \dots \leq t(g_k^i)$, by definition of \DSE, and therefore $g_h^i \in Y_i \cap \Eaten\left(i, \frac{k-1}{w_i}\right)$, for each $h=1,\dots, k$. In conclusion, $\left| Y_i \cap \Eaten\left(i, \frac{k-1}{w_i}\right)\right|\geq k$ which is a contradiction by \Cref{lemma:decompositionImplication}, and hence $t(g_k^i) >  \frac{k-1}{w_i}$.
\end{proof}
For the eating time $t(g_k^i)$ the same lower bound holds, but we can only upper bound it by $\frac{k+1}{w_i}$. This difference will be crucial in the proof of~\cref{thm:expostWEF11} and requires the definition of stopping times.

\begin{lemma}\label{lemma:stoppingTimeRelations}
Given any deterministic allocation $Y$ in the $\UG$-decomposition of $\XDSE$, let $g_k^i$ be the $k$-th most preferred good in $Y_i$. For every good $g$ coming earlier in $i$'s ordering of goods, it holds that $s(g) < s(g_k^i)$.
\end{lemma}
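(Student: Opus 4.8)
The plan is to prove Lemma~\ref{lemma:stoppingTimeRelations} by relating stopping times to the eating order of goods in \DSE. First I would recall that $g$ comes earlier than $g_k^i$ in $i$'s preference ordering, so if $g_k^i$ is the $k$-th most preferred good in $Y_i$, then $g$ sits among the first $k-1$ positions of $i$'s global ordering $g_1, \dots, g_m$. The key structural fact from \DSE\ is that eating times respect the preference order: a more preferred good is always consumed no later than a less preferred one, i.e.\ $t(g) \le t(g_k^i)$ whenever $g$ precedes $g_k^i$ in $i$'s ordering. This is because agent $i$ only starts eating a good once all goods she prefers have been completely eaten.

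The main work is to upgrade the weak inequality on eating times into a \emph{strict} inequality on stopping times. I would split into cases according to which term achieves the minimum in $s(g_k^i) = \min\{t(g_k^i), \frac{k}{w_i}\}$. Let $g$ occupy position $p < k$ among the goods of $Y_i$ that precede $g_k^i$ (by \Cref{lemma:decompositionImplication}, $Y_i$ contains at most $k-1$ goods weakly preferred to $g_k^i$ other than $g_k^i$ itself, so any earlier good $g \in Y_i$ has index $p \le k-1$). For such $g \in Y_i$, \Cref{lemma:stoppingTimeBounds} gives $s(g) \le \frac{p}{w_i} \le \frac{k-1}{w_i} < \frac{k-1+1}{w_i}$, and combined with the same lemma's lower bound $s(g_k^i) > \frac{k-1}{w_i}$, we obtain $s(g) \le \frac{k-1}{w_i} < s(g_k^i)$, giving the claim directly for goods inside $Y_i$.

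For a good $g \notin Y_i$ that still precedes $g_k^i$ in $i$'s ordering, the index-based argument does not apply since $g$ has no position in $Y_i$, so here I would argue through eating times. Since $g$ is more preferred than $g_k^i$, we have $t(g) \le t(g_k^i)$ from the monotonicity of eating times; I would strengthen this to $t(g) < s(g_k^i)$ by observing that if $t(g) = t(g_k^i)$ then fixed tie-breaking forces $g$ to be eaten strictly before $g_k^i$, and more importantly that $s(g_k^i) > \frac{k-1}{w_i}$ while $g$, having been fully consumed by the time $i$ has eaten $k-1$ units (otherwise $i$ would not have reached $g_k^i$), satisfies $t(g) \le \frac{k-1}{w_i}$. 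Hence $s(g) \le t(g) \le \frac{k-1}{w_i} < s(g_k^i)$.

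The hard part will be handling the boundary case where eating times coincide and making the strictness airtight: the subtlety is precisely that $t(g_k^i)$ may exceed $\frac{k}{w_i}$ (which is why stopping times were introduced), so one must be careful to compare against $s(g_k^i)$ rather than $t(g_k^i)$ and to invoke \Cref{lemma:stoppingTimeBounds} to pin $s(g_k^i)$ strictly above $\frac{k-1}{w_i}$. Once the two cases ($g \in Y_i$ versus $g \notin Y_i$) are both bounded above by $\frac{k-1}{w_i}$, the strict lower bound on $s(g_k^i)$ closes the argument uniformly.
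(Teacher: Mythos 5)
Your Case 1 (goods $g \in Y_i$) is correct: combining the upper bound $s(g_p^i) \le \frac{p}{w_i} \le \frac{k-1}{w_i}$ with the lower bound $s(g_k^i) > \frac{k-1}{w_i}$ from \Cref{lemma:stoppingTimeBounds} settles that case. But Case 2, which is where the real content of the lemma lies, rests on a false claim. You assert that any good $g \notin Y_i$ preceding $g_k^i$ in $i$'s ordering has been "fully consumed by the time $i$ has eaten $k-1$ units", i.e.\ $t(g) \le \frac{k-1}{w_i}$. This is wrong: between her $(k-1)$-th and $k$-th goods of $Y_i$, agent $i$ may spend time eating goods that end up in \emph{other} agents' bundles in this particular $Y$, so the mass she has eaten when she starts $g_k^i$ can be anything strictly below $k$ units, not at most $k-1$. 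The paper's own \Cref{example:decomposition} refutes your bound: in $Y^3$ we have $Y_2=\set{g_3}$, so $g_1^2 = g_3$ with $k=1$, and $g_2 \notin Y_2$ precedes $g_3$ in agent $2$'s ordering; your claim would force $t(g_2) \le \frac{k-1}{w_2} = 0$, yet $t(g_2)=2$. (The lemma itself holds there: $s(g_2)=\min\set{2,\,6}=2 < 3 = \min\set{t(g_3),\,1/w_2} = s(g_3)$.) Indeed, for $k=1$ your bound demands an eating time of at most $0$, which is impossible whenever such a good $g$ exists, so this is not an artifact of one example.

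The repair is to bound $t(g)$ not by $\frac{k-1}{w_i}$ but by $t_0$, the time at which agent $i$ \emph{starts} eating $g_k^i$, which is what the paper does. Since $i$ always eats her most preferred available good, every $g$ she prefers to $g_k^i$ is fully consumed by $t_0$, giving $s(g) \le t(g) \le t_0$; it then remains to show $t_0 < \min\set{t(g_k^i), \frac{k}{w_i}} = s(g_k^i)$. Two ingredients are needed, both missing from your write-up. First, $x_{ig_k^i}>0$: otherwise the singleton constraint $(i,g_k^i)\in\mathcal{H}_2$ would force $y_{ig_k^i}=0$, contradicting $g_k^i\in Y_i$; this gives $t_0 < t(g_k^i)$ and is also what legitimizes your "monotonicity of eating times", which is false for goods that $i$ never touches (a good $i$ prefers less can finish earlier if other agents devour it). Second, $t_0 < \frac{k}{w_i}$: letting $G_\ell$ be the prefix of $i$'s ordering strictly before $g_k^i$, one has $w_i t_0 = \sum_{g'\in G_\ell} x_{ig'}$, and the floor quota for $(i,G_\ell)\in \mathcal{H}_2$ yields $\floor{\sum_{g'\in G_\ell} x_{ig'}} \le \modulus{Y_i \cap G_\ell} = k-1$, hence $w_i t_0 < k$. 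With these two facts the argument closes uniformly for all preceding goods, and your separate Case 1 becomes unnecessary.
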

\begin{proof}
	The claim follows by the definition of stopping time and the properties of \DSE.
	Indeed, by the definition of stopping time $s(g)\leq t(g)$, and $ t(g) < \min\set{t(g_k^i), \frac{k}{w_i}}=  s(g_k^i)$. 
	The second inequality holds because at time $s(g_k^i)$ agent $i$ is eating or finishes to eat $g_k^i$, and $g$ must have been eaten before $i$ starts eating $g_k^i$. Further, the inequality is strict since agent $i$ ate a positive fraction of $g_k^i$ (that is, $x_{ig_k^i} > 0$);  otherwise, since $(i,g_k^i)\in \mathcal{H}_2$, $x_{ig_k^i}= 0$ would imply $y_{ig_k^i}=0$ and, hence, $g_k^i\not\in Y_i$.
\end{proof}

We are now ready to show \Cref{thm:expostWEF11}.

\begin{proof}[Proof of \Cref{thm:expostWEF11}]
Let $Y$ be any deterministic allocation in the $\UG$-decomposition of $\XDSE$.
The proof proceeds as follows: We first generate a picking sequence $\pi$, then show that $Y$ is the output of such a picking sequence, and finally prove that $\pi$ satisfies \Cref{thm:pickingWEFxy}, for $x=y=1$. This shows that $Y$ is $\WEF(1,1)$.

\smallskip
 
\noindent \textit{Defining $\pi$. $\;$} We sort the goods $\goods$ in a non-increasing order of stopping times $s_1, \dots, s_m$ (defined according to $Y$). If $g\in Y_i$ is the $h$-th good in this ordering, then $\pi(h)=i$.

\smallskip

\noindent \textit{$Y$ is the result of $\pi$. $\;$} Assume $i$ is the $h$-th agent in $\pi$. Assume that $\pi(h)=i$ is the $k$-th occurrence of $i$ in $\pi$. We show that for each $h\in[m]$, the most preferred available good for $i$ is exactly $g_k^i$.
Let us proceed by induction on $h$. 

For $h=1$, clearly, $k=1$. By \Cref{lemma:stoppingTimeRelations}, $g_1^i$ must be the most preferred good of $i$, otherwise we contradict the fact that $s_1 = s(g_1^i)$ is the minimum stopping time. At this point no good has been assigned, so $i$ selects $g_1^i$.

Assume the statement is true until the $h$-th component of $\pi$. We show it is true for $h+1 \leq m$. Suppose a good $g$ coming before $g_k^i$, in $i$'s ordering, is still available. By \Cref{lemma:stoppingTimeRelations}, there exists $h'$ s.t. $s_{h'}=s(g)<s(g_k^i)$ with $h'\leq h$. By the inductive hypothesis, $g$ must have been assigned to $\pi(h')$. On the other hand, $g_k^i$ is still available, otherwise there exists $h'\leq h$, such that $\pi(h')$ picked $g_k^i$ during the $h'$-th round -- a contradiction with the inductive hypothesis.

\smallskip

\noindent \textit{$\pi$ satisfies \Cref{thm:pickingWEFxy}. $\;$}
We now show that $\pi$ satisfies $\WEF(1,1)$. Consider any prefix of $\pi$ and any pair of agents $i,j$. Let us denote by $t_i$ (resp.\ $t_j$) the number of picks of agent $i$ (resp.\ $j$) in the considered prefix.  
Let $s_j$ and $s_i$ be the stopping times of the good selected by $j$ at her $t_j$-th pick and the stopping time of the good selected by $i$ at her $(t_i+1)$-th pick, respectively. If $i$ has no $(t_i+1)$-th pick, we set $s_i = m < \frac{t_i+1}{w_i}$. Within the considered  prefix of $\pi$, agent $j$ already made its $t_j$-th pick but $i$ didn't make its $(t_i+1)$-th pick. Now by definition of $\pi$, $s_j\leq s_i$. By~\Cref{lemma:stoppingTimeBounds}, $s_j >\frac{t_j -1}{w_j} $ and $s_i \leq \frac{t_i+1}{w_i}$. We finally get $ \frac{t_j -1}{w_j}< \frac{t_i+1}{w_i}$. This shows that the hypothesis of \Cref{thm:pickingWEFxy} is fulfilled for $x=y=1$.
%
\end{proof}

Note that if we had chosen eating rather than stopping times for the picking sequence, we could only deduce $\frac{t_j -1}{w_j}< \frac{t_i+2}{w_i}$ which is not sufficient to show $\WEF(1,1)$. 

As $\XDSE$ is (ex-ante) $\WEF$, it is also $\WPROP$. By ex-ante $\WPROP$ and \Cref{cor:UGupToOneGood}, the following holds.
\begin{proposition}
	\label{prop:expostWPROP1}
	Every deterministic allocation $Y$ in the $\UG$-decomposition of $\XDSE$ is $\WPROP1$. 
\end{proposition}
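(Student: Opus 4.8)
The plan is to combine the ex-ante weighted proportionality of $\XDSE$ with the utility-guarantee property of the $\UG$-decomposition recorded in \Cref{cor:UGupToOneGood}. First I would note that, by \Cref{prop:exanteWSD} and \Cref{prop:WSEEFimpliesWEFexante}, the fractional allocation $\XDSE$ is ex-ante $\WEF$, and since $\WEF \Rightarrow \WPROP$ for additive valuations, we obtain $v_i(\XDSE_i) \geq w_i\cdot v_i(\goods)$ for every agent $i$, where $v_i(\XDSE_i) = \sum_{g\in\goods} x^{\DSE}_{ig}\, v_i(g)$. Because $v_i$ is additive, this fractional value coincides with the expected realized value $\expectation{v_i(Y_i)}$ taken over the decomposition (the decomposition preserves the marginals $x^{\DSE}_{ig}$), so the hypothesis $v_i(X_i)=\expectation{v_i(X_i)}$ of \Cref{cor:UGupToOneGood} is met.

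Next I would fix an agent $i$ and an arbitrary deterministic allocation $Y$ in the $\UG$-decomposition, and split into two cases according to how $Y_i$ compares to $i$'s fractional value. If $v_i(Y_i) \geq v_i(\XDSE_i)$, then $v_i(Y_i)\geq w_i\cdot v_i(\goods)$ already holds, so $Y$ satisfies (weighted) proportionality for $i$, which immediately yields $\WPROP1$ for $i$. If instead $v_i(Y_i) < v_i(\XDSE_i)$, I would invoke part (1) of \Cref{cor:UGupToOneGood}: there exists a good $g\notin Y_i$ (with $x^{\DSE}_{ig}>0$) such that $v_i(Y_i)+v_i(g) > v_i(\XDSE_i) \geq w_i\cdot v_i(\goods)$. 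By additivity, $v_i(Y_i\cup\set{g}) = v_i(Y_i)+v_i(g) \geq w_i\cdot v_i(\goods)$ with $g\in\goods\setminus Y_i$, which is precisely the $\WPROP1$ condition for $i$.

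Since this argument applies verbatim to every agent $i$, I conclude that $Y$ is $\WPROP1$, which finishes the proof. The only obstacle here is conceptual rather than computational: one must recognize that the ``utility guarantee up to one good more'' from \Cref{cor:UGupToOneGood} hands us exactly a good whose addition pushes $i$'s realized value strictly above the fractional value $v_i(\XDSE_i)$, and that this fractional value is already at least the proportional threshold $w_i\cdot v_i(\goods)$ by ex-ante $\WPROP$. No genuine calculation is required beyond the case split and a single invocation of the corollary; in particular, unlike \Cref{thm:expostWEF11}, there is no need for the refined stopping-time machinery, since $\WPROP1$ is a threshold (rather than comparison) guarantee and follows directly from the per-agent utility guarantee.
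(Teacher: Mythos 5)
Your proof is correct and follows essentially the same route as the paper's: establish that $\XDSE$ is ex-ante $\WPROP$ (via $\WEF$), then apply part (1) of \Cref{cor:UGupToOneGood} to obtain a good $g\notin Y_i$ with $v_i(Y_i\cup\set{g}) \geq w_i\cdot v_i(\goods)$. Your explicit case split on whether $v_i(Y_i) \geq v_i(\XDSE_i)$ is a slightly more careful rendering of the same argument, but not a different one.
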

\begin{proof}
The fractional allocation $\XDSE$ is $\WEF$, and hence $\WPROP$. Therefore, $v_i(X_i)\geq w_i\cdot v_i(\goods)$. By \Cref{cor:UGupToOneGood}, for any $Y$ in the $\UG$-decomposition, $v_i(Y_i)\geq v_i(X_i)- v_i(g)$, for some $g\in \goods\setminus Y_i$. This implies $v_i(Y_i \cup \set{g})\geq w_i\cdot v_i(\goods)$, and $\WPROP1$ follows.
\end{proof}
In conclusion, we proved that the $\UG$-decomposition of $\XDSE$ is a lottery achieving ex-ante $\WSD$-$\EF$, and therefore ex-ante $\WEF$, and ex-post $\WEF(1,1)+\WPROP1$. As a consequence of \Cref{thm:budishDecomp}, our lottery has polynomial support and the computation requires strongly polynomial time.

While our guarantee is weaker than the ex-post $\EF1$ for equal entitlements, we show that our lottery is, in a sense, best possible in terms of ex-post guarantees. Indeed, we prove that no stronger ex-post envy notion is compatible with ex-ante $\WEF$.
\begin{proposition}\label{prop:negativeWEF}
  For every pair $x,y \in [0,1]$ such that $x+y<2$, ex-ante $\WEF$ is incompatible with ex-post $\WEF(x,y)$.
\end{proposition}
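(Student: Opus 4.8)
The plan is to exhibit, for every pair $(x,y)$ with $x+y<2$, a two-agent additive instance on which no lottery is simultaneously ex-ante $\WEF$ and ex-post $\WEF(x,y)$. I would use $\agents=\set{1,2}$, identical goods with $v_1(g)=v_2(g)=1$ for all $g\in\goods$, and strongly skewed entitlements $w_1=1/N$, $w_2=(N-1)/N$ with $N$ large. The first step is to observe that on identical goods ex-ante $\WEF$ pins down expected bundle sizes: writing $a_i$ for the (random) number of goods agent $i$ receives, the ex-ante inequalities $w_2\,\expectation{a_1}\ge w_1\,\expectation{a_2}$ and $w_1\,\expectation{a_2}\ge w_2\,\expectation{a_1}$ together with $a_1+a_2=m$ force $\expectation{a_1}=w_1 m$ and $\expectation{a_2}=w_2 m$. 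Hence, whenever $0<w_1 m<1$, the lottery must place positive probability on a realization with $a_1=0$ (otherwise $\expectation{a_1}\ge1$); and when $m=1$ it must likewise realize $a_1=1$ with positive probability (otherwise agent $1$'s ex-ante inequality reads $0\ge w_1>0$).

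Since $x,y\le1$ and $x+y<2$, at least one of $x<1$, $y<1$ holds, and I would split accordingly. In the regime $y<1$, take $m=N-1$ identical goods, so $w_1 m=(N-1)/N\in(0,1)$ and the realization $a_1=0$, $a_2=N-1$ occurs with positive probability. There agent $1$ must $\WEF(x,y)$-envy agent $2$: the relevant condition is $w_2\cdot y\ge w_1\,(m-x)$, i.e.\ $(N-1)\,y\ge (N-1)-x$, which fails for all large $N$ precisely because $y<1$. In the complementary regime $x<1$, take a single good $m=1$; ex-ante $\WEF$ forces agent $1$ to receive it with positive probability, and in that realization agent $2$ must $\WEF(x,y)$-envy agent $1$: the condition $w_1\cdot y\ge w_2\,(1-x)$ becomes $y\ge (N-1)(1-x)$, which fails for all large $N$ since $x<1$. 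In both regimes every ex-ante $\WEF$ lottery has a support allocation violating $\WEF(x,y)$, proving incompatibility.

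The conceptual reason $x+y<2$ is the exact threshold is that on identical goods ex-post $\WEF(x,y)$ confines every realized $a_1$ to an interval of width exactly $(w_1+w_2)(x+y)=x+y$ containing $w_1m$, while ex-ante $\WEF$ fixes the mean at the non-integer $w_1m$ and thereby forces realizations straddling it; a skewed weight choice then pushes one straddling realization out of the width-$(x+y)$ window as soon as $x+y<2$. The step I expect to be the main obstacle is the corner $x=1$ (symmetrically $y=1$): there the single-good instance degenerates, since removing the entire envied good ($x=1$) erases the only good and leaves no residual envy. This is precisely why the $y<1$ regime is run on a large pile of $m=N-1$ goods rather than one: even after a full good is deleted from the envied agent, a pile of value $m-x\approx N$ survives to sustain the violation, whereas the compensation $w_2y\le y<1$ granted to the envier cannot close the gap. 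Finally I would note that all valuations used are additive, so the impossibility lands inside the model of \Cref{thm:mainAdditive}, making the tightness genuine.
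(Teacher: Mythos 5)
Your proof is correct: the case split ($y<1$ or $x<1$) covers all pairs with $x+y<2$, the claim that ex-ante $\WEF$ pins the expected bundle sizes to $w_im$ is right, the pigeonhole step forcing an empty-bundle realization is sound, and both arithmetic violations check out. The core mechanism is the same as the paper's: two agents, identical unit-value goods, skewed entitlements, ex-ante $\WEF$ forces a fractional expected bundle, pigeonhole puts an all-or-nothing realization in the support, and the weight skew makes that realization fail $\WEF(x,y)$. The execution differs, though. The paper needs no case split and no asymptotics: it fixes a single instance with exactly two goods and chooses $w_1$ in the explicit window $\left(\frac{y}{2+y-x},\frac{1}{2}\right)$, which is nonempty precisely because $x+y<2$, so the single realization $(\emptyset,\goods)$ handles every admissible $(x,y)$ at once; the corner $x=1$ causes no degeneracy there because the envied agent holds two goods, so deleting one still leaves a good's worth of envy. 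You instead resolve that corner with two instance families, $m=N-1$ goods for $y<1$ and $m=1$ for $x<1$, taking $N$ large; note that in your $x<1$ regime the violated envy runs in the opposite direction from the paper's (the high-entitlement agent with the empty bundle envies the low-entitlement agent holding the single good). The paper's construction buys uniformity and explicit constants; yours buys a more transparent explanation of why $x+y=2$ is the exact threshold, via your observation that ex-post $\WEF(x,y)$ confines the realized bundle size to a window of width exactly $x+y$ around the mean that ex-ante $\WEF$ fixes.
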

\begin{proof}
Consider a fair division instance $\instance=(\agents,\goods, v)$, with $\agents=\set{1,2}$ and $\goods=\set{g_1, g_2}$. Moreover, $v_i(g_1)= v_i(g_2)=1$, for $i=1,2$. Let us set $w_1\in \left(\frac{y}{2+y-x},\frac{1}{2} \right)$ and $w_2=1-w_1$. Observe that  $\frac{y}{2+y-x}<\frac{1}{2}$, since $x+y<2$. In any ex-ante $\WEF$ allocation agent $1$ receives in expectation less than one good. This means, the allocation $Y=(Y_1,Y_2)= (\emptyset, \goods)$  is in the support of any ex-ante $\WEF$ lottery. Therefore, since $w_1+w_2=1$, for each $g\in Y_2$,
\begin{align*}
 w_1\cdot \left(v_1(Y_2) -x\cdot v_1(g)\right) = w_1\cdot (2-x)  > \frac{y}{2+y-x} \cdot (2-x) 
> w_2\cdot  y  = w_2\cdot \left(v_1(Y_1) + y\cdot v_1(g)\right) \ .
\end{align*}
This proves $Y$ is not $\WEF(x,y)$.
\end{proof}

\paragraph{Remark: Equal Entitlements Case.}
Let us remark that for equal entitlements our approach also provides ex-ante $\EF$ and ex-post $\EF1$. The ex-ante property follows directly since $w_i=1/n$. For ex-post $\EF1$, similarly to~\cite{aziz20}, it is possible to show that any allocation $Y$ in the $\UG$-decomposition of the $\XDSE$ is the result of an RB picking sequence. In particular, this holds for the picking sequence defined in the proof of \cref{thm:expostWEF11}.


\subsection{Ex-ante $\WGF$ and Ex-post $\WPROP1$ +  $\WEF_1^1$}\label{sec:additiveGroupFair}

In this subsection, we generalize a result of Freeman et al.~\cite{freemanSV20} to entitlements. We follow the general argument and incorporate some technical extensions to allow for different agent weights.

For our purposes, we recall the weighted version of the well known Nash Welfare.
\begin{definition}[Weighted Nash Welfare]
	Given a fair division instance $\instance$, with entitlements $w_1, \dots, w_n$, and  an allocation $\allocation=(A_1, \dots, A_n)$,  the {\em weighted Nash welfare} of $\allocation$ is given by
	$
	\prod_{i=1}^n \left(v_i(A_i)\right)^{w_i} \ .
	$
\end{definition}

\begin{theorem}\label{thm:groupFair}
	For entitlements and additive valuations, we can compute in strongly polynomial time a lottery that is ex-ante $\WGF$ and ex-post $\WPROP1$ + $\WEF_1^1$.
\end{theorem}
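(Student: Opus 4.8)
The strategy is to construct a fractional allocation by maximizing the weighted Nash welfare over the fractional assignment polytope, and then apply the bihierarchy decomposition of \Cref{thm:budishDecomp} to obtain a lottery with the desired ex-post properties. The ex-ante guarantee and the ex-post guarantees are handled by two essentially separate arguments, glued together by the fact that the decomposition preserves the marginals.

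\textbf{Step 1: the ex-ante allocation.} First I would let $X^*$ be a fractional allocation maximizing the weighted Nash welfare $\prod_{i} v_i(X_i)^{w_i}$ over all complete fractional allocations. For additive valuations this is a concave maximization over a polytope and can be solved in strongly polynomial time (as in the unweighted case of Freeman et al.~\cite{freemanSV20}, via a convex program or an Eisenberg--Gale type formulation). The key claim is that $X^*$ is ex-ante $\WGF$. To prove this I would argue by contradiction: if some pair $S,T$ and reallocation $X'$ of $\cup_{j\in T}X_j$ to $S$ weakly Pareto-dominates $X$ after scaling by $w_S/w_T$, then one can use $X'$ to perturb $X^*$ towards a fractional allocation with strictly larger weighted Nash welfare. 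The weighted first-order (KKT) optimality conditions for $X^*$ — namely that each good is (fractionally) assigned only to agents maximizing the weighted ratio $w_i \cdot v_i(g)/v_i(X_i)$ — are exactly what rules out such an improving swap, just as in the unweighted argument, with the weights $w_i$ threaded through the marginal conditions. This is the step that requires the ``technical extensions'' the authors mention, since the weighted objective changes the gradient and hence the optimality characterization.

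\textbf{Step 2: the ex-post guarantees.} Next I would decompose $X^*$ using the $\UG$-decomposition, i.e.\ the utility-guarantee bihierarchy, so that \Cref{cor:UGupToOneGood} applies to every deterministic $Y$ in the support. For ex-post $\WPROP1$ the argument is identical to \Cref{prop:expostWPROP1}: $\WGF$ with $|S|=1,T=\agents$ gives ex-ante $\WPROP$, so $v_i(X_i^*)\geq w_i\cdot v_i(\goods)$, and the first part of \Cref{cor:UGupToOneGood} supplies a good $g$ with $v_i(Y_i\cup\set{g})\geq v_i(X_i^*)\geq w_i\cdot v_i(\goods)$. For ex-post $\WEF_1^1$ I would fix agents $i,j$ and a deterministic $Y$, and compare $v_i(Y_i)$ and $v_i(Y_j)$ to their expectations $v_i(X_i^*)$ and $v_i(X_j^*)$. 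Ex-ante $\WEF$ (a consequence of $\WGF$ with $|S|=|T|=1$) gives $w_j\cdot v_i(X_i^*)\geq w_i\cdot v_i(X_j^*)$. Applying the first part of \Cref{cor:UGupToOneGood} to $i$'s own bundle yields a good $g_i$ with $v_i(Y_i\cup\set{g_i})\geq v_i(X_i^*)$, while the second part applied to $i$'s valuation of $j$'s bundle yields a good $g_j\in Y_j$ with $v_i(Y_j\setminus\set{g_j})\leq v_i(X_j^*)$; chaining these through the ex-ante $\WEF$ inequality gives $w_j\cdot v_i(Y_i\cup\set{g_i})\geq w_i\cdot v_i(Y_j\setminus\set{g_j})$, which is exactly $\WEF_1^1$. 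Note that $g_i$ need not come from $Y_j$, which is precisely why the weaker $\WEF_1^1$ (rather than $\WEF(1,1)$) is the natural target here.

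\textbf{Main obstacle.} The hard part is Step 1: establishing that the weighted-Nash-welfare maximizer is genuinely $\WGF$ rather than merely $\WEF+\WPROP$. One must handle goods valued at zero (where the ratio is ill-defined) and agents receiving zero value, and must verify that the KKT conditions for the weighted objective translate into the group-domination impossibility for \emph{arbitrary} subsets $S,T$, not just singletons. I would manage the degeneracies by a standard limiting/perturbation argument (adding a vanishing amount of each good to each agent, or restricting to the support of positive values), mirroring the unweighted treatment in~\cite{freemanSV20} but keeping the entitlements $w_i$ inside every marginal comparison. Once $\WGF$ of $X^*$ is secured, the remaining steps are routine invocations of \Cref{cor:UGupToOneGood} and \Cref{thm:budishDecomp}, and strong polynomiality follows from the strongly polynomial solvability of the weighted Nash program together with the strongly polynomial decomposition guaranteed by \Cref{thm:budishDecomp}.
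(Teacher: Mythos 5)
Your Step 1 is in substance the paper's own approach: the paper also takes a fractional maximum weighted Nash welfare (MWN) allocation and derives $\WGF$ from first-order optimality, organized as a two-step argument (every MWN allocation admits prices forming a competitive equilibrium with budgets $w_i$; every competitive equilibrium is $\WGF$), where the first-order inequality $w_j\, v_j(Z)/v_j(X_j) \geq w_i\, v_i(Z)/v_i(X_i)$ for $Z \subseteq X_j$ plays exactly the role of your KKT condition. The genuine gap is in Step 2. You invoke the second part of \Cref{cor:UGupToOneGood} ``applied to $i$'s valuation of $j$'s bundle'' to obtain $g_j \in Y_j$ with $v_i(Y_j \setminus \{g_j\}) \leq v_i(X_j)$. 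The corollary does not say this: the $\UG$ bihierarchy constrains row $j$ by prefixes of agent \emph{$j$'s own} preference order, so the decomposition controls $v_j(Y_j)$ relative to $v_j(X_j)$, never $v_i(Y_j)$ relative to $v_i(X_j)$. Moreover, the inequality you need is simply false for marginal-preserving $\UG$-decompositions in general: if row $j$ has $x_{jg}=\tfrac12$ on $2n$ goods ordered $g_1,\dots,g_{2n}$ by $j$, the prefix quotas force $Y_j$ to contain exactly one good from each pair $\{g_{2t-1},g_{2t}\}$, so $Y_j$ may consist of all $n$ odd-indexed goods; if $i$ values odd-indexed goods at $1$ and even-indexed at $0$, then $v_i(Y_j)=n$ while $v_i(X_j)=n/2$, a gap worth many goods. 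Hence ex-ante $\WEF$ plus the $\UG$-decomposition alone does not give ex-post $\WEF_1^1$; the MWN structure must enter the ex-post argument as well, which is precisely why this step is not ``routine.''

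The paper closes this gap as follows. It applies \Cref{cor:UGupToOneGood} only to own bundles, giving $g_i \notin Y_i$ with $v_i(Y_i)+v_i(g_i) > v_i(X_i)$ and $g_j \in Y_j$ with $v_j(Y_j)-v_j(g_j) < v_j(X_j)$ (note: agent $j$'s valuation), and then converts the latter into a statement about $v_i$ using the MWN first-order inequality $w_j\, v_j(g)/v_j(X_j) \geq w_i\, v_i(g)/v_i(X_i)$, valid for every $g$ with $x_{jg}>0$ (which holds for all $g\in Y_j$ by the singleton constraints in $\mathcal{H}_2$). Summing this over $g \in Y_j\setminus\{g_j\}$ yields $w_i\, v_i(Y_j\setminus\{g_j\})/v_i(X_i) \leq w_j\, v_j(Y_j\setminus\{g_j\})/v_j(X_j) < w_j$, hence $w_i\, v_i(Y_j\setminus\{g_j\}) < w_j\, v_i(X_i) < w_j\, v_i(Y_i\cup\{g_i\})$, which is exactly $\WEF_1^1$. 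Your plan is repairable by replacing the invalid cross-agent application of \Cref{cor:UGupToOneGood} with this weighted marginal/price argument. A minor further point: for strong polynomiality the paper relies on a combinatorial algorithm (Orlin) for the Nash-welfare program; a generic convex-programming formulation would not by itself be strongly polynomial.
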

\begin{proof}
	We follow the proof steps of~\cite{freemanSV20}, where the same result was shown in the unweighted setting. In a first step, we show that fractional maximum weighted Nash welfare allocations (MWN allocations) form competitive equilibria (CE) (\Cref{lemma:MWN->CE}), and in a second step, we show that CE allocations are $\WGF$ (\Cref{lemma:CE->WGF}). Finally, we show that the $\UG$-decomposition of any fractional MWN allocation yields the desired ex-post properties (\Cref{lemma:expostByUtilityGuarantee}).
	
	The next lemmas hold even in the more general cake cutting model introduced by Steinhaus~\cite{steinhaus}. In this model, instead of a finite number of goods, a single continuous good $C$ (a ``cake'') needs to be split among the $n$ agents. An allocation is a partition of $C$ into $n$ subsets, and the valuation of an agent $i\in\agents$ is given by a measure $v_i$ on $C$.
	
	First, we use a slight generalization of~\cite[Lemma 4.8]{haleviEtAl} to derive a useful inequality for MWN allocations. 
	\begin{lemma}\label{lemma:stromquist}
		Let $f_i : \RR^+ \mapsto \RR^+$, $i\in \agents$, be differentiable functions. If an allocation $X$ of cake maximizes the welfare function $S(X)=\sum_{i\in \agents} f_i(v_i(X_i))$, then for any two agents $i,j\in \agents$ and any slice $Z_j\subseteq X_j$,
		\[
		f_j'(v_j(X_j)) \cdot v_j(Z_j)\geq f_i'(v_i(X_i)) \cdot v_i(Z_j).
		\]
	\end{lemma}
	\begin{proof}
		Like in \cite{haleviEtAl}, we use a result of \cite{stromquist}, which states that for any part $Z$ of the cake and $\alpha \in [0,1]$, there is $Z_\alpha \subseteq Z_j$ such that $v_i(Z_\alpha) = \alpha \cdot v_i(Z)$ and $v_j(Z_\alpha) = \alpha \cdot v_j(Z)$. Let $X'$ be the allocation obtained by giving $Z_\alpha$ from agent $j$ to $i$. The welfare difference $D = S(X') - S(X)$ is now a function of $\alpha$, i.e.,
		\begin{align*}
			D(\alpha) 
			=\ & f_i\big(v_i(X_i')\big) - f_i\big(v_i(X_i)\big) 
			+ f_j\big(v_j(X_j')\big) - f_j\big(v_j(X_j)\big) \\
			= \ & f_i\big(v_i(X_i) + \alpha\cdot v_i(Z_j)\big) - f_i\big(v_i(X_i)\big) 
			+ f_j\big(v_j(X_j) - \alpha\cdot v_j(Z_j)\big) - f_j\big(v_j(X_j)\big).
		\end{align*}
		The derivative of $D$ is 
		\begin{align*}
			D'(\alpha) = \ & v_i(Z_j) \cdot f_i'\big(v_i(X_i)
			+ \alpha\cdot v_i(Z_j)\big) - v_j(Z_j) \cdot f_j'\big(v_j(X_j) - \alpha\cdot v_j(Z_j)\big).
		\end{align*}
		Since $X$ maximizes the welfare $S(X)$, we must have $D'(0)\leq 0$, and hence
		$
		v_i(Z_j) \cdot f_i'(v_i(X_i) ) \leq v_j(Z_j) \cdot f_j'(v_j(X_j) ).
		$
	\end{proof}
	
	If we set $f_i = w_i \ln v_i(X_i)$ in \Cref{lemma:stromquist}, we obtain the following useful inequality. 
	\begin{corollary}\label{cor:usefulIneq}
		If $X$ is an MWN allocation, then for any two agents $i,j\in \agents$ and $Z_j\subseteq X_j$, 
		\[
		w_j \cdot \frac{v_j(Z_j)}{v_j(X_j)} \geq w_i \cdot \frac{v_i(Z_j)}{v_i(X_i)} \enspace.
		\]
	\end{corollary}
	
	For the next proofs we need the notion of competitive equilibrium.
	\begin{definition}
		A pair $(X,P)$ of allocation and prices is a competitive equilibrium (CE), if
		\begin{enumerate}[label=\arabic*)]
			\item $P(Z)>0$ iff $Z$ is a positive slice\footnote{A part $Z$ of the cake is called positive slice if $v_i(Z)>0$ for at least one agent $i\in \agents$.}, 
			\item for all agents $i\in \agents$, $Z_i\subseteq X_i$, and slice $Z$, $\frac{v_i(Z_i)}{P(Z_i)} \geq \frac{v_i(Z)}{P(Z)}$ (MBB),
			\item for all $i\in \agents$, $P(X_i) = w_i$.
		\end{enumerate}
	\end{definition}
	This equilibrium notion stems from \cite{haleviEtAl}, where the authors define it only for equally entitled agents and call it \emph{strong competitive equilibrium from equal incomes} (sCEEI).
	
	\begin{lemma}\label{lemma:MWN->CE}
		For every MWN allocation $X$ there exists a price measure $P$ such that $(X,P)$ is a competitive equilibrium.
	\end{lemma}
	\begin{proof}
		Let $X$ be an MWN allocation. For an agent $i\in\agents$, we define the price of a slice $Z_i\subseteq X_i$ as
		\[
		P(Z_i) = w_i\cdot \frac{v_i(Z_i)}{v_i(X_i)}.
		\]
		The price of an arbitrary slice $Z$ is then given by adding the agent-specific parts, i.e.,
		\[
		P(Z) = \sum_{i\in \agents} P(Z\cap X_i).
		\]
		We need to show that conditions 1) to 3) are fulfilled by $(X,P)$.
		Condition 3) follows simply from the definition of prices. Condition 1) can be derived as follows: Suppose $P(Z)>0$ for some slice $Z$. Then, again by the definition of prices, there must be at least one agent $i\in \agents$ such that $P(Z\cap X_i)>0$. Hence $Z$ is a positive slice. For the other direction, suppose $Z$ is a positive slice, i.e., $v_i(Z) > 0$ for some agent $i\in\agents$. Then either $P(Z\cap X_i)>0$ or $P(Z\cap X_i)=0$. In the first case we are done. In the second case there needs to be an agent $j\in\agents$ with $P(Z\cap X_j)>0$ as otherwise $Z$ would be liked by $i$ but given only to agents $j$ not liking it, a contradiction to $X$ being an MWN allocation.
		
		Now it remains to show condition 2). 
		Consider any agent $i$, any $Z_i\subseteq X_i$, and any slice $Z$. We partition $Z$ into agent specific parts $Z_j = Z\cap X_j$ (note that $Z=\bigcup_j Z_j$).
		By plugging in prices into \Cref{cor:usefulIneq} we obtain for each part $Z_j$, 
		\[
		v_i(X_i) P(Z_j) \geq v_i(Z_j) w_i.
		\]
		Summing over all agent-specific parts $Z_j$ and using additivity of prices yields
		\[
		v_i(X_i) P(Z) \geq v_i(Z) w_i.
		\]
		Finally, by definition of prices, $v_i(X_i) = w_i \frac{v_i(Z_i)}{P(Z_i)}$, and therefore
		$
		\frac{v_i(Z_i)}{P(Z_i)} \geq \frac{v_i(Z)}{P(Z)}.
		$
	\end{proof}
	
	\begin{lemma}\label{lemma:CE->WGF}
		For every competitive equilibrium $(X,P)$, the allocation $X$ is $\WGF$.
	\end{lemma}
	\begin{proof}
		Let $(X,P)$ be a CE. Assume for sake of contradiction that for some $S,T \subseteq \agents$ there is a reallocation $X'$ of $\cup_{j\in T} X_j$ among agents in $S$, such that for all $i\in S$,
		\begin{equation}\label{eq:groupFair}
			\frac{v_i(X'_i)}{v_i(X_i)} \geq \frac{w_T}{w_S},
		\end{equation}
		and at least one inequality is strict. 
		Consider an agent $i\in S$.
		By choosing $Z_i=X_i$ and $Z=X'_i$ in the second condition of CE we obtain
		\[
		P(X'_i) \geq \frac{v_i(X'_i)}{v_i(X_i)}P(X_i) = \frac{v_i(X'_i)}{v_i(X_i)} w_i,
		\]
		where the last equation comes from the third condition of CE.
		Now summing over $i\in S$ and using \eqref{eq:groupFair} yields
		\[
		\sum_{i\in S} P(X'_i) > \frac{w_T}{w_S} \sum_{i\in S} w_i = w_T,
		\]
		which is a contradiction since 
		\begin{align*}
			\sum_{i\in S} P(X'_i) & = P(\cup_{i\in S} X'_i) = P(\cup_{j\in T} X_j) 
			= \sum_{j\in T} P(X_j) = \sum_{j\in T} w_j = w_T.
		\end{align*}
	\end{proof}
	
	It remains to show the ex-post guarantee. At this point, we drop the consideration of the cake cutting setting and shift to the indivisible domain.
	\begin{lemma}\label{lemma:expostByUtilityGuarantee}
		The $\UG$-decomposition of a fractional MWN allocation is $\WPROP1$ and $\WEF_1^1$.
	\end{lemma}
	\begin{proof}
		We proceed like in \cite{freemanSV20}. Let $X$ be a fractional MWN allocation and let $Y$ be any deterministic allocation in the $\UG$-decomposition of $X$.
		
		We first show that $Y$ is $\WPROP1$. Consider an agent $i\in\agents$. Observe that $X$ as an MWN allocation is $\WGF$ and hence $\WPROP$, i.e., it holds $v_i(X_i)\geq w_i\cdot v_i(\goods)$. Hence if $v_i(Y_i)\geq v_i(X_i)$, we are done. Otherwise, it holds $v_i(Y_i) < v_i(X_i)$. From \Cref{cor:UGupToOneGood} we know that in this case there exists $g^*\notin Y_i$ such that $v_i(Y_i) + v_i(g^*) > v_i(X_i)\geq w_i\cdot v_i(\goods)$. Hence $Y$ is $\WPROP1$.
		
		It remains to show that $Y$ is $\WEF_1^1$. Consider two agents $i,j\in\agents$. From \Cref{cor:UGupToOneGood} it follows that if $v_i(Y_i) < v_i(X_i)$, then there exists $g_i\notin Y_i$ such that
		\begin{equation}\label{ineq:agenti}
			v_i(Y_i) + v_i(g_i) > v_i(X_i).
		\end{equation}
		Analogously, if $v_j(Y_j) > v_j(X_j)$, then there exists $g_j\in Y_j$ such that
		\begin{equation}\label{ineq:agentj}
			v_j(Y_j) - v_j(g_j) < v_j(X_j).
		\end{equation}
		Next, we use \Cref{cor:usefulIneq} again, this time in the setting with discrete goods. It implies for any two agents $i,j\in\agents$ and any $g\in \goods$ with $x_{jg}>0$, 
		\begin{equation*}
			w_j\frac{v_j(g)}{v_j(X_j)} \geq w_i\frac{v_i(g)}{v_i(X_i)}.
		\end{equation*}
		Summing over all $g\in Y_j\setminus \{g_j\}$ yields
		\begin{equation*}
			w_j \frac{v_j(Y_j\setminus \{g_j\})}{v_j(X_j)} \geq w_i \frac{v_i(Y_j\setminus \{g_j\})}{v_i(X_i)}.
		\end{equation*}
		Note that the left-hand side of this inequality is strictly less than $w_j$ due to \eqref{ineq:agenti}. Hence, by inequality \eqref{ineq:agentj} it follows $w_j\cdot (v_i(Y_i) + v_i(g_i)) > w_j\cdot v_i(X_i) > w_i\cdot v_i(Y_j\setminus \{g_j\})$. This shows $Y$ is $\WEF_1^1$.
	\end{proof}
	
	\Cref{thm:groupFair} now follows from Lemmas \ref{lemma:MWN->CE}, \ref{lemma:CE->WGF}, \ref{lemma:expostByUtilityGuarantee} and the fact that we can  can compute in strongly polynomial time an MWN allocation \cite{Orlin10} as well as the $\UG$-decomposition (\Cref{thm:budishDecomp}).
\end{proof}

\paragraph{Remark.} One might wonder whether the ex-post guarantee in \Cref{thm:groupFair} could be replaced with $\WEF(x,y)$ for some parameters $x,y\in[0,1]$. There are instances where this is impossible, even in the unweighted setting. Consider the following example: There are three agents $1,2,3$, three light goods, and one heavy good. Agents $1$ and $2$ have the same valuation function, they value the heavy good at 6, and each light good at 1. Agent $3$ values the light goods at 1 and the heavy good at 0.

Now consider a fractional group fair allocation $X$. Observe that all light goods need to be allocated completely to agent $3$ in $X$. If one of the first two agents (say, agent 1) gets a fraction $\epsilon>0$ of light goods, the group fairness condition is violated for $S=\{2,3\}$ and $T=\{1,3\}$: If one reallocates the fraction $\epsilon$ of light goods from agent $1$ to agent $3$, then agent $3$ strictly improves and the utility of agent $2$ remains unchanged.

Now consider any allocation $Y$ in the support of a lottery implementing $X$. $Y$ needs to give all light goods to agent $3$, so at least one of the first two agents gets no good at all. This agent is then envious to agent $3$, and transferring one good from agent $3$ to agent $1$ cannot remove this envy.


\section{Extensions to General Valuations}\label{sec:general}
In this section, we explore to which extent our techniques apply to more general valuations.
A major challenging problem for non-additive valuations is that the (expected) utility of an agent for a lottery is not uniquely determined by its fractional matrix.
Nonetheless, both ex-ante $\WEF$ and ex-ante $\WSD$-$\EF$ allocations exist for all valuations $\set{v_i}_{i\in \agents}$. In particular, for ex-ante $\WEF$ one can simply assign $\goods$ to agent $i$ with probability $w_i$. For ex-ante $\WSD$-$\EF$ it is sufficient to invoke \DSE\ only using agents' priorities over single goods. Observe that for general valuations it is no longer true that ex-ante $\WSD$-$\EF$ implies ex-ante $\WEF$, not even in the unweighted setting.
We next show that ex-ante $\WEF$ and either ex-post $\WPROP1$ or ex-post $\WEF(1,1)$ might no longer be possible.

\begin{theorem}\label{thm:impossibilityGeneralWeighted}
  For general valuations, ex-ante $\WEF$ is not compatible with $\WPROP1$ or $\WEF(1,1)$.
\end{theorem}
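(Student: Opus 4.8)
The plan is to exhibit a single counterexample with two agents and general (non-additive) valuations, and show that the ex-ante $\WEF$ constraint forces a lottery whose support must contain a ``bad'' deterministic allocation that violates both $\WPROP1$ and $\WEF(1,1)$ simultaneously. Since \Cref{prop:negativeWEF} already demonstrated the sharpness of the ex-post guarantee even for additive valuations, the additional power we need here comes from non-additivity: I want a valuation function where owning a single extra good contributes almost nothing, so that the ``up to one good'' slack in $\WPROP1$ and $\WEF(1,1)$ is rendered useless. The natural choice is a \emph{unit-demand} (or more generally complementarity-free but highly non-additive) valuation, where the marginal value of a good is tiny once an agent already holds a more valuable good.

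Concretely, I would take $\agents=\{1,2\}$ and a small set of goods, say $\goods=\{g_1,g_2\}$, with agent $1$ additive and agent $2$ unit-demand (matching the parenthetical hint ``even for two agents, one additive and one unit-demand'' in the contribution section). Choose entitlements $w_1,w_2$ with $w_1$ slightly below $\tfrac12$ so that ex-ante $\WEF$ forces agent $1$ to receive strictly less than her ``fair fractional share'' in expectation; as in the proof of \Cref{prop:negativeWEF}, this compels the deterministic allocation $Y=(\emptyset,\goods)$ to lie in the support of \emph{any} ex-ante $\WEF$ lottery. The first step of the argument is therefore to quantify the ex-ante $\WEF$ condition and conclude that some realization must give agent $1$ the empty bundle while agent $2$ receives everything she values.

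The second step is to check that in this realization $Y$, the ``up to one good'' repairs fail. For $\WEF(1,1)$: agent $1$ holds nothing and envies agent $2$, and because agent $2$'s bundle contains only goods whose \emph{marginal} value to agent $1$ (after transferring one of them) does not close the weighted gap, moving a single good $g\in A_2$ to agent $1$ is insufficient to satisfy $w_2\cdot v_1(A_1\cup\{g\})\ge w_1\cdot v_1(A_2\setminus\{g\})$ for the chosen weights. For $\WPROP1$: agent $2$'s unit-demand valuation means that $v_2(\goods)$ already equals the value of her single best good, so her weighted proportional share $w_2\cdot v_2(\goods)$ is a fixed positive quantity; I would pick values so that in the realization where agent $2$ receives the empty bundle (which the \emph{complementary} side of the ex-ante constraint forces for the other extreme agent), adding one good still falls short of $w\cdot v(\goods)$ because a second good adds nothing under unit-demand. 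I would tune the numerical values so that both failures are witnessed — possibly using two different bad realizations forced by the two directions of the ex-ante $\WEF$ inequality, one killing $\WPROP1$ and one killing $\WEF(1,1)$.

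The main obstacle is \emph{coordinating the numbers} so that a single instance (or at least a single entitlement regime) defeats both ex-post notions at once, rather than needing separate instances. The delicate point is that $\WPROP1$ is a threshold notion tied to $v_i(\goods)$ while $\WEF(1,1)$ is comparison-based with a transferred good, so the counterexample must make non-additivity bite in both: the unit-demand agent's $v(\goods)$ must not grow when a bundle is enlarged (defeating the ``$+1$ good'' in $\WPROP1$), and the additive agent's marginal from a transferred good must be too small relative to the weighted envy gap (defeating $\WEF(1,1)$). I expect the cleanest route is to present one instance and verify by direct computation that every lottery which is ex-ante $\WEF$ must place positive probability on a realization failing the respective ex-post property, handling the two properties as two short sub-cases; the verification itself is elementary once the values are fixed, so the real work is the up-front selection of $v_i(g)$ and $w_i$.
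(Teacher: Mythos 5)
Your high-level plan --- one additive and one unit-demand agent, with ex-ante $\WEF$ forcing a starved realization into the support --- is the same as the paper's, but both of your concrete violation arguments fail, and one half of the theorem is unprovable on your instance altogether. The fatal misconception is the $\WPROP1$ step: a unit-demand agent can \emph{never} violate $\WPROP1$, because when her bundle is empty, adding her single most valuable good $g^*$ already yields $v(\{g^*\})=v(\goods)\geq w\cdot v(\goods)$; the one-good slack is exactly what unit-demand needs, so ``a second good adds nothing'' works in favor of $\WPROP1$, not against it. Your additive agent $1$ cannot violate $\WPROP1$ either: with two goods and $w_1<\frac12$, her best good satisfies $\max_g v_1(g)\geq\frac12 v_1(\goods)>w_1\cdot v_1(\goods)$. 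Hence \emph{every} deterministic allocation of your instance is $\WPROP1$, whatever values you tune, and the $\WPROP1$ incompatibility cannot be derived from it. Your $\WEF(1,1)$ step is also misattributed: in the forced realization $(\emptyset,\goods)$, agent $1$'s envy \emph{is} repairable --- transfer her better good $g$; then $w_2\cdot v_1(\{g\})\geq w_1\cdot v_1(\{g'\})$ holds because $v_1(g)\geq v_1(g')$ and $w_2>w_1$, so $\WEF(1,1)$ is satisfied there. (The realization that actually fails $\WEF(1,1)$ in your instance is the opposite one, $(\goods,\emptyset)$, judged by the unit-demand agent $2$ when her two values are close enough; and forcing \emph{that} realization into the support requires combining \emph{both} agents' ex-ante constraints, not the one-sided argument you give.)

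The paper's construction shows what is really needed, and it reverses your role assignment. It gives the \emph{unit-demand} agent the \emph{large} weight ($w_1=\frac23$) and the additive agent the small weight $\frac13$, with \emph{four} unit-value goods. Because the heavy agent's value saturates at $1$, her ex-ante $\WEF$ constraint $\frac13\cdot\Pr[A_1\neq\emptyset]\geq\frac23\cdot\Pr[A_2\neq\emptyset]$ forces $\Pr[A_1=\goods]\geq\frac12$; so with probability at least $\frac12$ the additive agent gets \emph{nothing}. That single realization then kills both ex-post notions for her at once: her proportional share is $\frac13\cdot 4=\frac43>1$, so no single added good restores $\WPROP1$, and $\WEF(1,1)$ fails since $\frac23\cdot 1<\frac13\cdot 3$. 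The two quantitative requirements your instance misses are exactly these: the starved agent must be the \emph{additive} one with $w\cdot m>1$ (which forces $m\geq 3$ once her weight is below $\frac12$), and the agent whose ex-ante constraint does the starving must be the non-additive one with the \emph{larger} weight. With only two goods these conditions cannot hold simultaneously, which is why no amount of tuning rescues your construction.
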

\begin{proof}
Let us consider a fair division instance with two agents and four goods. Suppose agent $1$ has an entitlement of $\frac{2}{3}$, and has value $1$ for any bundle (except the empty bundle, for which she has value $0$). Agent $2$ has entitlement $\frac{1}{3}$ and value $k$ for any bundle of size $k$, for $k=0,\dots, 4$. 
	
Let us denote by $p_k$ the probability that $1$ receives $k$ goods. Clearly, if the allocation is ex-post $\WPROP$ or ex-post $\WEF(1,1)$, then $p_0 = 0$ and $\sum_{k=1}^4 p_k = 1$. Since the allocation is complete, $p_k$ is also the probability that agent $2$ receives $4-k$ goods. Agent $1$ is ex-ante $\WEF$ if and only if 
\begin{align*}
\frac{1}{3}\cdot (p_1 + p_2 +p_3+p_4) &\geq \frac{2}{3}\cdot (p_0+p_1 + p_2 +p_3)
\end{align*}
which implies $1-p_0 \geq 2\cdot (1-p_4)$ and, hence, $2p_4 \geq 1 + p_0$. Thus, $p_4\geq \frac{1}{2}$. Therefore, the allocation where agent $1$ receives every good and $2$ no good occurs with positive probability. However, such an allocation is neither $\WPROP1$ nor $\WEF(1,1)$ (not even $\WEF_1^1$) for agent $2$.
\end{proof}
Notice that both agents in the proof value $1$ each good, and, hence, agent $1$ is unit-demand and agent $2$ is additive. This means that as soon as one agent is not additive in the weighted case our positive result no longer holds. We further observe that the proof relies on the fact that agents are asymmetric.
Therefore, we next consider two questions: 1) {\em what combination of ex-ante and ex-post properties we can guarantee in the weighted setting assuming slightly more general valuations (namely, XOS)?}, and  2) {\em which valuations still guarantee ex-ante $\EF$ and $\EF1$ ex-post for symmetric agents?}.

We try to reply to question 1) and 2) in~\Cref{sec:XOS} and~\ref{sec:generalValSymmetricAgents}, respectively.

\subsection{XOS Valuations}\label{sec:XOS}
For an agent $i$ with XOS valuation, our algorithms only make use of the additive function $f_i$ such that $v_i(\goods)= \sum_{g\in \goods} f_i(g)$. We either assume $f_i$ to be known or have access to an XOS-oracle (using which $f_i$ can be obtained with a single query).  Given a query with a set $A\subseteq \goods$, the XOS-oracle returns a function $f\in\mathcal{F}_i$ that maximizes $f(A)$.

Let $X$ be the fractional allocation with $x_{ig} = w_i$, for each $i\in \agents$ and $g\in \goods$.
\begin{proposition}
	\label{prop:prop}
	$X$ is ex-ante $\WPROP$.
\end{proposition}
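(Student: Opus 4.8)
The plan is to show that the uniform fractional allocation $X$ with $x_{ig}=w_i$ for all $i,g$ gives each agent, in the XOS sense, an expected value of exactly her proportional share $w_i\cdot v_i(\goods)$. The key subtlety is that for XOS valuations the ex-ante value of agent $i$ is \emph{defined} via the additive representative $f_i$ that witnesses $v_i(\goods)=\max_{f\in\mathcal{F}_i}f(\goods)$, i.e.\ $v_i(\goods)=\sum_{g\in\goods}f_i(g)$. This is exactly the function the algorithm queries, so the proof should work entirely with $f_i$ rather than with $v_i$ directly.

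First I would recall that, by the excerpt's convention, the (expected) value agent $i$ assigns to her fractional bundle $X_i$ is measured through the additive function $f_i$: concretely $v_i(X_i) = \sum_{g\in\goods} x_{ig}\,f_i(g)$. Substituting $x_{ig}=w_i$ gives
\begin{equation*}
v_i(X_i) \;=\; \sum_{g\in\goods} w_i\,f_i(g) \;=\; w_i \sum_{g\in\goods} f_i(g) \;=\; w_i\cdot v_i(\goods),
\end{equation*}
where the last equality is precisely the property that $f_i$ is the maximizing additive function for the grand bundle $\goods$. Hence $v_i(X_i) = w_i\cdot v_i(\goods)$ holds with equality, which is $\WPROP$ ex-ante for every agent $i\in\agents$.

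The single point that needs care, and which I expect to be the main obstacle, is justifying that the relevant ex-ante value is computed with respect to $f_i$ evaluated on $\goods$ and not with respect to some other additive component of $\mathcal{F}_i$. For a general XOS valuation the maximizing function depends on the set being evaluated, so one must invoke the paper's stated modelling choice (the paragraph preceding the proposition) that for XOS agents the algorithm and the analysis fix the additive function $f_i$ satisfying $v_i(\goods)=\sum_{g\in\goods}f_i(g)$, and define the fractional/ex-ante utility through this $f_i$. Once that convention is in place the computation above is immediate and the proposition follows directly; no decomposition or further machinery is required.
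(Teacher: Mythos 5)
There is a genuine gap, and it is exactly at the point you flagged as ``the main obstacle.'' You resolve it by asserting that the paper \emph{defines} the ex-ante utility of an XOS agent through the fixed additive function $f_i$, so that $v_i(X_i)=\sum_{g\in\goods}x_{ig}f_i(g)$ holds by convention. The paper makes no such modelling choice: the sentence preceding the proposition only says the \emph{algorithm} accesses $f_i$ (via an XOS oracle query on $\goods$); the agent's utility is still measured by the true XOS function $v_i$. Under your convention the claim is vacuous --- you have simply replaced every agent by an additive one, so there is nothing XOS-specific left to prove, and the resulting statement is about a different (weaker) notion of ex-ante $\WPROP$ than the one the paper asserts.

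The missing step is the inequality that makes the argument work for the genuine $v_i$: since $f_i\in\mathcal{F}_i$ and $v_i(\cdot)=\max_{f\in\mathcal{F}_i}f(\cdot)$, one has $v_i(S)\geq f_i(S)$ for every bundle $S$ (and hence, for any lottery implementing $X$, the expected utility satisfies $\expectation{v_i(Y_i)}\geq \expectation{f_i(Y_i)}=\sum_{g\in\goods}x_{ig}f_i(g)$ by additivity of $f_i$). This is precisely how the paper argues:
\begin{equation*}
v_i(X_i)\ \geq\ f_i(X_i)\ =\ \sum_{g\in\goods}x_{ig}\,f_i(g)\ =\ w_i\sum_{g\in\goods}f_i(g)\ =\ w_i\cdot v_i(\goods),
\end{equation*}
where the last equality uses that $f_i$ is the maximizer for the grand bundle. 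Note the conclusion is an inequality, not the equality you claim; the XOS value of a partial or fractional bundle can strictly exceed $f_i$'s value on it, because a different member of $\mathcal{F}_i$ may be the maximizer there. Your arithmetic is fine, so the repair is one line: drop the ``defined via $f_i$'' convention and insert the domination $v_i\geq f_i$.
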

\begin{proof}
	For any allocation $Y$, $v_i(Y_i)= \max_{f\in\mathcal{F}_i} f(Y_i)\geq f_i(Y_i)$, since $f_i\in \mathcal{F}_i$. Hence, $v_i(X_i) \geq f_i(X_i) = \sum_{g\in\goods} x_{ig} \cdot f_i(g) = w_i\cdot \sum_{g\in\goods} {f_i(g)} = w_i\cdot v_i(\goods)$.
\end{proof}

In order to apply Theorem~\ref{thm:budishDecomp}, we need to set up an appropriate additive function. For the next result, we assume that agent $i$ has additive valuation $f_i$, for each $i\in \agents$.
\begin{proposition}
	The $\UG$-decomposition of $X$ is ex-post $\WPROP1$.
\end{proposition}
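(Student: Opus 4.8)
The plan is to run the utility-guarantee argument of \Cref{prop:expostWPROP1} on the additive surrogate valuations $f_i$ and then lift the resulting bound to the true XOS valuations $v_i$ via the inequality $v_i(A)\ge f_i(A)$, which holds for every $A\subseteq\goods$ because $f_i\in\mathcal{F}_i$. Throughout, the $\UG$-decomposition is understood to be the one built from the additive instance $\set{f_i}_{i\in\agents}$, so that the collections $\mathcal{S}_i$ order the goods by $f_i$ and the hypotheses of \Cref{thm:budishDecomp} refer to $f_i$.

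First I would record the two facts that feed the argument. Since $f_i$ is additive and $x_{ig}=w_i$, we have $f_i(X_i)=\sum_{g\in\goods} x_{ig} f_i(g)=w_i\sum_{g\in\goods} f_i(g)=w_i\, v_i(\goods)$, exactly as in the proof of \Cref{prop:prop}. Moreover, additivity of $f_i$ gives $f_i(X_i)=\expectation{f_i(Y_i)}$ for the random allocation $Y$ produced by the decomposition, so the hypothesis of \Cref{cor:UGupToOneGood} is met with $f_i$ playing the role of the additive valuation.

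Now fix any deterministic allocation $Y$ in the decomposition and any agent $i$, and split into two cases according to the sign of $f_i(Y_i)-f_i(X_i)$. If $f_i(Y_i)\ge f_i(X_i)$, then $v_i(Y_i)\ge f_i(Y_i)\ge w_i\, v_i(\goods)$, so $Y$ already satisfies $\WPROP$, and hence $\WPROP1$, for $i$. If instead $f_i(Y_i)<f_i(X_i)$, then the first item of \Cref{cor:UGupToOneGood}, applied to $f_i$, yields a good $g\notin Y_i$ with $f_i(Y_i)+f_i(g)>f_i(X_i)=w_i\, v_i(\goods)$. Using additivity of $f_i$ and then the XOS inequality, $v_i(Y_i\cup\set{g})\ge f_i(Y_i\cup\set{g})=f_i(Y_i)+f_i(g)>w_i\, v_i(\goods)$, which is precisely the $\WPROP1$ condition for $i$. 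As $i$ and $Y$ were arbitrary, every allocation in the $\UG$-decomposition is $\WPROP1$.

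The only genuinely non-routine point is the last step of the case analysis: the utility guarantee controls only the additive surrogate $f_i$, so one must request the augmenting good $g$ from \Cref{cor:UGupToOneGood} with respect to $f_i$ and then convert the $f_i$-bound into a $v_i$-bound through $v_i\ge f_i$. The inequality points in the right direction precisely because the threshold $w_i\, v_i(\goods)$ is itself an $f_i$-quantity, namely $f_i(X_i)$; this is why choosing $f_i$ as the additive surrogate that attains $v_i(\goods)=\sum_{g\in\goods}f_i(g)$ (rather than any other member of $\mathcal{F}_i$) is essential.
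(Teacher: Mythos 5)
Your proof is correct and follows essentially the same route as the paper: build the $\UG$-decomposition for the additive surrogates $f_i$, apply \Cref{cor:UGupToOneGood} to $f_i$, use $f_i(X_i)=w_i\cdot v_i(\goods)$ from \Cref{prop:prop}, and lift to $v_i$ via the XOS inequality $v_i\ge f_i$. The only difference is presentational -- you make explicit the case split and the verification of the hypothesis of \Cref{cor:UGupToOneGood}, which the paper leaves implicit.
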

\begin{proof}
	Given any allocation $Y$ of the decomposition, by definition of XOS, \Cref{cor:UGupToOneGood} and \Cref{prop:prop}, we see that  $v_i(Y_i \cup \{g\}) \ge f_i(Y_i \cup \{g\}) = f_i(Y_i) + f_i(g) > f_i(X_i) = w_i \cdot v_i(\goods)$.
\end{proof}

\subsection{Equal Entitlements}\label{sec:generalValSymmetricAgents}

Here we discuss to which extent we can guarantee BoBW results for equally entitled agents and general valuations. In particular, we explore valuation functions for which \DSE\ with the $\UG$-decomposition can be used to guarantee ex-ante $\EF$ and ex-post $\EF1$. Recall that \DSE\ was already introduced as the \textsc{Eating} procedure by Aziz~\cite{aziz20} for equally entitled agents. 

Both \DSE\ and the definition of the $\UG$ bihierarchy only depend on the ranking of each agent for singleton bundles of goods. Therefore, we can determine a random allocation with \DSE\ and compute its $\UG$-decomposition for any class of valuation functions. Since the concept of $\SD$-$\EF$ depends only on the ranking of single goods provided by the agents, the output of \DSE\ is always an ex-ante $\SD$-$\EF$ allocation, regardless of the considered valuation functions.

Unfortunately, for general valuations, it is no longer true that an $\SD$-$\EF$ allocation $X$ is ex-ante $\EF$ not even if $X$ is the output of \DSE. Such an impossibility holds also for agents having unit-demand valuations as the following example shows.

\begin{example}
Let $\instance$ be a fair division instance with three agents and three goods. Assume agents to have identical unit-demand valuations; goods have all value $1$. Having all agents the same priorities over goods, the output of \DSE\ is given by the following:
	\begin{equation*}
	\XDSE=
	\begin{pmatrix} \smallskip
		\frac{1}{3} & \frac{1}{3} & \frac{1}{3} \\ \smallskip
		\frac{1}{3} & \frac{1}{3} & \frac{1}{3} \\ \smallskip
		\frac{1}{3} & \frac{1}{3} & \frac{1}{3}
	\end{pmatrix}
	\ .
\end{equation*}
Being $\XDSE$ the output of \DSE\ it is $\SD$-$\EF$; we next provide a decomposition of $\XDSE$ showing that $\XDSE$ is not necessarily ex-ante $\EF$. Let us consider the following decomposition:
	\begin{equation*}
	\XDSE=
	\frac{1}{3} \cdot
	\underbrace{
		\begin{pmatrix}
			1 & 1 & 0  \\
			0 & 0 & 1  \\
			0 & 0 & 0 
		\end{pmatrix}
	}_{ Y^1} 
	\ + \
	\frac{1}{3} \cdot
	\underbrace{	
		\begin{pmatrix}
			0 & 0 & 1  \\
			1 & 0 & 0  \\
			0 & 1 & 0 
		\end{pmatrix}	
	}_{ Y^2} 
	\ + \
	\frac{1}{3} \cdot
	\underbrace{
		\begin{pmatrix}
			0 & 0 & 0  \\
			0 & 1 & 0  \\
			1 & 0 & 1 
		\end{pmatrix}
	}_{ Y^3} 
	\ .
\end{equation*}
Notice that such a decomposition is not an $\UG$-decomposition of $\XDSE$ (in any $\UG$-decomposition of $\XDSE$, every agent receives deterministically exactly one good). We claim that agent $1$ (corresponding to the first row) is not ex-ante $\EF$. In fact, the expected utility she has for her random bundle is given by 
\[
\frac{1}{3}\cdot\left(v_1(Y_1^1) + v_1(Y_1^2)+ v_1(Y_1^3)\right) = \frac{1}{3}\cdot (1+1+0)= \frac{2}{3}\ ;
\]
while the expected value of agent $1$ for the random bundle of agent $2$ (second row) is
\[
\frac{1}{3}\cdot\left(v_1(Y_2^1) + v_1(Y_2^2)+ v_1(Y_2^3))\right) = \frac{1}{3}\cdot (1+1+1)= 1 \ .
\] 
\exampleend
\end{example}

Although for multi-demand valuations $\SD$-$\EF$ does not imply ex-ante $\EF$, we next prove that the $\UG$-decomposition of $\XDSE$ is indeed ex-ante $\EF$. Moreover, such a decomposition also guarantees ex-post $\EF1$.

\begin{theorem}\label{thm:multi-demand}
	For equal entitlements and multi-demand valuations, the $\UG$-decomposition of $\XDSE$ is ex-ante $\EF$ and ex-post $\EF1$.
\end{theorem}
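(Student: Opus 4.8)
The plan is to exploit the structure of multi-demand valuations together with the two halves of the $\UG$ bihierarchy. Recall that an agent $i$ with multi-demand parameter $k_i$ only values the $k_i$ most preferred goods in any bundle, so her valuation is entirely governed by the ranking of single goods. First I would note that, by \Cref{lemma:decompositionImplication} applied with $w_i = 1/n$, every deterministic allocation $Y$ in the $\UG$-decomposition assigns each agent either $\floor{m/n}$ or $\ceil{m/n}$ goods, and moreover these are always the \emph{top-ranked} available goods for $i$ in the sense dictated by $\mathcal{S}_i$. The key consequence is that each $Y_i$ is a ``prefix'' of $i$'s preference order among the goods she receives — the bundle consists of goods that appeared early in $i$'s ranking, consistent with the eating process.

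For \textbf{ex-post $\EF1$}, I would argue as follows. Fix agents $i,j$ and a deterministic allocation $Y$. Because $i$ is multi-demand with parameter $k_i$, we have $v_i(Y_i) = \sum$ of $i$'s values for the top $\min(k_i,|Y_i|)$ goods in $Y_i$, and similarly $v_i(Y_j)$ uses the top $\min(k_i,|Y_j|)$ goods of $Y_j$ under $i$'s ranking. Since $|Y_i|$ and $|Y_j|$ differ by at most one (both lie in $\{\floor{m/n},\ceil{m/n}\}$), and since \DSE\ with equal speeds is recursively balanced so that the induced picking sequence is RB (as remarked after \Cref{thm:expostWEF11}), I would pair up the goods: the $h$-th best good $i$ sees in $Y_i$ should be at least as valuable (to $i$) as the $(h+1)$-th best good $i$ sees in $Y_j$. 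This is exactly the comparison that yields $\EF1$ after removing one good from $A_j$. The RB property of the picking sequence is what guarantees this interleaving, and multi-demand truncation ensures only the top $k_i$ entries matter, so removing the single most-valued good of $Y_j$ closes any remaining gap.

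For \textbf{ex-ante $\EF$}, the natural route is to compare expected valuations directly rather than passing through $\SD$-$\EF$ (which, as the preceding example shows, fails to imply ex-ante $\EF$ here). I would use that the expected value $\expectation{v_i(Y_i)}$ over the decomposition is the sum over $i$'s top goods of the probability that each such good lands in $Y_i$, \emph{weighted by whether it falls within the top $k_i$ of the realized bundle}. The subtlety is that multi-demand valuations are not additive, so $\expectation{v_i(Y_i)} \ne v_i(\XDSE_i)$ in general. The cleanest handle is the utility-guarantee structure: since every realized bundle has size within one of $w_i m = m/n$ and always consists of top-ranked goods, I would show that the distribution of $i$'s received goods, restricted to her top $k_i$ positions, stochastically dominates the corresponding distribution for $j$'s bundle as evaluated by $i$. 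Concretely, the fraction of $i$'s top $\ell$ goods that $i$ receives is at least the fraction that $j$ receives (this is the content of \Cref{eq:WSDproof} with equal weights), and because $v_i$ caps attention at the top $k_i$ goods, integrating this dominance against $i$'s (nonincreasing) marginal values over the relevant prefix yields $\expectation{v_i(Y_i)} \ge \expectation{v_i(Y_j)}$.

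The main obstacle I anticipate is precisely the gap between $\SD$-$\EF$ and ex-ante $\EF$ for non-additive valuations: the counterexample warns that pointwise stochastic dominance of received goods is \emph{not} enough once valuations are capped. The resolution must use that the $\UG$-decomposition is far more rigid than an arbitrary decomposition — each agent deterministically gets a near-equal number of her \emph{top} goods, not an arbitrary bundle realizing the same marginals. I would therefore need to leverage \Cref{lemma:decompositionImplication} to control not just the marginal probabilities but the joint structure (namely that $|Y_i \cap \Eaten(i,\ell/w_i)| \le \ell$ together with $|Y_i|$ being nearly constant), so that the truncation at $k_i$ can be pushed through the expectation. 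Establishing this refined dominance — that $i$'s realized top-$k_i$ goods weakly beat her view of $j$'s realized top-$k_i$ goods in expectation — is the crux, and it is exactly where the special eating-based decomposition, as opposed to a generic one, is indispensable.
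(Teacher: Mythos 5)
Your ex-post $\EF1$ half matches the paper's route: every $Y$ in the $\UG$-decomposition of $\XDSE$ is the outcome of an RB picking sequence, and the pairing argument (the $h$-th pick of $i$ is at least as valuable, to $i$, as the $(h{+}1)$-th pick of $j$) survives the top-$k$ truncation of multi-demand valuations. That part is essentially the paper's proof.

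The ex-ante $\EF$ half, however, has a genuine gap: you correctly locate the crux --- pushing the truncation at $k_i$ through the expectation using the rigidity of the $\UG$-decomposition --- but you never carry it out; your proposal explicitly defers it (``establishing this refined dominance \dots is the crux''). The paper closes exactly this gap with two concrete steps. First, on agent $i$'s own side, the slot structure implied by \Cref{lemma:decompositionImplication} (with $w_i=1/n$, the $h$-th most preferred good in any realized $Y_i$ is a good $i$ was eating during the time interval $((h-1)n,\,hn]$) yields the identity $\expectation{v_i(Y_i)}=\sum_{g} v_i(g)\,\overline{x}_{ig}$, where $\overline{x}_{ig}$ truncates $i$'s cumulative eaten mass at $k$: the top $k$ goods of any realized $Y_i$ are precisely goods $i$ ate before time $kn$, so the non-additive valuation becomes an additive expression in truncated marginals. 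Second, on $j$'s side no structural property of the decomposition is needed at all: writing $p_h$ for the probability that $g_h$ lies in $Y_j$ \emph{and} is among $i$'s top $k$ goods there, the constraints $p_h\le x_{jg_h}$ and $\sum_h p_h\le k$ alone give the greedy upper bound $\expectation{v_i(Y_j)}\le \sum_{h<h^*} x_{jg_h}v_i(g_h)+\bigl(k-\sum_{h<h^*} x_{jg_h}\bigr)v_i(g_{h^*})$; the prefix-sum dominance of \Cref{eq:WSDproof} (with equal weights) then converts this bound into $\sum_g v_i(g)\,\overline{x}_{ig}\le\expectation{v_i(Y_i)}$. Note this is subtly different from the ``refined dominance'' you propose: one does not need any distributional dominance statement about $j$'s realized bundles, only a marginal-based LP bound, and the joint structure of the decomposition is exploited only for $i$'s \emph{own} bundle. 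Without the first step --- the truncated-marginal identity --- your outline cannot be completed, since, as the counterexample preceding the theorem already shows, marginal (SD) dominance alone is insufficient once valuations are capped.
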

\begin{proof}
	As already observed, any allocation in the $\UG$-decomposition of $\XDSE$ is result of an RB picking sequence (in case of equal entitlements our approach coincides with the one of \cite{aziz20}).
	Hence, ex-post $\EF1$  follows by noticing that any RB picking sequence is $\EF1$ for multi-demand valuations.
	
	We now prove ex-ante $\EF$ of the $\UG$-decomposition of $\XDSE$.
	For convenience, we denote $\XDSE$ by $X$ and assume $i$ to be $k$-unit-demand. 
	Consider any agent $i\in \agents$. Given any $j\in \agents\setminus{\set{i}}$, we need to show that
	$ {v_i(X_i)}\geq  {v_i(X_j)}$. In what follows, we again sort goods $g_1, \dots, g_m$ in the ordering induced by agent $i$ over $\goods$.
	
	Let us first show that $ {v_i(X_i)}= \sum_{g\in \goods} v_i(g) \cdot \overline{x}_{ig}$, where
	\begin{align*}
		\overline{x}_{ig_\ell}=
		\begin{cases}
			{x}_{ig_\ell} &\text{ if } \sum_{h=1}^\ell x_{ig_h} \leq k\\
			k - \sum_{h=1}^{\ell -1} x_{ig_h} & \text{ if } \sum_{h=1}^\ell x_{ig_h} > k \\
			&\text{ and } \sum_{h=1}^{\ell -1} x_{ig_h} < k\\
			0  & \text{ otherwise.} 
		\end{cases}
	\end{align*}
	
	In other words, only the fraction of goods agent $i$ ate during the first $k$ units of time do count.
	Let us denote by $g_{h^*}$ the less valuable good for which $\overline{x}_{ig_{h^*}}>0$. Notice that at time $t=k$ of \DSE\ agent $i$ is eating or finishes eating good $g_{h^*}$.
	
	Recall that any agent needs $1/w_i=n$ units of time to eat $1$ unit of goods.
	The feasibility conditions of \Cref{thm:budishDecomp} ensure that in any allocation $Y$ in the $\UG$-decomposition, the $h$-most preferred good of agent $i$ in her bundle $Y_i$ is one good she ate in the time interval $(n(h-1), nh]$.  Further, only one good eaten in this interval will be the $h$-most preferred in the bundle $Y_i$. Since the valuation of $i$ is $k$-unit-demand, only the goods she ate by the time $kn$ will contribute to her utility. This implies $ {v_i(X_i)}= \sum_{g\in \goods} v_i(g) \cdot \overline{x}_{ig}$. 
	
	We now provide an upper bound on ${v_i(X_j)}$. Let us denote by $p_h$ the probability that $g_h$ is in the bundle $X_j$ and is one of the $k$ most preferred goods of agent $i$ among the ones in $X_j$. The probability $p_h$ is upper-bounded by the probability of having $g_h$ in the bundle of $j$, and therefore $p_h\leq x_{jg_h}$. Moreover,  $\sum_{h=1}^m p_h=\sum_{h=1}^m\sum_{\ell=1}^k p_h^\ell = \sum_{\ell=1}^k\sum_{h=1}^m p_h^\ell\leq k$, where $p_h^\ell$ is the probability that $g_h$ is the $\ell$-th most preferred good of $i$ in $j$'s bundle.
	Notice also that $\sum_{h=1}^{h^*-1} x_{jg_h}\leq k$ since, because of the \DSE, $\sum_{h=1}^{h^*-1} x_{jg_h}\leq \sum_{h=1}^{h^*-1} x_{ig_h}$, and $\sum_{h=1}^{h^*-1} x_{ig_h}\leq k$ by definition of $h^*$.
	Therefore, the expected utility of $i$ for $j$'s bundle is give by 
	\begin{align*}
		{v_i(X_j)} =
		 \sum_{h=1}^m p_h\cdot v_i(g_h) 
		\leq
		\sum_{h=1}^{h^*-1} x_{jg_h}\cdot v_i(g_h) + \left(k-  \sum_{h=1}^{h^*-1} x_{jg_h}\right)\cdot v_i(g_{h^*})  \ ,
	\end{align*}
where the inequality holds because of the aforementioned properties of $p_h$ and $x_{jg_h}$, and the fact that $g_1, \dots, g_m$ are sorted in a decreasing manner with respect to $i$'s valuations.
	Moreover, by stochastic dominance, 
	\begin{align*}
		\sum_{h=1}^{h^*-1} x_{jg_h}\cdot v_i(g_h) + \left(k-  \sum_{h=1}^{h^*-1} x_{jg_h}\right)\cdot v_i(g_{h^*})  
		&\leq   \sum_{h=1}^{h^*-1} x_{ig_h}\cdot v_i(g_h) + \left(k-  \sum_{h=1}^{h^*-1} x_{ig_h}\right)\cdot v_i(g_{h^*})
		\\
		&=   \sum_{g\in \goods} v_i(g) \cdot \overline{x}_{ig} \leq {v_i(X_i)} \ .
	\end{align*}
In conclusion, $v_i(X_j)\leq v_i(X_i)$	and the theorem follows.
\end{proof}

Notice that the proof of \Cref{thm:multi-demand} takes into account only an agent $i$ having multi-demand valuations for a given $k$, and not the valuations of the others. For this reason the theorem holds for multi-demand agents having different demands $k$. Moreover, when $k\geq m$ the agent is additive.  These simple observations lead to the following. 

\begin{corollary}
	For equal entitlements and any combination of additive and multi-demand valuations, the $\UG$-decomposition of $\XDSE$ is ex-ante $\EF$ and ex-post $\EF1$.
\end{corollary}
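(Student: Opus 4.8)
The corollary I want to prove states that for equal entitlements and any combination of additive and multi-demand valuations, the $\UG$-decomposition of $\XDSE$ is ex-ante $\EF$ and ex-post $\EF1$. The plan is to leverage the fact that \Cref{thm:multi-demand} was proved in a \emph{per-agent} fashion, and then treat additive agents as a degenerate case of multi-demand agents. Concretely, I would begin by noting the structural observation already highlighted after the proof of \Cref{thm:multi-demand}: the argument establishing $v_i(X_i) \geq v_i(X_j)$ for a fixed agent $i$ uses only that $i$ herself is $k$-unit-demand for some $k$, together with the stochastic-dominance relation between $i$'s and $j$'s fractional bundles (which holds for \emph{any} valuation profile of the other agents, since \Cref{prop:exanteWSD} gives $\SD$-$\EF$ from single-good rankings alone). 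The ex-ante envy-freeness inequality for $i$ never references the valuation class of $j$.

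For the ex-ante part, I would therefore fix an arbitrary agent $i$ and split into two cases. If $v_i$ is multi-demand with demand $k$, the exact argument in the proof of \Cref{thm:multi-demand} applies verbatim to show $v_i(X_i) \geq v_i(X_j)$ for every $j$. If $v_i$ is additive, I would invoke the remark already present in the excerpt: an additive valuation coincides with a multi-demand valuation for $k \geq m$, so it is subsumed by the same argument (with $h^\ast$ effectively covering all positively-eaten goods and the truncation $\overline{x}_{ig}$ reducing to $x_{ig}$). Since the inequality $v_i(X_i)\geq v_i(X_j)$ holds for each $i$ regardless of how the other agents are typed, summing over all ordered pairs yields ex-ante $\EF$ for the whole instance.

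For the ex-post part, the work is even lighter. Every allocation $Y$ in the $\UG$-decomposition of $\XDSE$ is the result of a recursively balanced (RB) picking sequence; this was established in the remark on the equal-entitlements case following \Cref{thm:expostWEF11} and holds independently of the valuation class, since both \DSE\ and the $\UG$ bihierarchy depend only on single-good rankings. It then suffices to observe that any RB picking sequence produces an $\EF1$ allocation for multi-demand valuations, which is exactly the fact used in the proof of \Cref{thm:multi-demand}; the same statement holds for additive valuations (the classical round-robin/RB $\EF1$ guarantee). As the $\EF1$ condition for agent $i$ again depends only on $i$'s own valuation, mixing additive and multi-demand agents in one instance causes no difficulty.

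I do not anticipate a substantial obstacle here, since the corollary is genuinely a bookkeeping consequence of the per-agent structure of \Cref{thm:multi-demand} and the RB-$\EF1$ fact. The one point requiring a sentence of care is justifying that the two separate guarantees (ex-ante for $i$ multi-demand, ex-ante for $i$ additive) can be combined across a heterogeneous population without any cross-agent assumption creeping in; this is precisely the remark that the proof of \Cref{thm:multi-demand} ``takes into account only an agent $i$ having multi-demand valuations for a given $k$, and not the valuations of the others.'' Making that independence explicit is the only thing that distinguishes the corollary from a literal restatement of the theorem.
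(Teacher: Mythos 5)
Your proposal is correct and takes essentially the same route as the paper: the paper derives this corollary from precisely the two observations you make, namely that the proof of \Cref{thm:multi-demand} is per-agent (the ex-ante inequality for agent $i$ and the RB-based $\EF1$ guarantee never use the valuation class of the other agents) and that an additive valuation is just a multi-demand valuation with $k\geq m$. Nothing further is needed.
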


Turning to more general cancelable valuations, we can show that RB picking sequences (see Appendix \ref{appendix:cancelable}) still provide an $\EF1$ allocation, and therefore the ex-post guarantee is maintained. Unfortunately, we were not able to prove that the lottery is ex-ante $\EF$ (only ex-ante $\SD$-$\EF$), and this remains an interesting open question.

\section{Conclusions and Future Work}
In this paper, we obtain best of both worlds results for fair division with entitlements. Our results for additive valuations paint a rather complete picture. We present a lottery that can be computed in strongly polynomial time and guarantees ex-ante $\WEF$ and ex-post $\WPROP1$ + $\WEF(1,1)$. This is tight in the sense that any stronger notion of $\WEF(x,y)$ is incompatible with ex-ante $\WEF$. We also present a lottery that is ex-ante $\WGF$ and ex-post $\WPROP1$ + $\WEF_1^1$. Again, ex-ante $\WGF$ is incompatible with stronger ex-post notions.

We also explore how some of our results can be extended to more general valuation functions. These insights represent an interesting first step, but many important open problems remain. As a prominent one, to the best of our knowledge, it is open for which classes of valuations functions ex-ante $\EF$ is always compatible with ex-post $\EF1$ in the unweighted setting. In addition, providing tight guarantees with entitlements and combinations of other fairness concepts (such as, e.g., variants of the Max-Min Share ($\mathsf{MMS}$)) is an interesting direction for future work.

\bibliographystyle{plain}
\bibliography{literature}
\appendix

\section{Preliminaries}\label{appendix:preliminaries}

\begin{customtheorem}[\Cref{thm:pickingWEFxy}]
Let $t_i, t_j$ be the number of picks of agents $i$, $j$, respectively, in a prefix of $\pi$. A picking sequence $\pi$ is $\WEF(x,y)$ if and only if for every prefix of $\pi$ and every pair of agents $i,j$, we have 
$	\frac{t_i + y}{w_i}\geq \frac{t_j-x}{w_j}.$
\end{customtheorem}
\begin{proof}
For completeness, we report every step of the proof of Theorem 3.2 in Appendix A.2 of the full version of \cite{chakrabortySHS22}. We point out that our adapted proof requires only a minor modification in inductive proof of Eq.~\eqref{eq:claim}. 

($\Rightarrow$) 
Assume that $\pi$ fulfills $\WEF(x,y)$.
Since the $\WEF(x,y)$ condition must be satisfied for every instance, we can choose a special one that forces the utility of every agent to equal her number of picks up to a certain point. Consider a prefix of $\pi$. Every agent values each item which has been picked so far with 1, and the remaining ones with 0. 
If $t_j=0$, the claim trivially holds. Otherwise, $\WEF(x,y)$ gives us existence of $g\in A_j$ such that 
\begin{align*}
\frac{v_i(A_i) + y\cdot v_i(g)}{w_i} 
\geq
\frac{v_i(A_j) - x\cdot v_i(g)}{w_j}.
\end{align*}
Plugging in $v_i(g)=1$ and $v_i(A_i)=t_i$ as well as $v_i(A_j)=t_j$ yields the claim.

($\Leftarrow$)
Consider any two agents $i,j$.
We show that the $\WEF(x,y)$ condition for agent $i$ towards agent $j$ is fulfilled after every pick of $j$.
Consider the $t_j$-th pick of agent $j$.
We divide the sequence of picks up to this point into \emph{phases}, where each phase $\ell\in\{1,\dots, t_j\}$ consists of the picks after agent $j$'s $(\ell-1)$-th pick up to (and including) the agent's $\ell$-th pick. 
We use the following notation:
\begin{itemize}
	\item $\tau_\ell := $ the number of times agent $i$ picks in phase $\ell$ (that is, between agent $j$'s $(\ell-1)$-th and $\ell$-th picks),
	\item 
	$\alpha_\ell := $ the total utility gained by agent $i$ in phase $\ell$,
	\item 
	$\beta_\ell := $ agent $i$'s utility for the item that agent $j$ picks at the end of phase $\ell$.
\end{itemize}

Let $\rho := w_i/w_j$. 
For any integer $s\in[t_j]$,
applying the condition in the theorem statement to the picking sequence up to and including phase $s$, we have
\begin{equation}
\label{eq:WWEF-picks}
y+\sum_{\ell=1}^s \tau_\ell ~\ge~ \rho(s-x).
\end{equation}
Every time agent $i$ picks, she picks an item with the highest value for her. 
In particular, in each phase $\ell$, she picks $\tau_{\ell}$ items each of which gives at least as high value to her as each item picked by agent $j$ after (and including) phase $\ell$. Hence for all phases $\ell \in\{1,\dots,t_j\}$,
\begin{equation}
\label{eq:WWEF-utils}
\alpha_\ell
\geq \tau_\ell \cdot \max_{\ell\le r \le t_j} \beta_r.
\end{equation}

To show the claim, we prove the following inequality for all $s\in[t_j]$:

\begin{align}
y\cdot \max_{1\le r\le t_j}\beta_r &+ \sum_{\ell=1}^s
\alpha_{\ell} \ge \rho\left(\sum_{\ell=1}^s\beta_\ell - x\beta_1 \right) \notag \\
&+ \left(y+\sum_{\ell=1}^s\tau_\ell - \rho(s-x)\right)\max_{s\le r\le t_j}\beta_r. \label{eq:claim}
\end{align}
We prove \eqref{eq:claim} by induction on $s$.
The base case $s=1$ can be obtained by setting $\ell=1$ in \eqref{eq:WWEF-utils} and adding the term $y\cdot \max_{1\le r\le t_j}\beta_r$ on both sides:
\begin{align*}
y\cdot \max_{1\le r\le t_j} &\beta_r + \alpha_1
\ge (y+\tau_1)\cdot  \max_{1\le r\le t_j}\beta_r \\
&\ge \rho (1-x)\beta_1 + (y+\tau_1-\rho (1-x))\cdot \max_{1\le r\le t_j}\beta_r
\end{align*}
For the inductive step, assume that \eqref{eq:claim} holds for some $s-1$.
Using the inductive hypothesis (i.h.), we have

\begin{align*}
y\cdot \max_{1\le r\le t_j}\beta_r + \sum_{\ell=1}^s \alpha_{\ell} & = y\cdot \max_{1\le r\le t_j}\beta_r + \sum_{\ell=1}^{s-1}\alpha_{\ell} + \alpha_{s} \\
&\overset{\text{(i.h.)}}{\ge} \rho\left(\sum_{\ell=1}^{s-1}\beta_\ell - x\beta_1\right) + \left(y+\sum_{\ell=1}^{s-1}\tau_\ell - \rho(s-1-x)\right) \cdot\max_{s-1\le r\le t_j}\beta_r + \alpha_{s} \\
&\overset{\eqref{eq:WWEF-picks}}{\ge} \rho\left(\sum_{\ell=1}^{s-1}\beta_\ell - x\beta_1\right) + \left(y+\sum_{\ell=1}^{s-1}\tau_\ell - \rho(s-1-x)\right) \cdot\max_{s\le r\le t_j}\beta_r  + \alpha_s \\
&\overset{\eqref{eq:WWEF-utils}}{\ge} \rho\left(\sum_{\ell=1}^{s-1}\beta_\ell - x\beta_1\right) + \left(y+\sum_{\ell=1}^{s-1}\tau_\ell - \rho(s-1-x)\right)  \cdot\max_{s\le r\le t_j}\beta_r  + \tau_s\max_{s\le r\le t_j}\beta_r \\
&= \rho\left(\sum_{\ell=1}^{s-1}\beta_\ell - x\beta_1\right) + \left(y+\sum_{\ell=1}^s\tau_\ell - \rho(s-1-x)\right) \cdot\max_{s\le r\le t_j}\beta_r \\
&= \rho\left(\sum_{\ell=1}^{s-1}\beta_\ell - x\beta_1\right) + \rho\max_{s\le r\le t_j}\beta_r + \left(y+\sum_{\ell=1}^s\tau_\ell - \rho(s-x)\right)
\cdot \max_{s\le r\le t_j}\beta_r \\
&\ge \rho\left(\sum_{\ell=1}^{s-1}\beta_\ell - x\beta_1\right) + \rho\beta_s  + \left(y+\sum_{\ell=1}^s\tau_\ell - \rho(s-x)\right)\cdot \max_{s\le r\le t_j}\beta_r \\
&= \rho\left(\sum_{\ell=1}^{s}\beta_\ell - x\beta_1\right) + \left(y+\sum_{\ell=1}^s\tau_\ell - \rho(s-x)\right) \cdot \max_{s\le r\le t_j}\beta_r.
\end{align*}
This completes the induction and shows \eqref{eq:claim}.

Using \eqref{eq:claim} with $s = t_j$ together with \eqref{eq:WWEF-picks} yields
\begin{align*}
y\cdot \max_{1\le r\le t_j}\beta_r + \sum_{\ell=1}^{t_j}
\alpha_{\ell}
~\ge~ \rho\left(\sum_{\ell=1}^{t_j}\beta_\ell - x\beta_1 \right).
\end{align*}

Now letting $A_i$ and $A_j$ be the bundles of agents $i$ and $j$ after agent $j$'s $t_j$-th pick, and $g$ be an item in $A_j$ for which agent $i$ has highest utility, we obtain from the last inequality that
\begin{align*}
y\cdot v_i(g) + v_i(A_i) 
&\ge
\frac{w_i}{w_j}\cdot (v_i(A_j) - x\beta_1) \\
&\ge
\frac{w_i}{w_j}\cdot (v_i(A_j) - x\cdot v_i(g)).
\end{align*}
Therefore the $\WEF(x,y)$ condition for agent $i$ towards agent $j$ is fulfilled, completing the proof.

%
%
%
%
\end{proof}

\section{Additive Valuations}

\subsection{Ex-ante $\WEF$ and Ex-post $\WEF(1,1)$+$\WPROP1$}

\begin{customtheorem}[\Cref{prop:WSEEFimpliesWEFexante}]
Given a fractional allocation $X$, if $X$ is ex-ante $\WSD$-$\EF$, then $X$ is ex-ante $\WEF$. 
\end{customtheorem}
\begin{proof}
	Let $X$ be a fractional allocation satisfying $\WSD$-$\EF$. Let $i$ and $j$ be two distinct agents.
	
	First notice that, by $\WSD$-$\EF$, $a= w_j\cdot\sum_{g\in \goods}x_{ig}\geq w_i\cdot\sum_{g\in \goods}x_{jg}= b$.
	
	Let us consider $g_1, \dots, g_m$ sorted according to $i$'s valuations. 
	We define $f,h: [0, a]\rightarrow \mathbb{R}^+$ in the following way: 
	\begin{align*}
		f(x) = \begin{cases}
			v_i(g_1) & \text{ if } x = 0\\
			v_i(g_k)& \text{ if } x\in \left( y_{k-1}, y_k \right]
		\end{cases}
	\end{align*}
	where $y_k= w_j \cdot  \sum_{\ell=1}^{k} x_{ig_\ell} $ , and 
	\begin{align*}
		h(x) = \begin{cases}
			v_i(g_1) & \text{ if } x = 0\\
			0 & \text{ if } x\in (b, a] \\
			v_i(g_k) & \text{ if } x  \in \left( z_{k-1}, z_k \right], 
		\end{cases}
	\end{align*}
	where,  $z_k= w_i \cdot\sum_{\ell=1}^{k}  x_{jg_\ell}$.
	
	Notice that the difference between $y_k$ and $z_k$  is that the summations concern fractions of $g_1, \dots, g_k$ in $X_i$ and $X_j$, respectively. Moreover, the summations are multiplied by $w_j$ and $w_i$, respectively.
	
	We next show that $\WSD$-$\EF$ implies $f(x)\geq h(x)$, for all $x\in [0,a]$.
	
	Notice that for $x\in [b, a]$, $f(x)>0$ and $h(x)=0$. Assume towards a contradiction that there exists $x\in [0,b]$ such that $f(x)< h(x)$. Then, there exist $\ell_1$ and $\ell_2$ such that 
	\[
	v_i(g_{\ell_1})= f(x) < h (x) = v_i(g_{\ell_2}) \, .
	\]
	By definition of $f$ and $h$, we have 
	\[
	x \in (y_{\ell_1-1}, y_{\ell_1}]\cap (z_{\ell_2-1}, z_{\ell_2}] \, .
	\]
	Hence, $y_{\ell_1-1}< z_{\ell_2}$.
	
	Moreover, since $ v_i(g_{\ell_1 -1}) \leq v_i(g_{\ell_1}) < v_i(g_{\ell_2})$ and goods sorted in a non-increasing ordering with respect to $i$'s valuations, 
	\[
	\set{g_1, \dots, g_{\ell_2}} \subseteq \set{g \ : \ v_i(g)\geq v_i(g_{\ell_2})}\subset\set{g_1, \dots, g_{\ell_1}} \, .
	\]
	These imply
	\begin{equation*}
		y_{\ell_1-1} = w_j \cdot  \sum_{\ell=1}^{\ell_1} x_{ig_\ell} \geq w_j \cdot\sum_{g \in \set{g \ : \ v_i(g)\geq v_i(g_{\ell_2})}}  x_{ig_\ell}
	\end{equation*}
	and 
	\begin{equation*}
		z_{\ell_2} = w_i \cdot \sum_{\ell=1}^{\ell_2} x_{jg_\ell} \leq w_i \cdot \sum_{g \in \set{g \ : \ v_i(g)\geq v_i(g_{\ell_2})}}  x_{jg_\ell} \ .
	\end{equation*}
	
	These two inequalities together with $y_{\ell_1-1}< z_{\ell_2}$, show
	$$ w_j \cdot \sum\limits_{g \in  \set{g \ : \ v_i(g)\geq v_i(g_{\ell_2})}}  x_{ig_\ell} < w_i \cdot\sum\limits_{g \in  \set{g \ : \ v_i(g)\geq v_i(g_{\ell_2})}}  x_{jg_\ell}$$
	-- a contradiction with $\WSD$-$\EF$.
	
	To conclude, we notice that 
	\[
	w_j\cdot \expectation{v_i(X_i)} = \int_0^a f(x) dx
	\]
	and
	\[ 
	w_i\cdot \expectation{v_i(X_j)} = \int_0^a h(x) dx \, .
	\]
	 Since $f(x)\geq h(x)$, for all $x\in [0,a]$, $w_j\cdot \expectation{v_i(X_i)}\geq w_i\cdot \expectation{v_i(X_j)}$ follows.
\end{proof}

\section{General Valuations and Equal Entitlements}
\subsection{Cancelable Valuations}\label{appendix:cancelable}
As mentioned above, for additive valuations and equal entitlements the $\UG$-decomposition of $\XDSE$ is ex-ante $\SD$-$\EF$ as well as ex-post $\EF1$. The latter results from the fact that every allocation in the support of the lottery emerges from an RB picking sequence. We now describe what implications this approach has on cancelable valuations functions. Given the cancelable valuation function  $v_i$, for each $i\in \agents$, we create a corresponding additive valuation function $\hat{v}_i$ with $\hat{v}_i(g)=v_i(g)$. Then we apply the result for additive valuations and equal entitlements. This implies the obtained lottery is ex-ante $\SD$-$\EF$, as the priorities over goods are the same for $v_i$ and $\hat{v}_i$.

Let us discuss ex-post properties. We show that any RB picking sequence for $\hat{v}$ yields an allocation that is ex-post $\EF1$ for $v$. To this aim, we need the following lemmas.

\begin{lemma}\label{lemma:cancelableProperty1}
	Let $v$ be a cancelable valuation function.
	Given any $S,T\subseteq \goods$ and any $R \subseteq \goods \setminus (S \cup T)$, if $v(S) \geq v(T)$, then 
	\[
	v(S \cup R) \ge v(T \cup R).
	\]
\end{lemma}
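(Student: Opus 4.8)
The plan is to prove the statement by induction on the cardinality of $R$, peeling off one element of $R$ at a time and applying the defining property of cancelable valuations in each step.

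First I would handle the base case $R = \emptyset$, which is immediate: the hypothesis $v(S) \geq v(T)$ is exactly the conclusion $v(S \cup \emptyset) \geq v(T \cup \emptyset)$. For the inductive step, suppose the claim holds for every subset of $\goods \setminus (S \cup T)$ of size at most $r$, and let $R \subseteq \goods \setminus (S \cup T)$ have $|R| = r+1$. Pick any $g \in R$ and write $R = R' \cup \{g\}$ with $R' = R \setminus \{g\}$ and $|R'| = r$. By the inductive hypothesis applied to $R'$, we have $v(S \cup R') \geq v(T \cup R')$.

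Now the key move is to upgrade this to $v(S \cup R' \cup \{g\}) \geq v(T \cup R' \cup \{g\})$ by invoking cancelability. Recall the definition states that $v(A \cup \{g\}) > v(B \cup \{g\}) \implies v(A) > v(B)$ for all $A, B$ with $g \notin A \cup B$. Taking the contrapositive, $v(A) \geq v(B) \implies v(A \cup \{g\}) \geq v(B \cup \{g\})$. I would apply this with $A = S \cup R'$ and $B = T \cup R'$; since $g \in R \subseteq \goods \setminus (S \cup T)$ and $g \notin R'$, we indeed have $g \notin A \cup B$, so the contrapositive form applies and yields $v(S \cup R) \geq v(T \cup R)$, completing the induction.

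The only subtle point — which I expect to be the main thing to verify carefully rather than a genuine obstacle — is the translation of the strict-inequality definition of cancelability into the non-strict form actually needed here. The contrapositive of ``$v(A \cup \{g\}) > v(B \cup \{g\}) \implies v(A) > v(B)$'' is ``$v(A) \leq v(B) \implies v(A \cup \{g\}) \leq v(B \cup \{g\})$''; swapping the roles of $A$ and $B$ gives the $\geq$ version used above. This step is purely logical and requires no structural assumption beyond what the definition provides, so the argument goes through cleanly.
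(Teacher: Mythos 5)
Your proof is correct and is essentially the paper's own argument: the paper writes $R=\{g_1,\ldots,g_k\}$ and peels elements off one at a time, applying the strict-inequality form of cancelability inside a proof by contradiction, which is exactly the contrapositive of the single-element step you perform when adding elements one at a time by induction. The two proofs are the same iteration of the cancelability property, just read in opposite directions.
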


\begin{proof}
	Let $R_i = \{g_1, \ldots, g_i\}$ and $R_k= R$. If $v(T \cup R_i) > v(S \cup R_i)$,
	then $v(T \cup R_{i-1}) > v(S \cup R_{i-1})$ by cancelability of $v$.
	Applying this result for all $i$ from $k$ down to $1$, we get
	that if $v(T \cup R) > v(S \cup R)$, then $v(T) > v(S)$ -- a contradiction.
\end{proof}

\begin{lemma}\label{lemma:cancelableProperty2}
	Let $v$ be a cancelable valuation function.
	Given any $S,T\subseteq \goods$ and any $g,g' \in \goods \setminus (S\cup T)$, if $v(S) \geq v(T)$ and $v(g)\geq v(g')$, then 
	\[
	v(S\cup\set{g}) \geq v(T\cup\set{g'}).
	\]
\end{lemma}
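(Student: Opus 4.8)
The plan is to derive the inequality by two successive applications of \Cref{lemma:cancelableProperty1}, chaining through the intermediate bundle $T \cup \set{g}$.

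First I would combine the hypothesis $v(S) \geq v(T)$ with \Cref{lemma:cancelableProperty1}, taking the added set to be $R = \set{g}$. Since $g \in \goods \setminus (S \cup T)$, the disjointness requirement $R \subseteq \goods \setminus (S \cup T)$ of that lemma is satisfied, so we obtain
\[
v(S \cup \set{g}) \geq v(T \cup \set{g}).
\]

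Next I would compare $T \cup \set{g}$ with $T \cup \set{g'}$. The idea is to view the singletons $\set{g}$ and $\set{g'}$ as the two bundles being compared and $T$ as a common addendum. From $v(g) \geq v(g')$, that is $v(\set{g}) \geq v(\set{g'})$, a second application of \Cref{lemma:cancelableProperty1} — now with the roles $S \mapsto \set{g}$, $T \mapsto \set{g'}$, and $R \mapsto T$ — gives
\[
v(T \cup \set{g}) \geq v(T \cup \set{g'}).
\]
The disjointness condition $T \subseteq \goods \setminus (\set{g} \cup \set{g'})$ needed here holds because the assumption $g, g' \in \goods \setminus (S \cup T)$ already guarantees $g, g' \notin T$.

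Chaining the two displayed inequalities yields $v(S \cup \set{g}) \geq v(T \cup \set{g}) \geq v(T \cup \set{g'})$, which is exactly the claim. The only point demanding attention — and the sole potential obstacle — is verifying the disjointness premises before each invocation of \Cref{lemma:cancelableProperty1}; both follow immediately from $g, g' \notin S \cup T$. The degenerate case $g = g'$ is harmless, since then the second step is a trivial equality and the conclusion reduces to the first inequality.
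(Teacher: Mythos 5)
Your proof is correct and follows exactly the paper's argument: both apply \Cref{lemma:cancelableProperty1} twice, chaining $v(S\cup\set{g}) \geq v(T\cup\set{g}) \geq v(T\cup\set{g'})$ through the intermediate bundle $T\cup\set{g}$. You merely spell out the role assignments and disjointness checks that the paper leaves implicit, which is fine.
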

\begin{proof}
	Applying \Cref{lemma:cancelableProperty1} twice, $v(S\cup\set{g}) \geq v(T\cup\set{g}) \geq  v(T\cup\set{g'})$ follows.
\end{proof}

Recall that an RB picking sequence takes as input an ordering of the agents and agents' priorities over goods.
Consequently, an RB picking sequence produces the very same outcome for $v$ and $\hat{v}$ (if ties are broken in the same manner).

\begin{proposition}
	Given a fair division instance with cancelable valuations, any RB picking sequence yields an allocation that is $\EF1$.
\end{proposition}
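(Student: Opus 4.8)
The plan is to fix two agents $i,j$ and restrict the picking sequence $\pi$ to the picks made by $i$ and $j$ only. Writing $a_1,\ldots,a_p$ for the goods $i$ picks (in order) and $b_1,\ldots,b_q$ for those $j$ picks, the RB condition $|t_i-t_j|\le 1$ on prefixes of $\pi$ transfers verbatim to this restricted subsequence, since only picks of $i$ and $j$ change $t_i$ and $t_j$. From this I extract two facts. First, the end-of-run balance gives $p\ge q-1$. Second, and more importantly, I claim that for every $k$ the good $b_{k+1}$ is still available at the moment $i$ makes her $k$-th pick, so that $v_i(a_k)\ge v_i(b_{k+1})$ because $i$ always grabs a most valued available good.

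I would prove this availability claim by contradiction: if $j$ made her $(k+1)$-th pick before $i$ made her $k$-th pick, then just after that pick we would have $t_j=k+1$ while $t_i\le k-1$, contradicting $|t_i-t_j|\le 1$. Hence the shifted pairing $v_i(a_k)\ge v_i(b_{k+1})$ holds for all $k=1,\ldots,q-1$. The point to stress is that one should \emph{not} pair $a_k$ with $b_k$: an RB sequence need not strictly alternate between $i$ and $j$ (the restricted string decomposes into blocks \texttt{ij} and \texttt{ji}), so in a \texttt{ji} block $j$ takes $b_k$ before $i$ can react and $v_i(a_k)\ge v_i(b_k)$ may fail. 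The shifted pairing, together with removing $j$'s very first good $b_1$, sidesteps this entirely.

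With the pairing in hand I would lift the singleton inequalities to bundles using cancelability. By \Cref{lemma:cancelableProperty2} and induction on $r$, from $v_i(a_k)\ge v_i(b_{k+1})$ one gets $v_i(\{a_1,\ldots,a_r\})\ge v_i(\{b_2,\ldots,b_{r+1}\})$ for every $r\le q-1$; the base case $r=0$ is $0\ge 0$, and in the step the sets $S=\{a_1,\ldots,a_{r-1}\}$, $T=\{b_2,\ldots,b_r\}$ and the goods $g=a_r$, $g'=b_{r+1}$ are pairwise disjoint because all goods are distinct. Taking $r=q-1$ yields $v_i(\{a_1,\ldots,a_{q-1}\})\ge v_i(A_j\setminus\{b_1\})$, and since $p\ge q-1$, monotonicity of $v_i$ gives $v_i(A_i)\ge v_i(\{a_1,\ldots,a_{q-1}\})\ge v_i(A_j\setminus\{b_1\})$. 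This is exactly $\EF1$ of $i$ towards $j$; as $i,j$ were arbitrary, the allocation is $\EF1$.

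The main obstacle is the observation that RB sequences do not force strict alternation, which rules out the clean round-robin pairing that works under the (false) assumption of strict alternation; establishing the shifted availability claim $v_i(a_k)\ge v_i(b_{k+1})$ directly from the balance condition is the crux, after which \Cref{lemma:cancelableProperty2} and monotonicity finish the argument mechanically.
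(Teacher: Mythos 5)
Your proof is correct and follows essentially the same route as the paper's: extract from the RB property a pairing in which each of $i$'s picks $i$-dominates one of $j$'s picks (your shifted pairing $v_i(a_k)\ge v_i(b_{k+1})$), then lift the singleton inequalities to bundles by recursively applying \Cref{lemma:cancelableProperty2} and finish with monotonicity. The only differences are cosmetic: you remove $j$'s first pick $b_1$ whereas the paper removes $i$'s most preferred good $g^*$ in $Y_j$ (which forces a small rerouting of the matching), and you make explicit the availability argument that the paper compresses into the assertion that such a matching exists ``because of the RB property.''
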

\begin{proof}
	Let $Y$ be an allocation obtained by an RB picking sequence. 
	
	Given any pair of agents $i,j$, the sizes of $Y_i$ and $Y_j$ differ by at most $1$. We next show that agent $i$ is $\EF1$ towards agent $j$: Without loss of generality, we can assume $\modulus{Y_j}= \modulus{Y_i}+1$. Let us denote by $g^*$ the most preferred good of $i$ in $Y_j$. Because of the RB property, there exists a matching $M=\set{(g,g') \vert g\in Y_i, g'\in Y_j\setminus\set{g^*}}$ such that $v_i(g)\geq v_i(g')$ for each $(g,g')\in M$. Recursively applying~\Cref{lemma:cancelableProperty2} over the pairs in $M$ we obtain $v_i(Y_i)\geq v_i( Y_j\setminus\set{g^*})$. Therefore, $i$ is $\EF1$ towards agent $j$; applying the same arguments on any pair of agents the thesis follows.
\end{proof}

Overall, our result for cancelable valuations is as follows. 

\begin{theorem}
	For equal entitlements and cancelable valuations, the $\UG$-decomposition of $\XDSE$ is ex-ante $\SD$-$\EF$ and ex-post $\EF1$.
\end{theorem}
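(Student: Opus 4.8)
The plan is to reduce the cancelable case to the additive machinery already in hand, exploiting that both \DSE\ and the $\UG$-bihierarchy depend only on each agent's ranking of single goods. First I would, for each agent $i$, introduce the additive surrogate $\hat v_i$ defined by $\hat v_i(g) = v_i(g)$ on singletons. Since $v_i$ and $\hat v_i$ induce the same preference order over $\goods$ (with ties broken by the same fixed ordering), running \DSE\ on the surrogate instance produces exactly the same fractional allocation $\XDSE$ and hence the same $\UG$-decomposition. Thus every object we analyze is literally unchanged when we pass from $v$ to $\hat v$; only the way utilities are aggregated over bundles differs.

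For the ex-ante guarantee, I would invoke that $\SD$-$\EF$ is determined solely by the relative ranking of single goods, as noted after the definition of $\dominates$. Proposition~\ref{prop:exanteWSD} already shows $\XDSE$ is $\WSD$-$\EF$, which for equal entitlements is exactly $\SD$-$\EF$; because the rankings under $v$ and $\hat v$ coincide, this property transfers verbatim to the cancelable instance, requiring no new computation.

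For the ex-post guarantee, I would use the structural fact established in the remark following Theorem~\ref{thm:expostWEF11}: for equal entitlements, every deterministic allocation $Y$ in the $\UG$-decomposition of $\XDSE$ is the outcome of a recursively balanced (RB) picking sequence, namely the one built from stopping times in the proof of Theorem~\ref{thm:expostWEF11}. Since that sequence is a function of rankings alone, it yields the same $Y$ under $v$ and $\hat v$. I would then apply the preceding Proposition, which shows that any RB picking sequence produces an $\EF1$ allocation for cancelable valuations; its proof rests on Lemmas~\ref{lemma:cancelableProperty1} and~\ref{lemma:cancelableProperty2}, which allow one to match each good of the smaller bundle $Y_i$ to a dominated good of the larger bundle $Y_j$ and conclude $v_i(Y_i) \ge v_i(Y_j \setminus \{g^*\})$. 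Combining these two facts delivers ex-post $\EF1$ for $v$.

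The main obstacle is not any single calculation but ensuring the two reductions are mutually compatible: I must verify that the RB picking-sequence representation of the $\UG$-decomposition, which was proved only for additive valuations, genuinely carries over to the cancelable surrogate, i.e., that the representing sequence depends on rankings alone and that ties are resolved consistently across $v$ and $\hat v$. Once this coherence is pinned down, both halves of the theorem follow immediately from results already proved.
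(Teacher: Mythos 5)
Your proposal is correct and follows essentially the same route as the paper: define the additive surrogate $\hat v_i$ agreeing with $v_i$ on singletons, note that \DSE, the $\UG$-decomposition, and $\SD$-$\EF$ depend only on rankings of single goods (giving ex-ante $\SD$-$\EF$), and then combine the RB picking-sequence representation of the decomposition with the cancelability lemmas to get ex-post $\EF1$. The compatibility concern you flag at the end is exactly the point the paper settles by observing that an RB picking sequence produces the very same outcome under $v$ and $\hat v$ when ties are broken identically.
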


It remains to explore whether ex-ante $\SD$-$\EF$ implies ex-ante $\EF$ for cancelable valuations.

\end{document}